\pdfoutput=1

\documentclass[11pt,a4paper]{article}
\usepackage[T1]{fontenc}
\usepackage[utf8]{inputenc}

\usepackage{amsmath,amssymb,amsthm}

\usepackage{graphicx}
\usepackage{mathtools}
\usepackage{geometry}
\usepackage{caption}
\usepackage{subcaption}
\usepackage{float}
\usepackage[colorlinks=true,linkcolor=blue,citecolor=blue,urlcolor=blue]{hyperref}  
\usepackage{bookmark}
\usepackage[capitalise,nameinlink,noabbrev]{cleveref}
\newcommand{\R}{\mathbb{R}}
\usepackage{microtype}
\usepackage{booktabs,tabularx}
\usepackage[most]{tcolorbox}
\usepackage{enumitem}

\newcommand{\Group}{\mathcal{G}}

\usepackage[numbers,sort&compress]{natbib}

\newcolumntype{Y}{>{\raggedright\arraybackslash}X} 

\DeclareMathOperator{\pr}{pr}

\newtheorem{assumption}{Assumption}
\newtheorem{theorem}{Theorem}
\newtheorem{proposition}{Proposition}
\newtheorem{definition}{Definition}
\newtheorem{lemma}{Lemma}
\newtheorem{remark}{Remark}
\newtheorem{corollary}{Corollary}

\newcommand{\ip}[2]{\langle #1,#2\rangle}

\newcommand{\transpose}{\mathsf{T}}

\geometry{margin=1in}

\title{Tangential Action Spaces: Geometry, Memory and Cost in Holonomic and Nonholonomic Agents}
\author{Marcel Blattner\\
\small Applied AI Research Lab\\
\small Lucerne University of Applied Sciences and Arts\\
\small \texttt{marcel.blattner@hslu.ch}}

\date{}
\begin{document}
\maketitle
\tableofcontents
\newpage

\begin{abstract}
Living systems often negotiate a trade-off between energetic efficiency and the ability to retain path‑dependent effects. We introduce Tangential Action Spaces (TAS), a geometric formalism that represents embodied agents as hierarchies of manifolds linked by projection maps from physical states to cognitive representations and onward to intentional goals. Lifting intentions back to actions can proceed along multiple routes, which generally differ in energy cost and in whether they leave memory-like traces.

Under the stated assumptions, we establish three results. (i) When the physical‑to‑cognitive projection is locally invertible, there is a unique lift that minimises instantaneous energy and produces no path‑dependent memory; any lift that induces memory entails strictly positive excess energy. (ii) When multiple physical states map to the same cognitive state (fibration), the energy‑minimising lift is the weighted pseudoinverse determined by the physical metric. (iii) In systems that accumulate holonomy, excess energy grows quadratically with the size of the induced memory (relative to the metric‑lift baseline) for sufficiently small loops, providing a local cost–memory relationship.

These statements motivate an organising classification of embodied systems by how path dependence can arise (intrinsically conservative, conditionally conservative, geometrically nonconservative, dynamically nonconservative). We illustrate the framework with numerical examples representative of each case, finding behaviour consistent with the analysis. We further consider a reflective extension (rTAS) in which the perceptual map depends on a learnable model state; a block metric formalises an effort–learning trade‑off, and cross‑curvature terms couple physical and model holonomy. Simulations of single‑ and two‑agent settings exhibit behaviours such as role asymmetries and sensitivity to coupling.

Overall, TAS offers a geometric language for relating embodiment, memory and energetic cost. The framework suggests testable predictions and design considerations for biological and robotic systems, while clarifying the conditions under which path dependence is expected and what it may cost.
\end{abstract}

\textbf{Keywords:} geometric mechanics, embodied cognition, fibre bundles, holonomy, path dependence, energetic cost, motor control, self-modification

\section{Introduction}\label{sec:intro}

Formalizing the coupling between physical embodiment and cognitive processes remains central to understanding life‑like agency \cite{varela1991embodied}. While various approaches have tackled this challenge \cite{varela1991embodied,beer1995dynamical,friston2010free}, a unified mathematical framework that captures both the geometric and energetic aspects of embodied cognition has remained elusive. Tangential Action Spaces (TAS) address this gap through differential geometry, modeling agents as hierarchical smooth manifolds connected by projection maps.

The critical operation in TAS is lifting cognitive changes to physical actions, and it depends fundamentally on the geometric structure of these projections. This paper establishes that the rank properties of the physical $(P)$ to cognitive $(C)$ projection $\Phi$ dictate both the mathematical framework for lift operations and the mechanisms by which path‑dependent memory emerges. We reveal that systems naturally divide into two classes: those with locally bijective projections (diffeomorphisms) that require prescribed dynamics for path dependence, and those with genuine fibre structures (fibrations) that support intrinsic geometric holonomy through connection curvature.
See \Cref{fig:holonomy} for a visual overview of how closed loops in $C$ can lift to open paths in $P$ (the holonomy gap).

Perhaps most significantly, we demonstrate that path‑dependent memory incurs an energetic cost, revealing a fundamental trade‑off between behavioral efficiency and memory capacity. This cost–memory duality provides a principled explanation for the diversity of embodied strategies observed in both biological and artificial systems.

\paragraph{Biological Grounding.}
In biological systems, the physical manifold $P$ encompasses not merely spatial coordinates but the full repertoire of bodily states, muscle activation patterns, proprioceptive configurations, metabolic conditions and neural dynamics. The projection $\Phi: P \to C$ implements a fundamental coarse‑graining operation, where myriad physical configurations (different muscle tensions, joint angles or neural firing patterns) map to the same cognitive representation. This dimension reduction reflects how organisms extract task‑relevant information from their high‑dimensional physical states. Similarly, the projection $\Psi: C \to I$ further abstracts cognitive states into intentional goals, where multiple ways of thinking about a task collapse into a single objective. The lift operations then implement the inverse process: transforming abstract intentions back through cognitive plans into concrete physical actions, necessarily choosing specific instantiations from the many possibilities.

\medskip

\subsection*{Notation and Standing Conventions}
\addcontentsline{toc}{subsection}{Notation and Standing Conventions}

\begin{table}[H]
  \centering
  \caption{Notation used throughout the manuscript.}
  \label{tab:notation}
  \footnotesize
  \begin{tabularx}{\textwidth}{@{} l l l Y @{}}
    \toprule
    \textbf{Symbol} & \textbf{Type} & \textbf{Where} & \textbf{Meaning / Role} \\
    \midrule
    $(P,G)$ & Riem. manifold, $\dim P=m$ & \Cref{def:TAS} & Physical manifold with metric $G$. \\
    $C$ & Manifold, $\dim C=n$ & \Cref{def:TAS} & Cognitive manifold. \\
    $I$ & Manifold, $\dim I=k$ & \Cref{def:TAS} & Intentional manifold. \\
    $\Phi:P\!\to\!C$ & Surjective submersion & \Cref{def:TAS} & Perception / coarse–graining map. \\
    $\Psi:C\!\to\!I$ & Surjective submersion & \Cref{def:TAS} & Cognitive–intentional map. \\
    $D\Phi_p$ & $T_pP\!\to\!T_{\Phi(p)}C$ & \Cref{def:lift} & Differential of $\Phi$ at $p$. \\
    $V_pP=\ker D\Phi_p$ & Subspace of $T_pP$ & \Cref{def:ehresmann} & Vertical bundle (fibre directions). \\
    $H_p$ & Subspace of $T_pP$ & \Cref{def:ehresmann} & Horizontal subspace (chosen connection). \\
    $\mathrm{pr}^H_p,\mathrm{pr}^V_p$ & Projections & §\ref{subsubsec:lift-fibration} & $G$‑orthogonal projections to $H_p$, $V_pP$. \\
    $\mathcal{L}$ & Lift operation & \Cref{def:lift} & $\mathcal{L}:\Phi^{*}TC\!\to\!TP$ with $D\Phi\!\circ\!L=\mathrm{id}$. \\
    $\mathcal{L}_{\text{geom}}$ & $D\Phi^{-1}$ & \Cref{prop:unique_lift} & Unique lift in diffeomorphic case. \\
    $\mathcal{L}_{\text{metric}}$ & $G^{-1}\!D\Phi^\top(D\Phi G^{-1}D\Phi^\top)^{-1}$ & \Cref{prop:metriclift} & Energy-minimising lift (fibration). \\
    $\omega$ & Connection 1‑form & §\ref{subsec:geom-hol}, \Cref{def:curvhol} & Encodes horizontality; Abelian case $F=d\omega$. \\
    $F$ & Curvature 2‑form & §\ref{subsec:geom-hol}, \Cref{def:curvhol} & $F=d\omega+\omega\!\wedge\!\omega$; Abelian: $F=d\omega$. \\
    $\mathcal{E}$ & Scalar & §\ref{sec:cost} & Travel cost $\int\!\|\dot u\|_G^2\,dt$. \\
    $g_C$ & Riem. metric on $C$ & §\ref{sec:intentional} & Used for gradients on $C$. \\
    $g_I$ & Riem. metric on $I$ & §\ref{sec:intentional} & Used for gradients on $I$. \\

    \bottomrule
  \end{tabularx}
\end{table}
\begin{assumption}[Standing Assumptions]\label{ass:standing}
Throughout this paper, we assume:
\begin{enumerate}[label=\textup{(A\arabic*)}]
  \item fibre bundles have Abelian structure groups.
  \item All structures are time-invariant.
\item Effort is measured by $\mathcal{E} = \int \|\dot{u}\|_G^2 \, dt$.
\end{enumerate}
\end{assumption}

\subsection{Tangential Action Spaces (TAS)}\label{subsec:TAS}

\begin{definition}[Tangential Action Space]\label{def:TAS}
A \emph{Tangential Action Space} (TAS) consists of a triple of smooth manifolds
\[
(P,C,I)
\]
together with \emph{surjective submersions of constant rank}
\[
\Phi:P\longrightarrow C, 
\qquad 
\Psi:C\longrightarrow I,
\]
such that:

\begin{itemize}
  \item $(P,G)$ is an $m$‑dimensional Riemannian manifold (the \textbf{physical manifold}) equipped with metric $G$.
  \item $C$ is an $n$‑dimensional manifold (the \textbf{cognitive manifold}).
  \item $I$ is a $k$‑dimensional manifold (the \textbf{intentional manifold}).
\end{itemize}
All manifolds are assumed Hausdorff, second countable and $C^\infty$.
The dimensions satisfy $m\!\ge\! n\!\ge\! k$.  

We impose the following standing hypothesis, needed for all later constructions:
\emph{$\Phi$ (and analogously $\Psi$) admits local smooth trivialisations,
i.e.\ $(P,C,\Phi)$ is a smooth fibre bundle with typical fibre $F$
of dimension $m-n$, and $(C,I,\Psi)$ is a smooth fibre bundle with
typical fibre of dimension $n-k$.}

\medskip
\noindent
\textbf{Practical factorisation.}  
In many examples $P$ splits as a direct product
\[
P \;=\; \overline{P} \times H,
\]
where $\overline{P}$ contains the \emph{visible} body–environment coordinates and
$H$ (often low‑dimensional) represents internal or actuator states that do
\emph{not} influence perception.  The projection then separates as
\[
\Phi(\bar p,h)=\phi(\bar p), 
\qquad 
\ker D\Phi = T_h H ,
\]
so that $\phi:\overline{P}\!\to\!C$ is a local diffeomorphism whenever $m=n$,
while $H$ realises the vertical bundle for $m>n$.

\end{definition}

\begin{figure}[h]
\centering
\includegraphics[width=\textwidth]{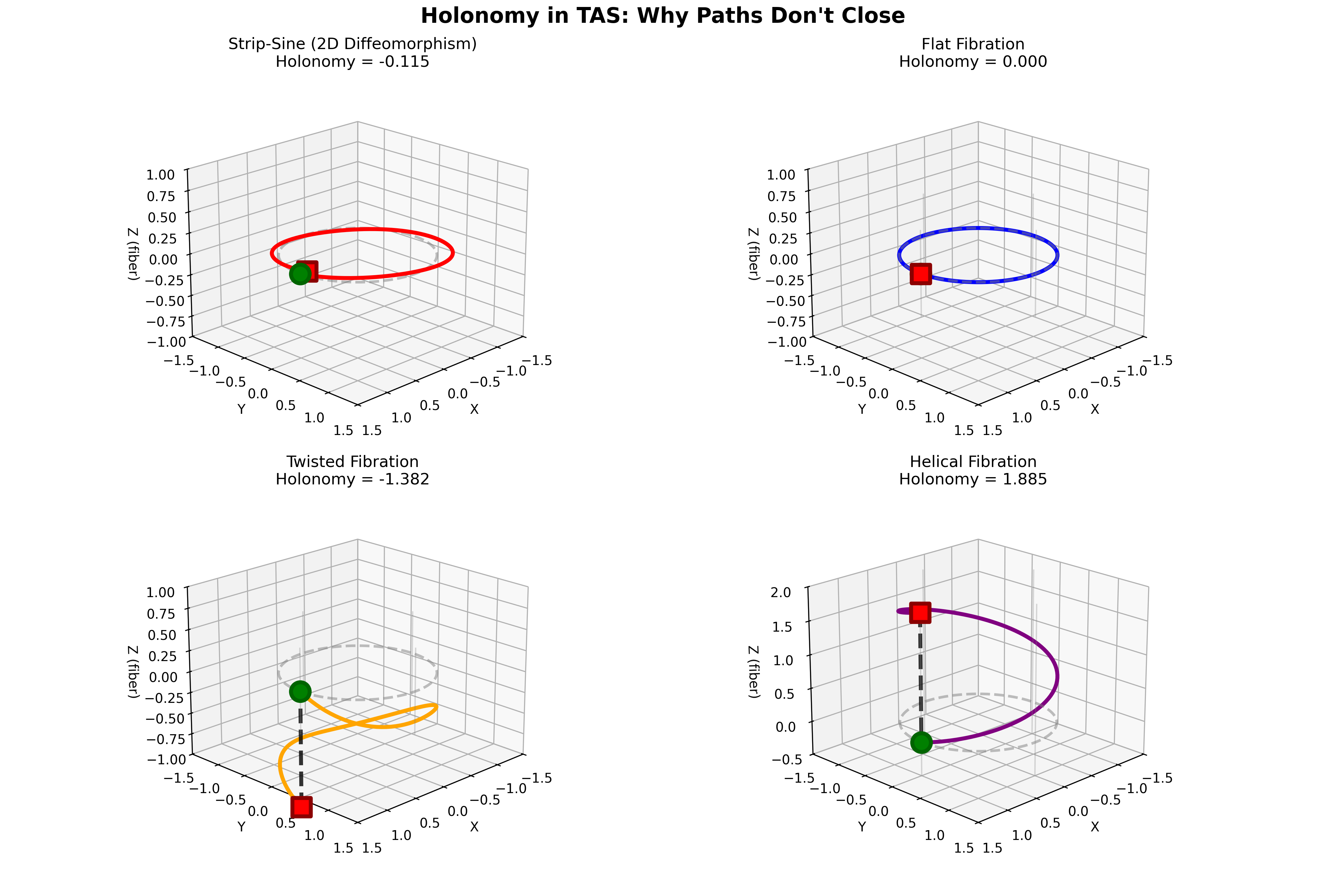}
\caption{\textbf{Holonomy in TAS: why paths do not close.} When a closed loop in cognitive space $C$ is lifted to physical space $P$, the resulting path may not close. This gap, called holonomy, encodes path-dependent memory. Green circles mark start points, red squares mark end points, and dashed lines show the holonomy. The four panels illustrate different types of systems: (Top left) A system where visible coordinates fail to close despite following the same cognitive loop. (Top right) A system with perfect closure, showing zero holonomy. (Bottom left) A system where the projection appears to close but hidden coordinates accumulate displacement. (Bottom right) A system with systematic drift, creating a helical trajectory. These examples preview how different geometric structures lead to different memory effects, which we analyse in detail in \Cref{sec:examples}.}
\label{fig:holonomy}
\end{figure}

\subsection{Biological Interpretation and Motor Control}

The projections $\Phi:P\!\to\!C$ and $\Psi:C\!\to\!I$ implement successive layers of abstraction fundamental to embodied cognition. Consider reaching for an object:

\begin{itemize}
\item \textbf{Physical space $P$:} the complete state including all muscle activations, joint angles, tendon tensions and proprioceptive signals.

\item \textbf{Cognitive space $C$:} task-relevant variables such as hand position, grip aperture and velocity, but also internal representations like predicted object properties, learned motor schemas, and world models that guide action selection.

\item \textbf{Intentional space $I$:} the goal state, such as ``grasp the cup.''
\end{itemize}

The projection $\Phi$ performs dimensional reduction, solving the degrees-of-freedom problem \cite{bernstein1967coordination}. Multiple physical configurations (different muscle activation patterns, joint trajectories) map to the same cognitive state, reflecting how biological systems extract task-relevant features while discarding irrelevant physical details. The projection $\Psi$ provides further abstraction where diverse cognitive strategies collapse into unified goals.

The lift operations reverse this abstraction hierarchy. Given a desired cognitive change (move hand to cup), the lift determines which specific muscle activations to choose. The geometric structure of the lift determines both the energetic cost and whether the movement exhibits history-dependent effects (motor memory).

\subsection{Lift Operations and Their Geometric Nature}\label{subsec:lifts}

\begin{definition}[Lift Operation]\label{def:lift}
Let $\Phi : P \to C$ be the physical to cognitive projection and denote by  
\[
\pi : \Phi^{*}TC \;\longrightarrow\; P , 
\qquad 
\Phi^{*}TC \;=\; \bigl\{\, (p,\Delta c) \mid p\in P,\; \Delta c \in T_{\Phi(p)}C \bigr\}
\]
the pull‑back tangent bundle, whose fibre over $p\in P$ is $T_{\Phi(p)}C$.
A \emph{lift operation} is a smooth bundle morphism
\[
\mathcal{L} : \Phi^{*}TC \;\longrightarrow\; TP ,
\qquad 
(p,\Delta c) \;\longmapsto\; \mathcal{L}_{p}(\Delta c)
\]
that
\begin{enumerate}
    \item \textbf{covers the identity on $P$}:  
          $\tau_{P}\!\circ\!\mathcal{L} = \pi$,  
          where $\tau_{P}:TP\to P$ is the tangent‑bundle projection; and
    \item \textbf{satisfies the projection constraint}: for every $(p,\Delta c)\in\Phi^{*}TC$,
          \[
          D\Phi_{p}\!\bigl(\mathcal{L}_{p}(\Delta c)\bigr) \;=\; \Delta c .
          \]
\end{enumerate}
Here $D\Phi_{p}:T_{p}P\to T_{\Phi(p)}C$ is the differential of $\Phi$ at $p$.
\end{definition}

\subsubsection{Diffeomorphisms: the unique geometric lift}\label{subsubsec:lift-diffeo}

When $\Phi$ is a local diffeomorphism ($m = n$), the Jacobian $D\Phi$ is invertible. This leads to a unique solution for the lift that satisfies the projection constraint:

\begin{proposition}[Unique geometric lift]\label{prop:unique_lift}
If $\Phi: P \to C$ is a local diffeomorphism, then there exists a unique lift operation $\mathcal{L}_{\text{geom}}$ given by:
\begin{equation}
\mathcal{L}_{\text{geom}}(\Delta c) = \Delta u^{\text{geom}} = D\Phi^{-1} \,\Delta c .
\end{equation}
This lift satisfies:
\begin{enumerate}
    \item It is the unique solution to the projection constraint.
    \item It preserves the norm induced by the pull‑back metric in the sense that $\|\Delta u^{\text{geom}}\|_G = \|\Delta c\|_{(D\Phi)^{*} G}$ where $(D\Phi)^{*} G$ is the pull‑back metric on $C$.
    \item It produces no holonomy for any closed loop in $C$.
\end{enumerate}
\end{proposition}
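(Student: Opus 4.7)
The proof naturally splits into the three claims, and each should be short because the hypothesis $m=n$ together with local invertibility of $D\Phi$ does almost all of the work.

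For uniqueness, I would fix $p\in P$ and observe that since $\Phi$ is a local diffeomorphism, $D\Phi_p : T_pP \to T_{\Phi(p)}C$ is a linear isomorphism between spaces of equal dimension. The projection constraint $D\Phi_p(\mathcal{L}_p(\Delta c)) = \Delta c$ then has the unique linear solution $\mathcal{L}_p(\Delta c) = (D\Phi_p)^{-1}\Delta c$. Smoothness of the resulting bundle morphism $\mathcal{L}_{\text{geom}}$ follows from smoothness of the inverse Jacobian (inverse function theorem), and the requirement $\tau_P \circ \mathcal{L} = \pi$ is automatic since $(D\Phi_p)^{-1}\Delta c \in T_pP$.

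For the norm identity, I would simply unfold the definition of the pull-back metric. Writing $(D\Phi^{-1})^{*}G$ for the bilinear form on $T_{\Phi(p)}C$ given by $((D\Phi^{-1})^{*}G)(v,w) = G(D\Phi^{-1}v,\, D\Phi^{-1}w)$, the identity $\|\Delta u^{\text{geom}}\|_G^2 = G(D\Phi^{-1}\Delta c, D\Phi^{-1}\Delta c) = \|\Delta c\|_{(D\Phi)^{*}G}^2$ is immediate. This part is essentially book-keeping and just records that the lift is an isometry between $(T_{\Phi(p)}C, (D\Phi^{-1})^{*}G)$ and $(T_pP, G)$.

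For the absence of holonomy, I would argue as follows. Given any smooth path $c:[0,1]\to C$ with $c(0)=c(1)$ that lies in a trivialising neighbourhood, the lifted curve $u(t)$ is the solution of $\dot u(t) = (D\Phi_{u(t)})^{-1}\dot c(t)$ with initial condition $u(0)$. Because $\Phi$ is a local diffeomorphism, $u(t) = \Phi^{-1}(c(t))$ (on the chart containing $u(0)$), so $u(1) = \Phi^{-1}(c(1)) = \Phi^{-1}(c(0)) = u(0)$, and the loop closes. Geometrically, the hypothesis $m=n$ forces $V_pP = \ker D\Phi_p = \{0\}$, so the horizontal distribution is the full tangent bundle; any curvature 2-form measuring failure of horizontal lifts to close must vanish, i.e.\ the connection is flat. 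The only subtle point, and the one I would flag explicitly, is the global-versus-local distinction: for a \emph{local} (non-global) diffeomorphism such as a non-trivial covering, loops in $C$ can still produce monodromy in $P$. The statement should therefore be understood either pointwise-locally, as stated by the use of a single inverse branch, or under the additional hypothesis that $\Phi$ is a (global) diffeomorphism. This is the only step I expect to require careful wording; the linear-algebra content of the proposition itself is essentially free once invertibility of $D\Phi$ is in hand.
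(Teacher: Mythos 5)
Your proposal is correct and matches the paper's treatment: the paper leaves the linear-algebra parts (uniqueness and the norm identity) as immediate consequences of the invertibility of $D\Phi$, and proves the no-holonomy claim exactly as you do, via $u(t)=\Phi^{-1}(c(t))$ together with uniqueness of solutions of $\dot u = D\Phi(u)^{-1}\dot c$ along a single branch of the inverse. Your explicit flag of the local-versus-global (covering/monodromy) subtlety is the same caveat the paper handles by restricting the accompanying corollary to a simply connected neighbourhood containing the lifted trajectory.
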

\begin{corollary}[No holonomy under (local) diffeomorphisms]
If $\Phi:P\to C$ is a (local) diffeomorphism on a simply connected neighborhood containing
the lifted trajectory, then the geometric lift of any closed loop in $C$ is a closed loop in $P$.
\end{corollary}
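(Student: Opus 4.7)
The plan is to exhibit the geometric lift of a closed loop literally as the composition of the loop with $\Phi^{-1}$, and then to invoke simple connectedness to ensure that $\Phi^{-1}$ extends single-valuedly along the entire loop. Fix a closed loop $\gamma:[0,1]\to C$ with $\gamma(0)=\gamma(1)=c_0$, pick a basepoint $p_0\in\Phi^{-1}(c_0)$, and let $\tilde\gamma$ denote its geometric lift starting at $p_0$. By \Cref{prop:unique_lift}, $\tilde\gamma$ is the unique solution of the ODE $\dot{\tilde\gamma}(t)=D\Phi^{-1}_{\tilde\gamma(t)}\,\dot\gamma(t)$ with $\tilde\gamma(0)=p_0$.

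Next I would apply the inverse function theorem at $p_0$ to obtain a local smooth inverse $\Phi^{-1}_V$ on a neighbourhood $V\ni p_0$. Setting $\eta(t):=\Phi^{-1}_V(\gamma(t))$ and differentiating via the chain rule, $\eta$ satisfies the same ODE as $\tilde\gamma$ with the same initial condition, so by ODE uniqueness $\tilde\gamma=\eta$ on a small interval around $0$. I would then cover the compact image $\gamma([0,1])$ by finitely many such inverse neighbourhoods and glue successive local inverses on overlaps, extending the identity $\tilde\gamma=\Phi^{-1}\circ\gamma$ to all of $[0,1]$. Once this identity holds globally along the loop, the conclusion is immediate: $\tilde\gamma(1)=\Phi^{-1}(\gamma(1))=\Phi^{-1}(\gamma(0))=p_0=\tilde\gamma(0)$.

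The substantive step is the gluing, and this is precisely where simple connectedness enters. Two overlapping local inverses of $\Phi$ need not agree \emph{a priori}: a local diffeomorphism can support several disjoint branches, as in $z\mapsto z^n$ on $\mathbb{C}\setminus\{0\}$, where traversing a loop around the origin exchanges branches. The monodromy theorem rules out exactly this ambiguity on a simply connected domain, so in the hypothesised neighbourhood the branch selected at $t=0$ persists continuously to $t=1$ without jump. This is the only real obstacle in the proof; the ODE identification and the final evaluation are formal.

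Conceptually, the argument shows that in the diffeomorphic case the \emph{only} way a closed loop in $C$ could fail to lift to a closed loop in $P$ is through a topological obstruction (nontrivial $\pi_1$ producing covering-space monodromy), not through any geometric mechanism. This cleanly isolates the phenomenon the later sections study: in genuine fibrations, holonomy arises instead from curvature of a connection on the vertical bundle, a mechanism unavailable when $D\Phi$ is invertible.
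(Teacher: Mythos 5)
Your proposal is correct and follows essentially the same route as the paper's own (very terse) proof: identify the geometric lift with $\Phi^{-1}\circ\gamma$ via uniqueness of solutions to $\dot u = D\Phi(u)^{-1}\dot c$, use simple connectedness to keep a single branch of $\Phi^{-1}$ along the loop, and conclude closure from $\gamma(1)=\gamma(0)$. You simply spell out the branch-gluing/monodromy step that the paper leaves implicit, which is a welcome elaboration rather than a different argument.
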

\begin{proof}
Same argument as above: $u(t)=\Phi^{-1}(c(t))$ closes, and in the local case use a single
branch of $\Phi^{-1}$ and uniqueness of solutions to $\dot u=D\Phi(u)^{-1}\dot c$.
\end{proof}

This is the geometric lift. It is unique and corresponds to the most energy‑efficient path among diffeomorphic lifts. Any closed loop in cognitive space results in a closed loop in physical space, so there is no intrinsic geometric memory.

\begin{definition}[Prescribed dynamics]
Let \(P=\overline{P}\times H\) as introduced in \cref{def:TAS}, where
\(\ker D\Phi = T_h H\) and the reduced map
\(\phi:\overline{P}\to C\) is a local diffeomorphism.

A \emph{prescribed dynamics} is a smooth vector field
\[
   X : P \times TC \;\longrightarrow\; TP ,
   \qquad (\bar p,h,\dot c)\mapsto X(\bar p,h,\dot c),
\]
that satisfies the projection condition
\[
   D\Phi\bigl(X(\bar p,h,\dot c)\bigr)=\dot c ,
\]
but whose component along the hidden fibre is not identically
zero, i.e.
\[
   \pr^V_{(\bar p,h)} X(\bar p,h,\dot c)\neq 0
   \quad\text{for some }(\bar p,h,\dot c).
\]

Consequently, the visible part
\(\operatorname{pr}_{T_{\bar p}\overline{P}} X\) coincides with the unique
geometric lift, while the vertical \(H\)‑component is free to accumulate
holonomy.
\end{definition}

The physical evolution is governed by the differential equation
\[
\dot p(t)=X\bigl(p(t),\dot c(t)\bigr), \qquad
\text{with } p=(\bar p,h)\in P,
\]
equivalently,
\[
\dot{\bar p}(t)=\mathcal L_{\mathrm{geom}}\bigl(\dot c(t)\bigr),\qquad
\dot h(t)=\pr^V_{(\bar p,h)} X\bigl(\bar p(t),h(t),\dot c(t)\bigr).
\]
To achieve path dependence in such systems, one must impose prescribed dynamics: rules of the form $\dot{u} = f(u, c, \dot{c})$ that determine the physical trajectory. To create memory, these dynamics must deliberately deviate from the unique geometric lift. This deviation is the source of both holonomy and an associated energetic cost.

\subsubsection{Fibrations: a space of lifts}\label{subsubsec:lift-fibration}

For fibrations ($m > n$), the equation $D\Phi(\Delta u) = \Delta c$ is underdetermined. The solution space for $\Delta u$ is an affine subspace of dimension $m - n$. Selecting a specific lift from this space requires additional structure, which is naturally provided by an Ehresmann connection.

\begin{definition}[Ehresmann connection]\label{def:ehresmann}
An Ehresmann connection \cite{ehresmann1951connexions,kobayashi1996foundations} on the fibration $\Phi: P \to C$ is a smooth distribution of horizontal subspaces $H_p \subset T_pP$ such that:
\begin{enumerate}
    \item $T_pP = H_p \oplus V_pP$ where $V_pP = \ker D\Phi_p$,
    \item $H_p$ varies smoothly with $p$,
    \item $D\Phi|_{H_p}: H_p \to T_{\Phi(p)}C$ is an isomorphism.
\end{enumerate}
\end{definition}

\paragraph{Vertical/Horizontal projections.}\label{par:vhproj}
At each $p\in P$, let $V_pP:=\ker D\Phi_p$ and let $H_p$ be the chosen horizontal space.
With respect to the metric $G$, write the $G$-orthogonal decomposition
$T_pP = H_p \oplus V_pP$ and denote the associated projections by
$\pr^H_p:T_pP\to H_p$ and $\pr^V_p:T_pP\to V_pP$.
In the special product case $P=\overline{P}\times H$ with $\Phi(\bar p,h)=\phi(\bar p)$,
we have $V_{(\bar p,h)}P \cong T_hH$ and $\pr^V_{(\bar p,h)}$ restricts to
$\pr^V_{(\bar p,h)}$ used below.

A canonical choice of connection is the metric connection, where the horizontal subspace is the $G$‑orthogonal complement of the vertical subspace. The corresponding lift is the metric lift.

\begin{proposition}[Metric lift]\label{prop:metriclift}
For a fibration $\Phi: P \to C$ with Riemannian metric $G$ on $P$, the metric lift is
\begin{equation}
\mathcal{L}_{\text{metric}}(\Delta c) 
= \Delta u^{\text{metric}} 
= G^{-1}D\Phi^\transpose \bigl[D\Phi\, G^{-1}D\Phi^\transpose\bigr]^{-1} \Delta c .
\end{equation}
Among all $\Delta u$ satisfying $D\Phi(\Delta u)=\Delta c$, $\Delta u^{\text{metric}}$ uniquely minimises $\|\Delta u\|_{G}$.
\end{proposition}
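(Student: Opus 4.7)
The plan is to recognize this as a constrained convex quadratic minimisation problem in each fibre $T_pP$, with affine constraint set $\{\Delta u \in T_pP : D\Phi_p(\Delta u) = \Delta c\}$. Under this lens the stated formula is the classical weighted pseudoinverse solution, and the proof reduces to (a) verifying well-posedness of the formula, (b) checking that it satisfies the projection constraint, and (c) showing it is the unique minimiser via a $G$-orthogonality argument.

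First I would establish well-posedness by showing the Gram-type operator $S := D\Phi\, G^{-1} D\Phi^\transpose$ is invertible. Since $\Phi$ is a submersion of constant rank, $D\Phi_p$ is surjective, so $D\Phi_p^\transpose$ is injective; together with positive-definiteness of $G^{-1}$ this makes $S$ symmetric positive-definite on $T_{\Phi(p)}C$. Hence the formula defines a smooth bundle morphism, and direct substitution gives
\[
D\Phi_p \,\mathcal{L}_{\text{metric}}(\Delta c) \;=\; S\, S^{-1}\Delta c \;=\; \Delta c,
\]
so $\mathcal{L}_{\text{metric}}$ is a lift in the sense of \Cref{def:lift}.

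For the variational claim I would give two complementary arguments. Via Lagrange multipliers applied to $\tfrac{1}{2}\|\Delta u\|_G^2$ subject to $D\Phi\,\Delta u = \Delta c$, stationarity yields $G\Delta u = D\Phi^\transpose \lambda$, hence $\Delta u = G^{-1}D\Phi^\transpose \lambda$; substituting in the constraint forces $\lambda = S^{-1}\Delta c$, recovering the claimed expression. More transparently, observe that $\Delta u^{\text{metric}}$ is $G$-orthogonal to the vertical subspace $V_pP = \ker D\Phi_p$: for any $v \in V_pP$,
\[
\langle \Delta u^{\text{metric}},\, v\rangle_G \;=\; (S^{-1}\Delta c)^\transpose (D\Phi\, v) \;=\; 0.
\]
Any other feasible vector differs from $\Delta u^{\text{metric}}$ by some $v \in V_pP$, and the Pythagorean identity
\[
\|\Delta u^{\text{metric}} + v\|_G^2 \;=\; \|\Delta u^{\text{metric}}\|_G^2 + \|v\|_G^2
\]
together with strict positive-definiteness of $G$ gives minimality, with equality iff $v=0$; this simultaneously proves uniqueness. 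Strict convexity of $\|\cdot\|_G^2$ guarantees the stationary point is the global minimum, so the Lagrange calculation is not merely necessary but sufficient.

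The only point requiring care — not really an obstacle — is coordinate invariance. Read intrinsically, $D\Phi^\transpose$ denotes the cotangent pullback $D\Phi_p^{*} : T^{*}_{\Phi(p)}C \to T^{*}_p P$ and $G^{-1}$ the musical isomorphism $T^{*}_p P \to T_p P$; with this reading the matrix formula is chart-independent, and $\mathcal{L}_{\text{metric}}$ is precisely the horizontal lift for the $G$-orthogonal Ehresmann connection $H_p = (V_pP)^{\perp_G}$ introduced in the preceding Vertical/Horizontal projections paragraph, which is how the result will be used in the remainder of the paper.
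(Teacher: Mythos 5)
Your proof is correct and follows essentially the same route as the paper: positive-definiteness of the Gram matrix $D\Phi\,G^{-1}D\Phi^{\transpose}$ (the paper's Positive definiteness lemma and Lemma~\ref{lem:moment_invertible}), verification of the constraint, and the $G$-orthogonality/Pythagorean argument (Remark~\ref{rem:pythagorean}, Lemma~\ref{lem:optimality}), with the Lagrange-multiplier derivation matching the paper's appendix treatment of the block case. No gaps; the intrinsic reading of $D\Phi^{\transpose}$ and $G^{-1}$ is a welcome clarification but not a departure.
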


\begin{lemma}[Positive definiteness]
If $D\Phi_p:T_pP\to T_{\Phi(p)}C$ has full row rank $n=\dim C$ and $G$ is positive definite,
then $M:=D\Phi_p\,G^{-1}D\Phi_p^\top\in\mathbb R^{n\times n}$ is symmetric positive definite.
\end{lemma}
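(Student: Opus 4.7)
The plan is to verify the two defining properties of symmetric positive definiteness separately, both of which reduce to a short linear-algebraic calculation combined with the rank hypothesis on $D\Phi_p$.

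For symmetry, I would just transpose: since $G$ is symmetric and invertible, $G^{-1}$ is symmetric, so
\[
M^\top = \bigl(D\Phi_p\, G^{-1} D\Phi_p^\top\bigr)^\top = D\Phi_p\,(G^{-1})^\top D\Phi_p^\top = M.
\]
This step is essentially bookkeeping and carries no obstacle.

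For positive definiteness, the natural move is to pick an arbitrary nonzero $v\in\mathbb{R}^n$ and rewrite
\[
v^\top M v = v^\top D\Phi_p\, G^{-1} D\Phi_p^\top v = w^\top G^{-1} w, \qquad w := D\Phi_p^\top v \in \mathbb{R}^m.
\]
Two facts then combine: first, $G\succ 0$ implies $G^{-1}\succ 0$, so $w^\top G^{-1} w \ge 0$ with equality only when $w=0$; second, because $D\Phi_p$ has full row rank $n$, its transpose $D\Phi_p^\top$ has full column rank $n$, hence $\ker D\Phi_p^\top = \{0\}$ and $w=0 \iff v=0$. Chaining these two implications gives $v^\top M v > 0$ whenever $v\neq 0$.

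There is no real obstacle here; the only thing to be a little careful about is invoking full \emph{row} rank of $D\Phi_p$ at the correct step, namely to rule out $D\Phi_p^\top v = 0$ for nonzero $v$. This is precisely the hypothesis stated, and it is exactly the condition needed to ensure the Gram-like matrix $D\Phi_p G^{-1} D\Phi_p^\top$ is nondegenerate rather than merely nonnegative. I would present the proof in two short paragraphs (symmetry, then the $w$-substitution argument) and note explicitly that the same reasoning shows $M$ is invertible, which is what the formula in \Cref{prop:metriclift} tacitly requires.
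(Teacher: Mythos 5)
Your proof is correct and follows essentially the same route as the paper: both establish positive definiteness by writing $v^\top M v = (D\Phi_p^\top v)^\top G^{-1}(D\Phi_p^\top v)$ and using full row rank of $D\Phi_p$ to rule out $D\Phi_p^\top v = 0$ for $v \neq 0$. Your explicit symmetry check and the remark on invertibility are harmless additions the paper leaves implicit.
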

\begin{proof}
For $y\ne 0$, $y^\top M y=(D\Phi_p^\top y)^\top G^{-1}(D\Phi_p^\top y)>0$
because $D\Phi_p^\top y\ne 0$ by surjectivity of $D\Phi_p$ and $G^{-1}>0$.
\end{proof}

\begin{remark}[Pythagorean decomposition]\label{rem:pythagorean}
Let $\Delta u$ be any lift and write $\Delta u=\Delta u^{\text{metric}}+v$ with $v\in \ker D\Phi$. Then $\ip{\Delta u^{\text{metric}}}{v}_{G}=0$ and
\[
\|\Delta u\|_{G}^{2}=\|\Delta u^{\text{metric}}\|_{G}^{2}+\|v\|_{G}^{2}.
\]
Thus, any deviation from the metric lift appears as a vertical component whose squared $G$‑norm is the instantaneous excess cost. 
Since $\Delta u^{\text{metric}} = G^{-1}D\Phi^\top[D\Phi G^{-1}D\Phi^\top]^{-1}\Delta c$, for any $v \in \ker D\Phi$:
$$\langle \Delta u^{\text{metric}}, v \rangle_G = v^\top D\Phi^\top[D\Phi G^{-1}D\Phi^\top]^{-1}\Delta c = 0$$
since $D\Phi v = 0$.
\end{remark}

\section{Related Work}
\subsection{Geometric Approaches to Embodied Cognition}
Differential geometric methods have long been applied to model embodied agents and robotic systems \cite{murray1994mathematical,bloch2003nonholonomic}. Robotic configurations are naturally described on smooth manifolds, and tools such as fibre bundles and gauge connections have been used to analyse their motion \cite{marsden1999introduction,kobayashi1996foundations}. For example, cyclic changes in a robot's "shape" coordinates can lead to net movements – a geometric phase \cite{berry1984quantal,wilczek1984appearance,simon1983holonomy} – even when the system returns to its initial shape. Classic instances include the parallel parking problem, where oscillatory steering yields a sideways displacement, and the locomotion of snake-like robots \cite{bullo1999kinematic,ostrowski1998geometric} or swimming microrobots modeled using fibre bundle formalisms. Montgomery's gauge-theoretic analysis \cite{montgomery1993gauge} showed how systems with internal shape variables can achieve net displacements through cyclic deformations, treating configuration space connections analogously to Yang–Mills fields. Shapere and Wilczek \cite{shapere1989geometric} demonstrated how deformable bodies exploit gauge connections for locomotion. These geometric approaches confirm that path-dependent effects (holonomy) play a role in purely physical systems. However, prior work has largely focused on the mechanics and control of movement itself, rather than incorporating cognitive states or memory.

Tangential Action Spaces (TAS) extend this line of work by introducing explicit cognitive and intentional manifolds on top of the physical manifold. This allows us to ask new questions – for instance, how the geometry of the perception-action map $\Phi$ influences an agent's internal memory of past actions – which were not addressed in earlier geometric frameworks. While gauge theories and fibre bundle models provide the mathematical foundation (connections, horizontal lifts, etc.), they have not previously been used to unite energy costs with path-dependent cognitive effects. TAS builds on these geometric insights and brings in the novel consideration of a cost–memory trade-off, something absent in prior geometric approaches to embodied cognition.

\subsection{Dynamical Systems and Enactivism}
Our framework is also informed by dynamical systems theory (DST) and the enactive approach in cognitive science. Enactivism posits that cognition arises through a dynamic interaction between an acting organism and its environment, emphasizing the idea of structural coupling, the continual mutual influence between agent and world. Classic enactive theory \cite{varela1991embodied} and related DST models describe agents as dynamical systems coupled to their surroundings, explaining cognition as an emergent, history-dependent process rather than a sequential computation. This perspective resonates with Gibson's ecological approach to perception \cite{gibson1979ecological}, which emphasizes direct perception through agent-environment interaction rather than internal representation. For example, Beer's agent-based models \cite{beer1995dynamical} and Thelen's work on infant motor development use systems of differential equations to capture how cognitive behaviour unfolds over time in tandem with bodily action. These approaches compellingly illustrate phenomena like sensorimotor contingencies and limit-cycle behaviors in agent–environment systems.

However, they often lack a geometric formalism: the state space dynamics are usually described abstractly, without an underlying manifold structure that differentiates between "physical" and "cognitive" coordinates. TAS contributes a formal geometric scaffolding to the DST/enactive perspective. The projection $\Phi: P \to C$ in TAS is a concrete realization of structural coupling; it mathematically encodes how physical states map to cognitive states. By doing so, TAS makes it possible to apply differential geometric tools (such as connections and holonomy) to analyse classic enactive concepts. For instance, where enactivism might qualitatively discuss how an agent's history of sensorimotor interaction can alter its cognitive state, TAS can quantify this as path-dependent parallel transport on a fibre bundle (yielding measurable holonomy). In essence, TAS enriches dynamical systems accounts with a fibre bundle geometry: dynamical trajectories become lifted paths on manifolds. This added structure lets us identify when a system's behaviour is equivalent to a gradient flow on a potential vs. when it exhibits truly path-dependent evolution (something enactive accounts acknowledge conceptually but do not formalize). Thus, TAS complements DST and enactivist models by offering a unifying geometric language – one that preserves their insights about coupling and emergence, but adds the ability to rigorously distinguish conservative (path-independent) dynamics from nonconservative (history-dependent) dynamics.

\subsection{Energy and Efficiency in Motor Control}
A separate line of relevant work comes from optimal motor control and principles of efficient movement. In both biomechanics and robotics, it has been widely observed that biological motions often optimize some cost functional leading to models like minimum-jerk trajectories for human reaching, minimum torque-change and minimum energy expenditure for multi-joint movements. For example, Flash and Hogan's minimum-jerk model \cite{flash1985coordination} accounted for the straight-line hand paths and smooth velocity profiles seen in reaching movements by assuming the nervous system minimizes the jerk (third derivative of position) integrated over the movement duration. Similarly, Uno et al. \cite{uno1989formation} proposed a minimum torque-change criterion for multi-joint arm motions, and Alexander \cite{alexander1997minimum} hypothesized that human arm trajectories minimise metabolic energy cost. These optimality principles have been very successful in explaining and predicting kinematic patterns in tasks like pointing, locomotion, and gaze control. They also align with optimal control formulations in robotics (e.g., generating trajectories that minimise integrated squared torque or energy).

What these approaches typically do not address, however, is path-dependent memory or hysteresis. The optimization is usually performed per movement, from an initial state to a target, without considering the internal state memory of how that movement was executed. In other words, a minimum-jerk trajectory is optimal for that reach, but if the same reach is repeated, the model doesn't predict any difference based on the previous attempt's path. There is no notion that taking a different path to the same end point could leave an agent in a different internal state. By contrast, our TAS framework explicitly studies scenarios where the *same end-point* in $C$ or $P$ can be reached via different paths with different energetic costs and different retained memories (holonomies). Prior motor control models also typically assume a fixed mapping from desired task outcome to motor commands, whereas TAS reveals that when the $\Phi: P\to C$ mapping has nontrivial geometry, there can be multiple lifts (action policies) that achieve the same nominal behaviour with different energy expenditures.

In that sense, TAS bridges a gap between efficiency and memory: it generalizes the efficiency-centric view of optimal control by showing how striving for efficiency (e.g. following the metric-minimizing geodesic lift) conflicts with the introduction of path-based memory. Our results resonate with the intuition behind minimum-energy and minimum-effort models – indeed, the metric lift in TAS is analogous to the energy-optimal trajectory but we additionally pinpoint the energetic cost of deviating from that optimal path in order to encode memory. Thus, TAS can be seen as an extension to optimal motor control theory: one that incorporates the internal-state consequences of trajectory choices, not just their immediate energetic cost.

\subsection{Memory and Holonomy in Physical Systems}
The notion that physical systems can "remember" how they moved, independent of their start and end points, is well established in fields like classical mechanics and quantum physics, typically under the banner of geometric phase \cite{berry1984quantal,hannay1985angle,simon1983holonomy} or holonomy. Classical examples include robotic locomotion where cyclic gaits produce net displacements, and deformable bodies that achieve motion through shape changes. In the quantum realm, Berry's phase \cite{berry1984quantal} (and its non-Abelian generalizations \cite{wilczek1984appearance,simon1983holonomy}) shows that a quantum system slowly driven around a closed loop in parameter space acquires a phase shift dependent only on the loop's geometry, not on time or energy expended – effectively, a memory of the path taken.

These diverse examples illustrate that holonomy is a unifying concept: it appears whenever the state space has nontrivial curvature or topology. What has been lacking is a connection of these insights to cognitive and control processes. TAS provides that connection by treating an agent's cognitive state as analogous to a "position on a base manifold" and its physical state as a point in a higher-dimensional fibre space. In doing so, TAS allows us to interpret classical geometric phases as instances of embodied memory. For example, the Berry phase becomes a special case of TAS holonomy where the "cognitive manifold" is the parameter space and the physical effect is a phase shift. The gauge theories of locomotion \cite{kelly1995geometric,ostrowski1998geometric} (e.g. parallel parking, snake robot gaits) become TAS scenarios where $C$ is the shape space and $P$ includes the position/orientation. Here the holonomy in $P$ corresponds to locomotion. By unifying these under one framework, we can compare and contrast the energy costs of different types of geometric memory. Prior studies of geometric phase generally considered idealized systems and often assumed lossless, conservative dynamics (to cleanly observe the phase effect). TAS broadens this by examining non-conservative cases (curved connections with energy dissipation) and explicitly asking: what is the energetic price of acquiring a given holonomy? In summary, this subsection of related work underscores that the TAS framework synthesizes themes from geometric phase theory and holonomy; it bridges physical and informational memory.

\subsection{Predictive Processing and Active Inference}
Another influential framework in cognitive science and neuroscience is the family of theories around predictive processing, including the Free Energy Principle and Active Inference. These theories propose that intelligent agents \cite{friston2010free,friston2010action}(brains, robots, or organisms) operate by constantly predicting their sensory inputs and updating their internal beliefs to minimise prediction error or "surprise." In the Free Energy Principle formulation, an agent is said to minimise a variational free-energy bound on surprise by adjusting both its internal neural states and its actions. This leads to a picture of behaviour where perception and action are in service of reducing prediction errors: perception updates the internal model to better fit sensory data, while action changes the world (or the agent's sensory input) to better fit the predictions.

Active Inference, in particular, extends this idea to action selection, asserting that agents select motor commands that are expected to minimise future prediction errors (often framed as fulfilling prior expectations about desired states). These ideas have been extremely powerful in explaining everything from reflexes to high-level cognitive biases, effectively unifying homeostatic regulation, perception, and goal-directed behaviour under one normative principle.

However, predictive processing models usually abstract away the detailed geometry of the physical world. They are typically formulated in terms of probabilistic state estimates and do not say much about manifolds, curvature, or path integrals. The "embodiment" of predictive coding is acknowledged (e.g. Friston's principle is touted as an explanation of embodied perception–action loops), but the formalism tends to lump all physical interactions into a generic probabilistic mapping (the generative model and likelihood function). As a result, concepts like energy cost or path-specific memory are not explicitly represented. For instance, an active inference agent might infer a policy that keeps it in a safe state, but standard formulations won't account for the fact that two policies reaching the same state might expend different energy or leave different internal residues.

TAS can provide a valuable geometric grounding for these ideas. By mapping the abstract variables of a generative model onto $P$, $C$, and $I$ manifolds, we can interpret prediction error minimization in terms of movements along those manifolds. Notably, TAS introduces the idea that there is a geodesic (energy-optimal) way to realize a given prediction or goal-directed change, namely the metric lift, and that deviating from this geodesic corresponds to the agent encoding some additional information (holonomic memory). In a predictive processing context, this suggests a refinement: agents not only minimise surprise, but they may do so in ways that either conserve or expend extra energy depending on whether they also need to learn or memorize something from the experience. For example, if an agent repeatedly predicts and perceives a certain outcome, predictive coding alone might adapt its expectations; TAS would add that if the agent's internal manifold allows multiple lifts, it could either follow a habit (energy-efficient repetition) or explore a new lift (higher cost, but yielding learning of a novel sensorimotor mapping). In Active Inference terms, one typically defines a free-energy minimizing policy without detailing the path geometry; TAS could help characterize which policy among those that achieve a given outcome is geometrically natural (minimal energy) versus which involve detours that create lasting state changes. In summary, predictive processing and Active Inference provide a high-level normative target (minimise prediction error/free energy) for adaptive behaviour, and TAS complements this by revealing the underlying geometric mechanics needed to implement such behaviour in an embodied agent.

By doing so, TAS links the thermodynamic and information-theoretic efficiency emphasized by the Free Energy Principle with the physical energy efficiency (and memory trade-offs) emphasized in our framework. This not only grounds predictive processing in a concrete embodiment but also highlights scenarios where informational efficiency and energetic efficiency may conflict, for instance, when gaining information (reducing uncertainty) requires taking an energetically costly path. Such insights are largely outside the scope of traditional predictive coding models, but arise naturally in the TAS perspective, suggesting fertile ground for integrating the two frameworks in future work.

Having established the geometric foundations, we now turn to the energetic consequences of different lift choices.

\section{Travel Cost and the Price of Memory}\label{sec:cost}

\subsection{Energetic Foundations}

The energetic cost of executing a physical trajectory $u(t)$ along a path $\gamma$ is given by the integrated squared velocity:
\begin{equation}
\mathcal{E}[\gamma] = \int_\gamma \|\dot{u}(t)\|_{G}^{2} \, dt .
\end{equation}

\begin{lemma}[Optimality of metric lift]\label{lem:optimality}
For any fibration $\Phi: P \to C$ and any lift operation $\mathcal{L}$ yielding $\Delta u$ with $D\Phi(\Delta u) = \Delta c$,
\begin{equation}
\|\Delta u^{\text{metric}}\|_G \leq \|\Delta u\|_G ,
\end{equation}
with equality if and only if $\Delta u = \Delta u^{\text{metric}}$.
\end{lemma}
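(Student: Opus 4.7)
The plan is to reduce the inequality to the Pythagorean decomposition already noted in \Cref{rem:pythagorean}. The structure of the argument is: (i) parametrise the fibre of lifts over $\Delta c$ as an affine subspace, (ii) show the metric lift is $G$-orthogonal to the vertical subspace $\ker D\Phi$, and (iii) invoke Pythagoras.

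\textbf{Step 1: Affine parametrisation of lifts.} I would observe that the linear equation $D\Phi_p(\Delta u)=\Delta c$ has $\Delta u^{\text{metric}}$ as a particular solution (by the defining formula in \Cref{prop:metriclift}, using the lemma above on positive definiteness of $M=D\Phi\,G^{-1}D\Phi^\top$). Hence every other lift $\Delta u$ can be written uniquely as
\[
\Delta u \;=\; \Delta u^{\text{metric}} + v, \qquad v \in \ker D\Phi_p,
\]
since the difference of two solutions lies in the kernel of $D\Phi_p$.

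\textbf{Step 2: $G$-orthogonality of $\Delta u^{\text{metric}}$ to the vertical subspace.} For any $v\in\ker D\Phi_p$, a direct computation gives
\[
\ip{\Delta u^{\text{metric}}}{v}_G
\;=\; v^\top G\bigl(G^{-1}D\Phi^\top [D\Phi\,G^{-1}D\Phi^\top]^{-1}\Delta c\bigr)
\;=\; (D\Phi\, v)^\top [D\Phi\,G^{-1}D\Phi^\top]^{-1}\Delta c
\;=\; 0,
\]
using $D\Phi\,v=0$. This is the key algebraic identity, and it is the same one underlying \Cref{rem:pythagorean}.

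\textbf{Step 3: Pythagoras and the equality case.} Combining Steps 1 and 2,
\[
\|\Delta u\|_G^2 \;=\; \|\Delta u^{\text{metric}}\|_G^2 + 2\ip{\Delta u^{\text{metric}}}{v}_G + \|v\|_G^2 \;=\; \|\Delta u^{\text{metric}}\|_G^2 + \|v\|_G^2,
\]
so $\|\Delta u^{\text{metric}}\|_G \le \|\Delta u\|_G$. Since $G$ is positive definite, $\|v\|_G^2=0$ forces $v=0$, giving equality if and only if $\Delta u=\Delta u^{\text{metric}}$.

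There is no real obstacle here: this is the standard least-squares/orthogonal-projection argument, rephrased in the bundle language. The only subtlety worth stating carefully is Step 2, which requires the explicit formula for $\Delta u^{\text{metric}}$ and the surjectivity of $D\Phi_p$ to ensure the middle bracket is invertible; both are already guaranteed by the standing fibration hypothesis and the preceding positive-definiteness lemma.
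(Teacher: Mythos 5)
Your proposal is correct and follows essentially the same route as the paper: the paper's proof also writes $\Delta u=\Delta u^{\text{metric}}+v$ with $v\in\ker D\Phi$ and invokes the Pythagorean decomposition of \Cref{rem:pythagorean}, whose orthogonality computation is exactly your Step 2. You have merely inlined the details that the paper delegates to that remark.
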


\begin{proof}
Any $\Delta u$ satisfying $D\Phi(\Delta u) = \Delta c$ can be written as $\Delta u = \Delta u^{\text{metric}} + v$ where $v \in \ker D\Phi$. By the Pythagorean decomposition (Remark~\ref{rem:pythagorean}), $\|\Delta u\|_G^2 = \|\Delta u^{\text{metric}}\|_G^2 + \|v\|_G^2 \geq \|\Delta u^{\text{metric}}\|_G^2$, with equality iff $v = 0$.
\end{proof}

\subsection{The Cost–Memory Duality}

A fundamental relationship links path‑dependent memory (holonomy) and energetic cost. Any lift that generates holonomy must deviate from the minimal‑energy metric lift, and the deviation has an energetic price.

\begin{theorem}[Cost–memory trade-off]\label{thm:cost-memory}
Let $\gamma:[0,T]\to C$ be $C^1$.
\begin{enumerate}
    \item (Diffeomorphism) If $P=C$ and a prescribed lift $L_{\mathrm{prescr}}$ satisfies
$L_{\mathrm{prescr}}(\dot c)=L_{\mathrm{geom}}(\dot c)+v_{\mathrm{vert}}$ with
$v_{\mathrm{vert}}\neq 0$ on a set of positive measure, then
$\mathcal E_{\mathrm{prescr}}[\gamma]>\mathcal E_{\mathrm{geom}}[\gamma]$.
Since $\|v_{\text{vert}}\|^2 > 0$ on a set $S$ of positive measure and the integrand is continuous, $\int_S \|v_{\text{vert}}\|^2 dt > 0$.
\item (Fibration) For any connection with horizontal map $L_H(\dot c)$,
$\|L_H(\dot c)\|_G^2=\|L_{\mathrm{metric}}(\dot c)\|_G^2+\|v_{\mathrm{vert}}(\dot c)\|_G^2$
pointwise, hence $\mathcal E_H[\gamma]\ge\mathcal E_{\mathrm{metric}}[\gamma]$,
with equality iff $v_{\mathrm{vert}}\equiv 0$.
\end{enumerate}

\end{theorem}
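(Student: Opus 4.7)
Both parts reduce to integrating the pointwise Pythagorean identity of Remark~\ref{rem:pythagorean} along $\gamma$, so the plan is to establish that identity for each lift under consideration and then read off the energy comparison.

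For part (2), I would fix $t\in[0,T]$, set $\Delta c=\dot c(t)$, and write $L_H(\dot c)=L_{\text{metric}}(\dot c)+v_{\text{vert}}(\dot c)$ with $v_{\text{vert}}(\dot c)\in\ker D\Phi$ (this is possible because both $L_H(\dot c)$ and $L_{\text{metric}}(\dot c)$ project to $\dot c$, so their difference is vertical). Remark~\ref{rem:pythagorean} then gives the pointwise equality $\|L_H(\dot c)\|_G^2=\|L_{\text{metric}}(\dot c)\|_G^2+\|v_{\text{vert}}(\dot c)\|_G^2$. Integrating over $[0,T]$ yields
\[
\mathcal E_H[\gamma]=\mathcal E_{\text{metric}}[\gamma]+\int_0^T\|v_{\text{vert}}(\dot c(t))\|_G^2\,dt\;\ge\;\mathcal E_{\text{metric}}[\gamma],
\]
with equality iff the nonnegative continuous integrand vanishes everywhere, i.e.\ $v_{\text{vert}}\equiv 0$.

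For part (1), I would interpret the setting through the factorisation $P=\overline P\times H$ from Definition~\ref{def:TAS}, where $\phi:\overline P\to C$ is a local diffeomorphism and $\ker D\Phi=T_hH$ carries the hidden component introduced by the prescribed dynamics. By the prescribed-dynamics definition, $L_{\text{prescr}}$ and $L_{\text{geom}}$ share the same visible component, so $L_{\text{prescr}}(\dot c)=L_{\text{geom}}(\dot c)+v_{\text{vert}}(\dot c)$ with $v_{\text{vert}}\in\ker D\Phi$. Applying the same Pythagorean decomposition (identifying $L_{\text{geom}}$ with the metric lift of the extended bundle so the cross term vanishes) gives $\|L_{\text{prescr}}\|_G^2=\|L_{\text{geom}}\|_G^2+\|v_{\text{vert}}\|_G^2$ pointwise. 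Integrating,
\[
\mathcal E_{\text{prescr}}[\gamma]-\mathcal E_{\text{geom}}[\gamma]=\int_0^T\|v_{\text{vert}}(\dot c(t))\|_G^2\,dt,
\]
and since the integrand is nonnegative and strictly positive on a set of positive measure by hypothesis, the integral is strictly positive.

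The main obstacle is the slight inconsistency in part (1): the literal reading "$P=C$" forces $\ker D\Phi=\{0\}$ and so forbids any $v_{\text{vert}}\neq 0$, which must instead be read through the practical factorisation $P=\overline P\times H$ where the hidden fibre supplies the vertical direction. A secondary technical point is ensuring $G$-orthogonality of the Pythagorean decomposition in the extended bundle; this is automatic if $G$ is block-structured with respect to $\overline P\times H$, and otherwise is obtained by taking the horizontal complement to be $G$-orthogonal (so that $L_{\text{geom}}$ coincides with $L_{\text{metric}}$ on the extended space). Once this bookkeeping is fixed, both conclusions are immediate consequences of nonnegativity and the elementary fact that $\int f=0$ with $f\ge 0$ measurable forces $f=0$ almost everywhere.
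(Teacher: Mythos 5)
Your proof is correct and follows essentially the same route as the paper's: decompose the lift as the metric/geometric lift plus a vertical component, invoke the $G$-orthogonal Pythagorean identity of Remark~\ref{rem:pythagorean} pointwise, and integrate to get the energy comparison with strictness when $v_{\mathrm{vert}}$ is nonzero on a set of positive measure. Your added care in part (1) — reading ``$P=C$'' through the practical factorisation $P=\overline P\times H$ so that a nontrivial vertical direction actually exists — is a useful clarification the paper's terse proof omits, but it does not change the underlying argument.
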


\begin{proof}
Decompose any feasible velocity as $L_{\mathrm{metric}}(\dot c)+v_{\mathrm{vert}}$ with
$v_{\mathrm{vert}}\in V_pP$; orthogonality in $G$ (Remark~\ref{rem:pythagorean}) gives
the stated energy identities and strictness when $v_{\mathrm{vert}}\not\equiv0$.
\end{proof}

\begin{remark}[Physical interpretation]
The cost-memory trade-off theorem establishes that memory is never free: any path-dependent behaviour requires excess energy above the geometric minimum. This excess is precisely the energy stored in vertical motions that create holonomy.
\end{remark} 
\begin{proposition}[Small‑loop law]\label{prop:smallloop}
Let $\nabla$ be any smooth connection on $\Phi:P\to C$ with curvature two‑form $F$. For a sufficiently small closed loop $\gamma$ bounding area $A$ in $C$, the induced holonomy $\Delta u_{\mathrm{vert}}$ satisfies $\|\Delta u_{\mathrm{vert}}\| \leq K\,|A|$ for some local constant $K$. Moreover, the excess effort over the metric lift obeys
\[
\mathcal E_{\nabla}[\gamma]-\mathcal E_{\text{metric}}[\gamma] \;\ge\; \kappa\, \|\Delta u_{\mathrm{vert}}\|^{2} + o(A^{2})
\]
for a positive constant $\kappa$ depending on $(G,\nabla)$ at the base point.
\end{proposition}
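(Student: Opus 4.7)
The plan is to combine three ingredients: the Abelian Stokes identity to bound holonomy by area, the Pythagorean energy decomposition from \Cref{rem:pythagorean}, and a Cauchy-Schwarz step converting integrated vertical speed into a bound on total squared displacement.

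First, for the holonomy bound, I work in the Abelian case (Assumption~(A1)), where the connection 1-form $\omega$ satisfies $F = d\omega$. For a small disk $S \subset C$ with $\partial S = \gamma$ of area $|A|$, Stokes' theorem gives the leading-order expansion $\Delta u_{\mathrm{vert}} = -\int_S F + O(|A|^2)$ (non-Abelian $\omega \wedge \omega$ corrections vanish under (A1)). Bounding the integrand by the local supremum $\|F\|_{\sup}$ on a compact neighborhood of the base point yields $\|\Delta u_{\mathrm{vert}}\| \leq K|A|$ with $K := \|F\|_{\sup}$, a constant depending only on base-point geometry.

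Second, for the energy bound, I apply \Cref{rem:pythagorean} pointwise. Since both $L_H^\nabla(\dot c(t))$ and $L_{\mathrm{metric}}(\dot c(t))$ project to $\dot c(t)$, their difference $v_{\mathrm{vert}}(t)$ lies in $V_{p(t)}P$, yielding $\|L_H^\nabla(\dot c)\|_G^2 = \|L_{\mathrm{metric}}(\dot c)\|_G^2 + \|v_{\mathrm{vert}}\|_G^2$. Integrating gives $\mathcal{E}_\nabla[\gamma] - \mathcal{E}_{\mathrm{metric}}[\gamma] = \int_0^T \|v_{\mathrm{vert}}\|_G^2 \, dt$. Identifying vertical fibres along the loop via Abelian parallel transport (canonical up to $o(|A|)$ terms), the total vertical displacement $\int_0^T v_{\mathrm{vert}}(t)\,dt$ agrees with $\Delta u_{\mathrm{vert}}$ up to $o(|A|^2)$ corrections. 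Applying Cauchy-Schwarz in $L^2([0,T]; V_{p_0}P)$:
\[
\|\Delta u_{\mathrm{vert}}\|_G^2 + o(|A|^2) \;=\; \Bigl\|\int_0^T v_{\mathrm{vert}} \, dt\Bigr\|_G^2 \;\leq\; T \int_0^T \|v_{\mathrm{vert}}\|_G^2 \, dt \;=\; T\bigl(\mathcal{E}_\nabla - \mathcal{E}_{\mathrm{metric}}\bigr),
\]
which rearranges to the claim with $\kappa = 1/T$.

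The main obstacle is making $\kappa$ a clean base-point invariant. Cauchy-Schwarz naturally yields $\kappa = 1/T$, which is tied to the loop's parameterization rather than being a pure base-point quantity. To honor the phrasing ``depending on $(G, \nabla)$ at the base point'', one can either fix a canonical parameterization (e.g.\ unit $g_C$-speed so that $T$ equals the loop's $g_C$-length, itself a geometric quantity) or restrict attention to loops of uniformly bounded duration. A secondary subtlety is identifying $\int v_{\mathrm{vert}} \, dt$ with $\Delta u_{\mathrm{vert}}$: in full generality these differ by the metric-connection holonomy (itself $O(|A|)$), and cleanly absorbing this difference into the $o(|A|^2)$ tail is the most delicate step — it is immediate when the metric connection is flat at the base point, and otherwise requires a careful expansion leveraging both Abelianness and the smallness of $|A|$.
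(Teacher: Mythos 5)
Your proposal is correct and follows essentially the same route as the paper's proof: the Abelian Stokes identity with a local $L^\infty$ bound on $F$ gives $\|\Delta u_{\mathrm{vert}}\|\le K|A|$, and the Pythagorean decomposition of \Cref{rem:pythagorean} plus Cauchy--Schwarz in $L^2([0,T])$ gives the quadratic lower bound with $\kappa=1/T$, which the paper normalises to $\kappa=1$ by fixing $T=1$. The two subtleties you flag -- the parametrisation-dependence of $\kappa$ and the identification of $\int_0^T v_{\mathrm{vert}}\,dt$ with $\Delta u_{\mathrm{vert}}$ when the metric connection itself has holonomy -- are precisely the points the paper resolves in its appendix by fixing the loop parametrisation and restating the law \emph{relative} to the metric lift, so your awareness of them matches the paper's own corrected treatment.
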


\begin{proof}
Assume Abelian fibre. Let $\omega$ be the connection $1$-form and $F=d\omega$.
For a small loop $\gamma$ bounding $S_\gamma$ inside a coordinate ball $U$,
$\Delta u_{\mathrm{vert}}=\iint_{S_\gamma}F$. Hence
$\|\Delta u_{\mathrm{vert}}\|\le \|F\|_{L^\infty(U)}\,\mathrm{Area}(S_\gamma)=:K\,|A|$.

Write the vertical velocity as $v(t)=\omega(\dot u_H(t))$ along the horizontal lift;
then $\Delta u_{\mathrm{vert}}=\int_0^T v(t)\,dt$. By Cauchy–Schwarz,
$\int_0^T\|v\|_G^2\,dt \ge \|\Delta u_{\mathrm{vert}}\|_G^2/T$.
Fix a time reparametrization with $T=1$ for small loops; then
$\mathcal E_H[\gamma]-\mathcal E_{\mathrm{metric}}[\gamma]
=\int_0^1\|v\|_G^2\,dt \ge \|\Delta u_{\mathrm{vert}}\|_G^2$.
Thus $\kappa=1$ in this parametrization and the remainder is $o(A^2)$ since
$\Delta u_{\mathrm{vert}}=O(A)$ by the first bound.
\end{proof}

\section{Holonomy and Path‑Dependent Memory}\label{sec:holonomy}

\subsection{Geometric Holonomy in Fibrations}\label{subsec:geom-hol}

For fibrations equipped with an Ehresmann connection, path dependence arises from curvature. For a connection defined by a connection one‑form $\omega$, the curvature is the two‑form $F = d\omega + \omega \wedge \omega$. For Abelian structure groups, $F=d\omega$.

\begin{definition}[Curvature and holonomy]\label{def:curvhol}
For a connection with connection one‑form $\omega$ on a fibration $\Phi: P \to C$:
\begin{enumerate}
    \item The curvature two‑form is $F = d\omega + \omega \wedge \omega$.

    \item For a closed loop $\gamma \subset C$ bounding surface $S_\gamma$, the holonomy is
\begin{equation}
\text{Hol}(\gamma)=\exp\!\big(\iint_{S_\gamma} F\big)\in \Group,
\end{equation}
where $\Group$ is the structure group of the bundle.
    \item For Abelian structure groups, this simplifies to
    \begin{equation}
    \Delta u^i_{\text{fibre}} = \iint_{S_\gamma} F^i .
    \end{equation}
\end{enumerate}
\end{definition}
For non-Abelian groups, the exponential must be path-ordered. Our assumption (A1) of Abelian structure groups ensures the integral is well-defined independent of how $S_\gamma$ is parametrized.

Nonzero curvature implies that lifting a closed cognitive loop generally yields an open physical path. As demonstrated in our examples, constant and variable curvature lead to distinct forms of geometric holonomy. Conversely, even with nontrivial topology, a flat connection ($F=0$) produces no holonomy for contractible loops.

\begin{remark}[Holonomy versus path‑dependent displacement]
We distinguish between geometric holonomy, a structure‑group element resulting from parallel transport around a closed loop, and path‑dependent displacement, the net physical displacement after traversing a closed cognitive loop. For Abelian fibrations, these coincide. For diffeomorphisms, only path‑dependent displacement via prescribed dynamics is possible.
\end{remark}

\subsection{Prescribed Holonomy in Diffeomorphisms}

For diffeomorphisms, which lack intrinsic geometric holonomy, path dependence must be explicitly engineered through prescribed dynamics. Given a control law $\dot{u} = \mathcal{L}_{\text{prescr}}(u, c(t), \dot{c}(t))$ that generates a physical trajectory $u(t)$ in response to a desired cognitive loop $c(t)$ for $t \in [0, T]$, the holonomy is
\begin{equation}
\Delta u_{\text{prescr}} = u(T) - u(0) = \int_0^T \mathcal{L}_{\text{prescr}}(u(t), c(t), \dot{c}(t)) \, dt ,
\end{equation}
and it necessarily incurs excess energy by \cref{thm:cost-memory}.
With the cost-memory relationship established, we can now systematically classify all possible path-dependent behaviors.

\section{Classification of Path‑Dependent Behaviors}

\textbf{Classification Principle}
Given $(\Phi,G)$ together with either a connection $\nabla$ (fibration case) or a prescribed
vertical field $X_{\mathrm{vert}}$ (diffeomorphic case), the induced behaviour falls primarily into one
of four archetypes: (1) intrinsically conservative, (2) conditionally conservative, (3) geometrically
nonconservative, (4) dynamically nonconservative. Mixed cases (nonzero curvature \emph{and}
nonzero $X_{\mathrm{vert}}$) can be described by the pair $(F, X_{\mathrm{vert}})$; figures use the dominant mechanism.

\noindent
\textbf{Intrinsically conservative} systems arise from diffeomorphisms with the geometric lift. They exhibit zero holonomy and minimal cost.

\noindent
\textbf{Conditionally conservative} systems occur in fibrations with flat connections ($F=0$), as shown in the flat and cylindrical examples. Despite having a fibre structure, these systems exhibit zero holonomy for contractible loops. They achieve minimal cost when using the metric lift.

\begin{figure}[H]
\centering
\includegraphics[width=\textwidth]{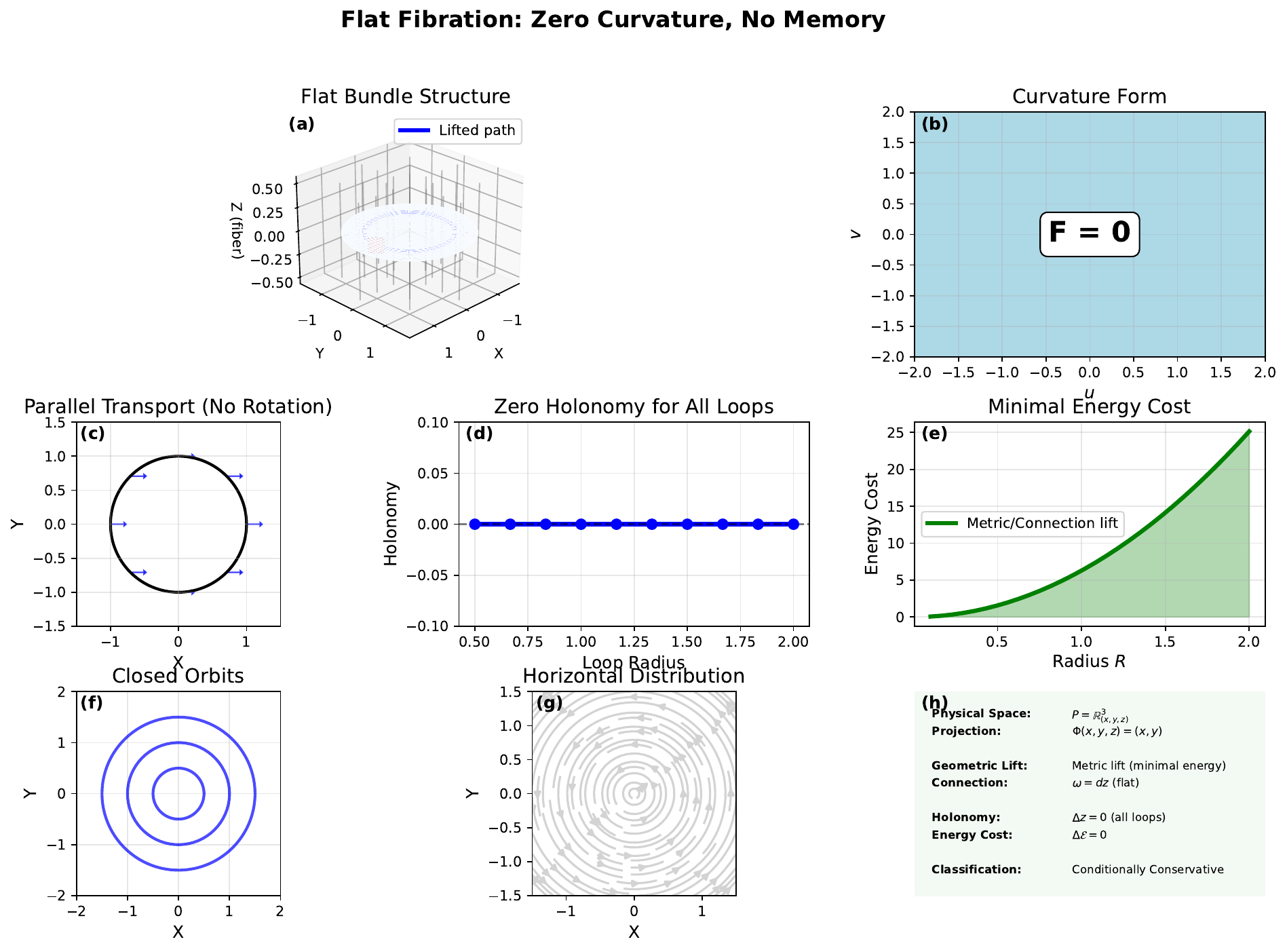}
\caption{\textbf{Flat fibration: zero curvature, no memory.} 
(a) Three‑dimensional bundle structure shows a lifted path that closes exactly, demonstrating zero holonomy despite the fibre structure. 
(b) The curvature form is $F = 0$ everywhere, confirming the flat Ehresmann connection. 
(c) Parallel transport preserves horizontal subspaces completely around closed loops. 
(d) Holonomy remains zero for all loop radii. 
(e) Energy analysis shows the metric and connection lifts coincide, resulting in no excess cost $\Delta\mathcal{E} = 0$. 
(f) Orbits for multiple cognitive trajectories remain closed. 
(g) The horizontal distribution of the flat connection. 
System classification: conditionally conservative (fibration with a flat connection).}
\label{fig:flat}
\end{figure}

\noindent
\textbf{Geometrically nonconservative} systems emerge in fibrations with curved connections ($F \neq 0$), where nonzero geometric holonomy arises from curvature. The cost depends on the specific connection chosen. Helical and twisted fibrations illustrate constant and variable curvature effects.

\noindent
\textbf{Dynamically nonconservative} systems appear in diffeomorphisms with prescribed lifts. Here, path dependence is achieved by engineering dynamics that deviate from the geometric lift, which necessarily incurs an excess cost $\Delta\mathcal{E} > 0$. The strip–sine system is a prime example.

\section{Illustrative Examples}
\label{sec:examples}
We now illustrate the theoretical framework through four canonical examples that span all classes of path-dependent behaviour. These examples progress from engineered memory in hidden dimensions to intrinsic geometric memory, demonstrating how different mathematical structures lead to qualitatively different memory mechanisms and energetic costs.

\subsection{The Strip–Sine System: engineering memory in a hidden fibre}
\label{subsec:strip-sine}

Let the physical space be the direct product
\[
  P \;=\; \mathbb R^{2}_{(u,v)} \times \mathbb R_{h},
\]
where the extra coordinate $h\in\mathbb R$ represents an internal actuator state that is not observable in perception.
The cognitive space is $C=\mathbb R^{2}_{(c_{1},c_{2})}$ and the
projection (extended trivially over~$h$) is
\begin{equation}
  \Phi(u,v,h) \;=\; (c_{1},c_{2})
               \;=\; \bigl(u,\;v+\kappa\sin u\bigr),
  \label{eq:strip-sine-projection}
\end{equation}
with differential
\[
  D\Phi_{(u,v,h)} \;=\;
  \begin{pmatrix}
     1 & 0 & 0\\[4pt]
     \kappa\cos u & 1 & 0
  \end{pmatrix},
  \qquad
  \operatorname{rank}D\Phi = 2 .
\]
Hence $\ker D\Phi = T_{h}\mathbb R_{h}$, so $H=\mathbb R_{h}$ is the
hidden fibre and $\overline{P}=\mathbb R^{2}_{(u,v)}$ the
visible sub‑manifold, in line with
\cref{def:TAS}.

\paragraph{Geometric lift.}
For any cognitive velocity $(\dot c_{1},\dot c_{2})$ the unique
(horizontal, energy‑minimal) lift in the visible coordinates is
\[
  \dot u = \dot c_{1},
  \qquad
  \dot v = \dot c_{2} - \kappa\cos u \,\dot c_{1},
  \qquad
  \dot h = 0 .
\]
Because $u$ and $v$ are related to $(c_{1},c_{2})$ by the global
diffeomorphism
$\varphi(u,v)=(u,v+\kappa\sin u)$, any closed cognitive loop
$c(t)$ maps to a closed visible loop $(u(t),v(t))$; thus the geometric
lift exhibits no holonomy in $(u,v)$.

\paragraph{Prescribed dynamics (memory in the hidden fibre).}
To store path history we keep the horizontal part unchanged and add a
vertical component:
\begin{align}
  \dot u &= \dot c_{1}, \\[4pt]
  \dot v &= \dot c_{2} - \kappa\cos u \,\dot c_{1}, \\[4pt]
  \dot h &= f(c,\dot c), \qquad
           \text{with } 
           f(c,\dot c) := \alpha\bigl(c_{1}\,\dot c_{2} - c_{2}\,\dot c_{1}\bigr),
  \label{eq:strip-sine-f}
\end{align}
where $\alpha\in\mathbb R$ is a tunable gain.
Because $D\Phi(\dot u,\dot v,\dot h)=(\dot c_{1},\dot c_{2})$ still
holds, the projection constraint is respected.

\paragraph{Holonomy in the hidden fibre.}
For a closed cognitive loop $\gamma\subset C$ we obtain
\[
  \Delta h 
  \;=\; h(T)-h(0)
  \;=\; \alpha \!\!\oint_{\gamma}\!
        \bigl(c_{1}\,dc_{2} - c_{2}\,dc_{1}\bigr)
  \;=\; 2\alpha\,\operatorname{Area}(\gamma) .
\]
Thus the internal state $h$ records the loop area, providing
a clear example of path‑dependent memory.

\paragraph{Energetic cost.}
The additional instantaneous power required is
$\|\dot h\|^{2} = \alpha^{2}
 (c_{1}\,\dot c_{2} - c_{2}\,\dot c_{1})^{2}$,
so the excess cost above the geometric lift satisfies
\[
  \Delta\mathcal E 
  \;=\; \int_{0}^{T} \!\|\dot h(t)\|^{2}\,dt
  \;\propto\; \alpha^{2}\,\bigl[\operatorname{Area}(\gamma)\bigr]^{2},
\]
exemplifying the cost–memory trade‑off predicted by
\cref{thm:cost-memory}.  Visible coordinates close perfectly,
but memory is accumulated in the hidden fibre at an energetic price
controllable via $\alpha$.

\begin{figure}[H]
\centering
\includegraphics[width=\textwidth]{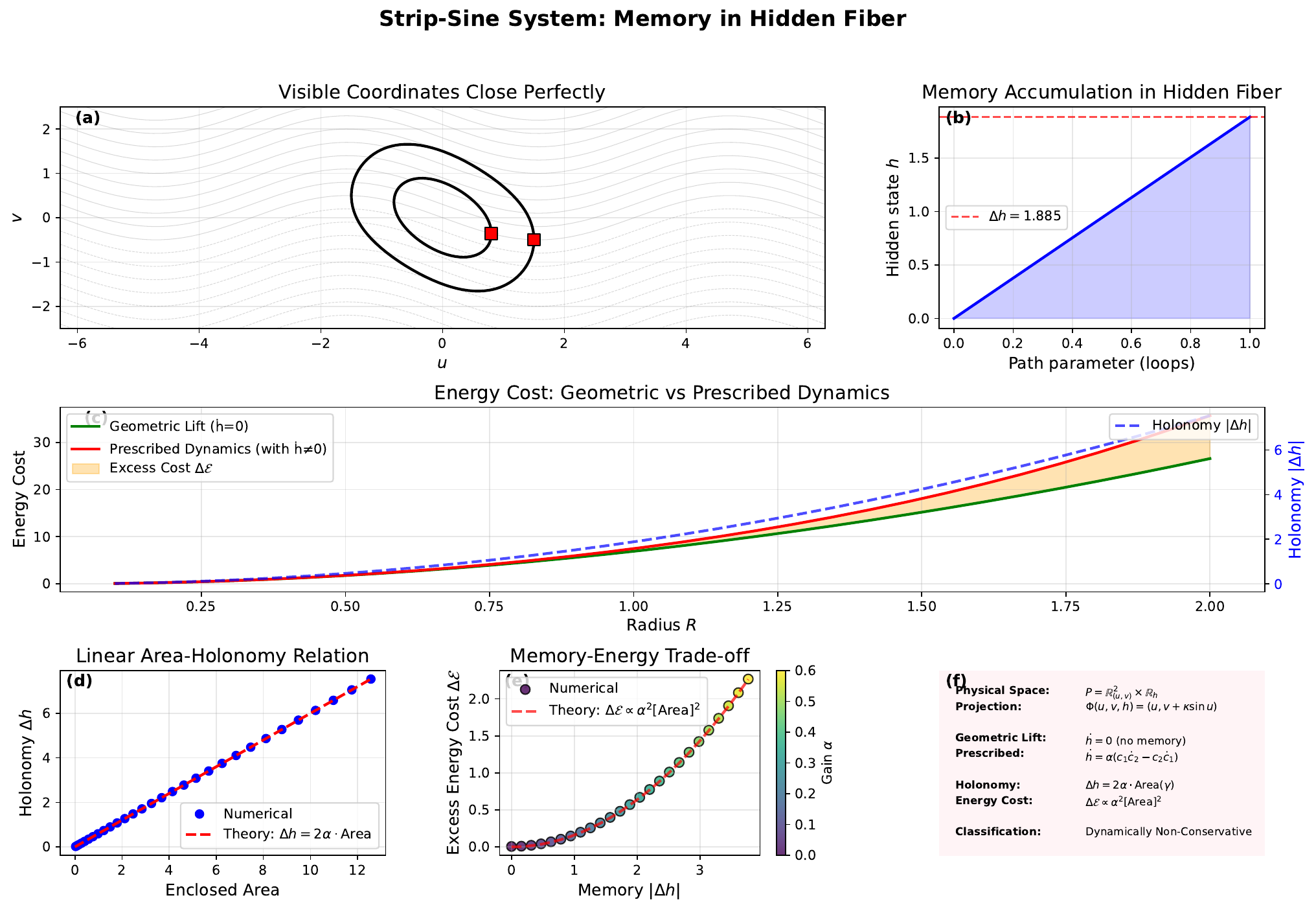}
\caption{\textbf{Strip–Sine system with area‑based memory in a hidden fibre.}
(a) Visible coordinates $(u,v)$ close perfectly for circular cognitive loops of different radii; the projection $\Phi(u,v,h)=(u,v+\kappa\sin u)$ is a diffeomorphism on $(u,v)$, so no visible holonomy occurs (grey streamlines show the geometric lift).
(b) Hidden‑state evolution for a unit circle: the prescribed dynamics $\dot h=\alpha(c_{1}\dot c_{2}-c_{2}\dot c_{1})$ integrates to $\Delta h = 2\pi\alpha$ (here $\alpha=0.3$, hence $\Delta h\!=\!1.885$).
(c) Energetic cost comparison using the squared‑speed functional $\mathcal E=\int\|\dot u\|_{G}^{2}dt$: the geometric lift (green) scales quadratically with radius $R$, whereas the prescribed dynamics (red) incurs an additional quartic term from the hidden fibre; the shaded area is the excess cost $\Delta\mathcal E\propto R^{4}$.
(d) Linear area–holonomy law: numerical data (blue) follow $\Delta h = 2\alpha\,\mathrm{Area}$ (red dashed line) exactly.
(e) Memory–energy trade‑off: simulated points collapse on the theoretical curve $\displaystyle\Delta\mathcal E=\tfrac{(\Delta h)^{2}}{2\pi}$ (dashed), confirming the quadratic cost of storing path history.
System classification: dynamically nonconservative TAS with engineered memory in a hidden fibre.}
\label{fig:strip_sine}
\end{figure}

For a circular cognitive loop of radius $R$ centred at the origin, one
finds, for the choice $f=\alpha(c_1\dot{c}_2 - c_2\dot{c}_1)$,
\[
\Delta h = 2\alpha \cdot \mathrm{Area}(\gamma) = 2\alpha\pi R^2,
\qquad
\Delta\mathcal{E} = \frac{(\Delta h)^2}{2\pi} = 2\pi\alpha^2 R^4,
\]
recovering the cost–memory scaling in
\cref{fig:strip_sine}(d,e).

\subsection{Helical Fibration: natural geometric memory}

Consider a helical fibration with physical space $P = \mathbb{R}^3(x,y,z)$ and cognitive space $C = \mathbb{R}^2(x,y)$, with projection $\Phi(x,y,z) = (x,y)$. Equip this with a connection defined by the one‑form $\omega = dz - \alpha(y \,dx - x\, dy)$. This connection has constant curvature $F = d\omega = 2\alpha \,dx \wedge dy$.

For a closed loop $\gamma$ in the cognitive plane $C$, the geometric holonomy is the integral of curvature over the enclosed area:
\begin{equation}
\Delta z = \iint_{\text{Area}(\gamma)} F = 2\alpha \cdot \text{Area}(\gamma) .
\end{equation}
This demonstrates geometric path dependence arising from curvature. The travel cost depends on $\alpha$, with larger $\alpha$ yielding more holonomy but at higher energetic cost, a concrete manifestation of the cost–memory trade‑off. The metric connection corresponds to $\alpha=0$.

\begin{figure}[H]
\centering
\includegraphics[width=\textwidth]{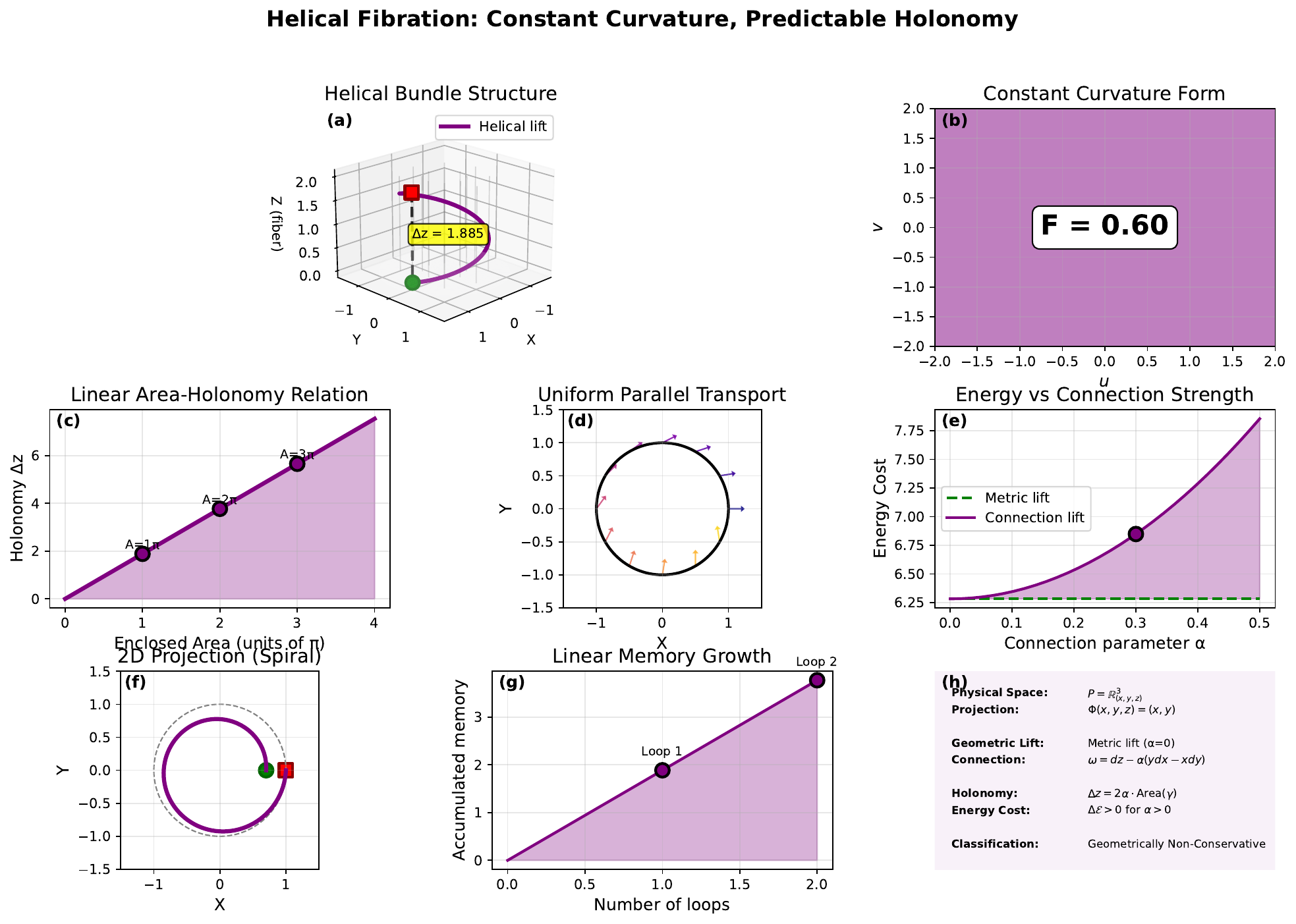}
\caption{\textbf{Helical fibration: constant curvature, predictable holonomy.} 
(a) The helical bundle structure shows a uniform rise with constant pitch, yielding a holonomy $\Delta z = 1.885$ for the unit circle path shown. 
(b) The constant curvature form is $F = 2\alpha\, dx \wedge dy$ (with $\alpha = 0.3$ here). 
(c) The area–holonomy relationship is linear, $\Delta z = 2\alpha \cdot \text{Area}(\gamma)$. 
(d) Parallel transport results in a uniform rotation of tangent vectors. 
(e) The energy–connection trade‑off: increasing $\alpha$ yields more memory at a higher energetic cost. 
(f) The 2D projection shows a characteristic spiral pattern. 
(g) Memory accumulates linearly with each loop, adding $\Delta z = 2\pi\alpha R^2$ for a circle of radius $R$. 
System classification: geometrically nonconservative (fibration with a curved connection).}
\label{fig:helical}
\end{figure}

\subsection{Cylindrical Fibration: non‑simply‑connected base without holonomy}
\label{subsec:cylindrical}

Path‑dependent memory requires curvature, not merely topological intricacy.

\paragraph{Bundle structure.}
Let
\[
P \;=\; \bigl(\mathbb R^{2}\setminus\{0\}\bigr)\times S^{1},
\qquad
C \;=\; \mathbb R^{2}\setminus\{0\},
\qquad
\Phi\bigl(x,y,\vartheta\bigr)=(x,y).
\]
The fibre is $S^{1}$ (angle coordinate~$\vartheta$). The total space is a trivial product bundle.

\paragraph{Flat connection}
Set $\omega=d\vartheta$ on $P=(\mathbb R^2\!\setminus\!\{0\})\times S^1$.
Then $F=d\omega=0$ globally. Parallel transport yields $\dot\vartheta\equiv 0$,
so $\Delta\vartheta=0$ for all closed loops, contractible or not.

\paragraph{Holonomy and energetics.}
For a closed cognitive loop, horizontality imposes \(\dot\vartheta(t)=0\).
Hence $\Delta\vartheta=0$ for all closed loops $\gamma$, independent of winding number. Because horizontal lifts never move in the fibre direction, metric‑lift cost equals the geometric minimum.

\paragraph{Classification.}
The cylindrical fibration is therefore conditionally conservative: a non‑simply‑connected base space with a flat connection that stores no path history.

\begin{figure}[H]
  \centering
  \includegraphics[width=\textwidth]{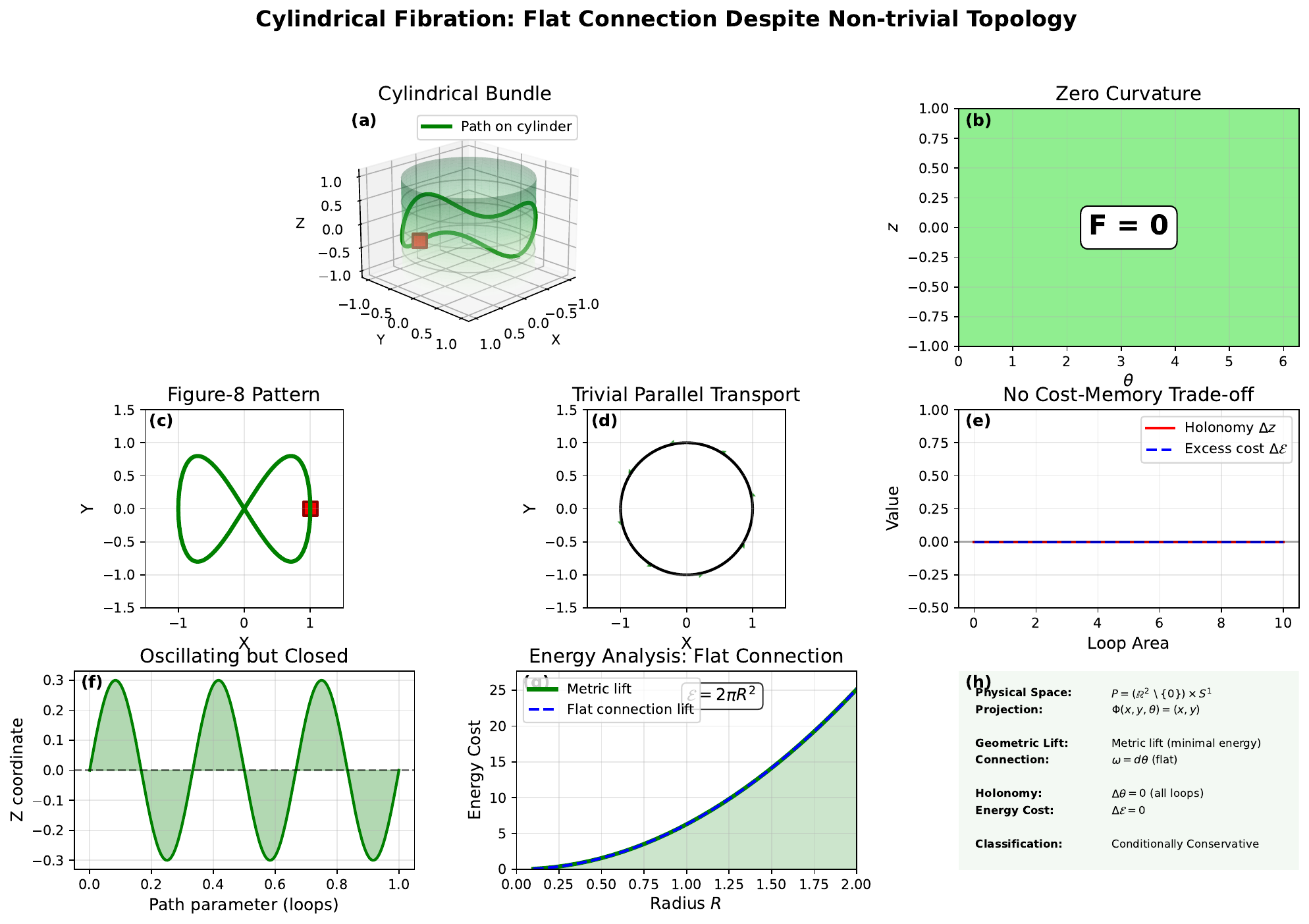}
  \caption{\textbf{Cylindrical fibration with flat connection.}
  (a) Product bundle $(\mathbb R^{2}\setminus\{0\})\times S^{1}$.
  (b) Curvature $F=0$ everywhere.
  (c) A figure‑8 cognitive loop and a loop encircling the
      origin both lift to closed paths in $P$.
  (d) Fibre angle \(\vartheta(t)\) remains constant, so holonomy
      $\Delta\vartheta=0$.
  (e) Metric‑lift energy equals the theoretical minimum; there is no
      cost–memory trade‑off.}
  \label{fig:cylindrical}
\end{figure}

\subsection{Twisted Fibration: hybrid curvature and nonlinear memory}

To model more complex, spatially varying memory effects, consider the fibration $P = \mathbb{R}^3(x,y,z) \to C = \mathbb{R}^2(x,y)$, with connection one‑form
\begin{equation}
\omega = dz - (\alpha + \beta\cos\theta)(x\,dy - y\,dx),
\end{equation}
where $\theta = \arctan(y/x)$, $\alpha$ is a constant drift and $\beta$ a variable twist. In polar coordinates $(r,\theta)$, $\omega = dz - r^2(\alpha + \beta\cos\theta)\,d\theta$.
The curvature is
\begin{equation}
F = d\omega = 2(\alpha + \beta\cos\theta) \,dx \wedge dy .
\end{equation}
For a circular trajectory centred at the origin, the $\beta$ term integrates to zero and the net holonomy is $\Delta z = 2\pi\alpha R^2$. For off‑centre paths, the $\beta$ term contributes, creating position‑dependent memory.

\begin{figure}[H]
\centering
\includegraphics[width=\textwidth]{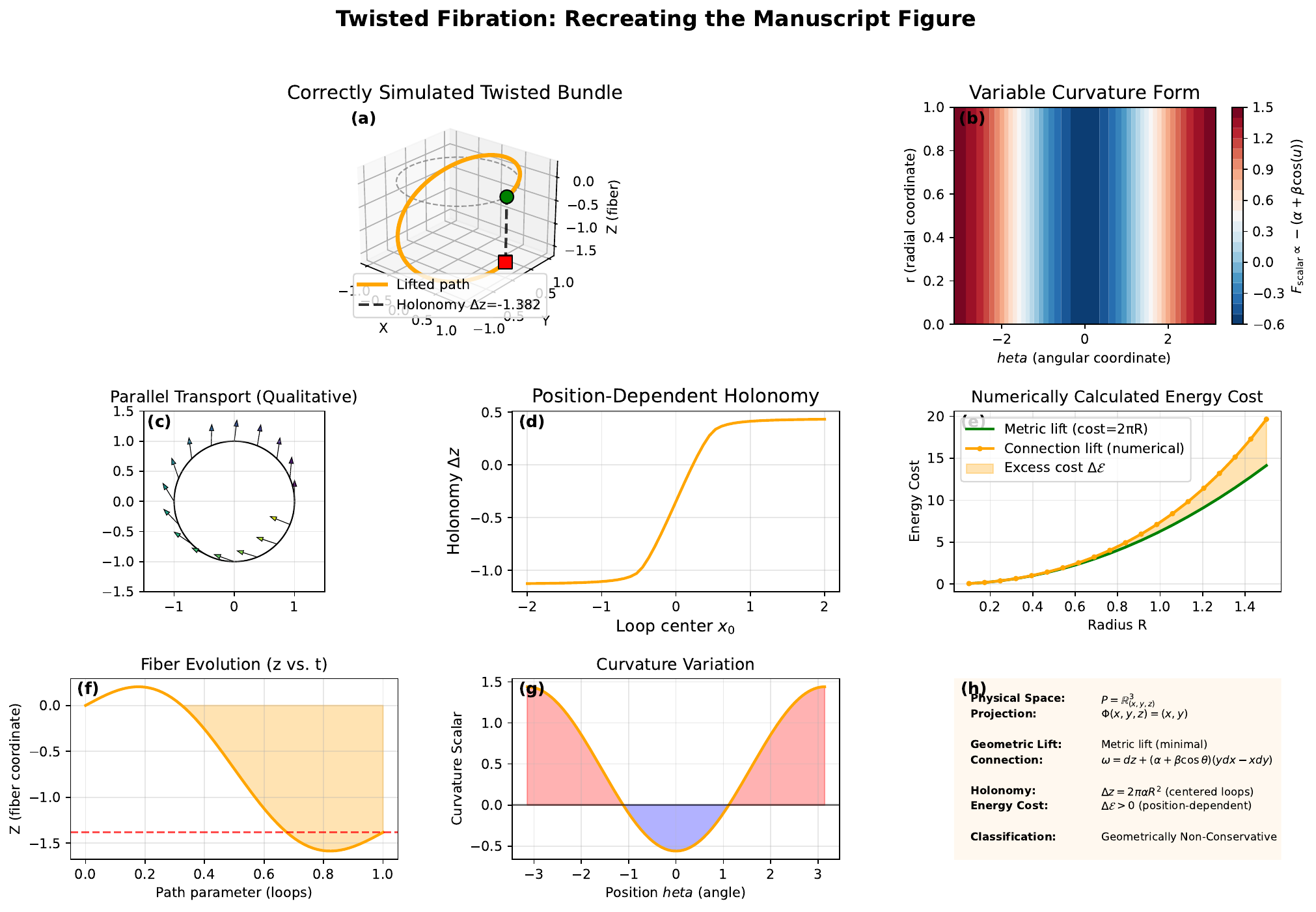}
\caption{\textbf{Twisted fibration: hybrid curvature creates nonlinear memory.}
(a) Twisted bundle structure with net vertical displacement $\Delta z = -1.382$.
(b) Variable curvature $F \propto (\alpha + \beta\cos \theta)$ creates position dependence.
(c) Parallel transport does not return vectors to their initial orientation.
(d) Off‑centre loops show complex holonomy landscapes.
(e) Energy increases nonlinearly compared to the metric lift.
(f) Fibre coordinate shows a linear drift from $\alpha$ plus sinusoidal modulation from $\beta$.
(g) Regions of stronger and weaker curvature alternate with $\theta$.
System classification: geometrically nonconservative (curved connection).}
\label{fig:twisted}
\end{figure}

\section{Integrated Intentional Dynamics: cost-aware goal selection}
\label{sec:intentional}

This section couples the energetic picture on $(P,G)$ to goal evolution on $I$ through the $P\!\to\!C\!\to\!I$ chain. The aim is to show, in a compact and type‑correct way, how local travel cost in $P$ biases goals in $I$ and how this plays out in the strip–sine example.

\subsection{The \texorpdfstring{$P\text{–}C\text{–}I$}{P–C–I} chain with energetic feedback}
At the physical level, trajectories pay the squared‑speed cost
$\mathcal{E}[\gamma]=\int\|\dot u\|_{G}^{2}\,dt$. Among all lifts, the metric lift is instantaneously optimal, and any memory‑bearing deviation (vertical motion) incurs a strictly positive excess; this is the cost–memory duality developed in §§3.1–3.2. 
To transmit this information “upwards,” let $s=\Psi(c)\in I$ and define the push‑forward operator on $T_sI$
\[
M(c):=D\Psi_c\,g_C^{-1}(D\Psi_c)^{\top}g_I,
\]
which is symmetric positive‑definite under the standing rank assumptions. The cost‑aware intentional dynamics we standardize on (natural‑gradient preconditioning on $I$) are
\begin{equation}\label{eq:intentional_dynamics_pullback}
  \dot s \;=\; -\,\mu_I\,M(c)^{-1}\,\nabla_I V(s)\;-\;\eta_P\,D\Psi_c\,\nabla_C E_{\mathrm{loc}}(c),
\end{equation}
with gains $\mu_I,\eta_P>0$ and gradients taken with respect to $g_I,g_C$ (so both terms lie in $T_sI$). The first term descends the intentional potential along the natural gradient on $I$; the second term pushes back local cognitive cost into $I$ via $D\Psi_c:T_cC\to T_sI$. 

To make the $C$–term a bona‑fide gradient, we use the metric‑lift surrogate
\[
M_C(c):=D\Phi\,G^{-1}D\Phi^{\top},
\qquad
E_{\mathrm{loc}}(c):=\mathbb{E}_{\dot c\sim\pi(\cdot\,|\,c)}\big[\dot c^{\top}M_C(c)^{-1}\dot c\big],
\]
so that $\nabla_C E_{\mathrm{loc}}(c)$ points toward cognitively expensive regions; a practical analytic proxy is $E_{\mathrm{loc}}(c)=\operatorname{tr}\big(M_C(c)^{-1}\big)$ (uniform directional average). 

\subsection{Strip–sine: emergence of lazy but effective behaviour}
For the strip–sine system of §6.1 with $\Phi(u,v)=(u,v+\kappa\sin u)$, take $g_C=I_2$, $g_I=1$, $G=I_2$, $\Psi(c)=s=c_1^2+c_2^2$, and $V(s)=\tfrac12(s-s_\star)^2$. Then
\[
D\Psi_c=[\,2c_1\;\;2c_2\,],\qquad M(c)=D\Psi_cD\Psi_c^\top=4(c_1^2+c_2^2)=4s,
\]
and, with $D\Phi=\begin{bmatrix}1&0\\ \kappa\cos c_1&1\end{bmatrix}$,
\[
M_C(c)^{-1}=
\begin{bmatrix}
1 & -\,\kappa\cos c_1\\[2pt]
-\,\kappa\cos c_1 & 1+\kappa^2\cos^2 c_1
\end{bmatrix},
\quad
E_{\mathrm{loc}}(c)=\operatorname{tr}M_C^{-1}=2+\kappa^2\cos^2 c_1,
\quad
\nabla_C E_{\mathrm{loc}}=(-\kappa^2\sin(2c_1),\,0).
\]
Substitution into \eqref{eq:intentional_dynamics_pullback} yields the scalar ODE
\[
\dot s
=\,-\,\mu_I\,\frac{1}{4s}\,(s-s_\star)\;+\;2\,\eta_P\,\kappa^{2}\,c_1\,\sin(2c_1),
\]
which makes two effects explicit: (i) \emph{natural‑gradient scaling} of goal descent by $1/(4s)$; (ii) \emph{cost pushback} that steers $s$ away from $c_1$–regions with large $|\cos c_1|$. The area–holonomy and cost scalings used below come directly from §6.1’s prescribed‑dynamics analysis. 

For a circular cognitive loop of radius $R$ one has $\Delta h=2\alpha\pi R^2$ and the excess energy
$\Delta\mathcal{E}=2\pi\alpha^2R^4=(\Delta h)^2/(2\pi)$; see Fig.\ref{fig:strip_sine} for the linear area–holonomy law and the quadratic cost–memory relation. Hence, for a fixed total holonomy $H=\sum_{i=1}^m\Delta h_i$ accumulated over $m$ loops,
\[
\sum_{i=1}^m\Delta\mathcal{E}_i
=\frac{1}{2\pi}\sum_{i=1}^m(\Delta h_i)^2
\;\ge\;\frac{H^2}{2\pi\,m},
\]
with equality when the $\Delta h_i$ are equal (Cauchy–Schwarz). The dynamics \eqref{eq:intentional_dynamics_pullback} therefore favour decomposing one large loop into several smaller ones when exploration is useful but energy is penalised, while the $-\eta_P D\Psi_c\nabla_C E_{\mathrm{loc}}$ term suppresses holonomy when memory provides no benefit.

\section{Reflective TAS (rTAS): self‑referential agents}\label{sec:rTAS}

\paragraph{Motivation.}
Living systems also carry internal models that co‑evolve with their interaction dynamics. Temporalising self‑reference resolves apparent paradoxes and motivates augmenting TAS with a model manifold so that perception–action and model–updating are treated in one geometric object. In rTAS, agents can trade physical effort against model change while retaining the TAS energy–holonomy logic. For a recent synthesis on time, self‑reference and self‑modification across biology and computation, see \cite{abramsky2025openquestionstimeselfreference}, where the authors distinguish natural time from representational time and show how self‑reference is unwound by explicit temporal structure.
\emph{Visual overview.}  The single‑agent reflective mechanism—instantaneous effort split, the $\lambda$ trade‑off, and the TAS limit—is summarised in \cref{fig:rtas1}.  How “re‑entry’’ is implemented geometrically via cross‑curvatures is shown in \cref{fig:rtas2}.  The energetic consequences and the empirical cost–memory frontier appear in \cref{fig:rtas3}.  Two minimal working channels used throughout (projective and connection) are documented in \cref{fig:rtasA,fig:rtasB}.  Brief two‑agent illustrations (for self‑reference at the population level) are collected in \cref{fig:ma_exp1,fig:ma_exp2,fig:ma_exp3,fig:ma_exp4}.

\subsection{Construction: adding a model manifold and reflective projections}

\begin{definition}[Reflective TAS]\label{def:rTAS}
Let $(P,G)$, $C$, $I$, and the submersions $\Phi:P\!\to\!C$, $\Psi:C\!\to\!I$ be as in \cref{def:TAS}. Let $(M,H)$ be a finite‑dimensional Riemannian model manifold. Define the extended spaces
\[
\widehat P := P\times M,\qquad \widehat C := C\times M,\qquad \widehat I := I\times M.
\]
A \emph{reflective TAS (rTAS)} specifies a smooth family of projections $\{\Phi_m:P\!\to\!C,\ \Psi_m:C\!\to\!I\}_{m\in M}$ that assemble into
\[
\widehat\Phi(p,m)=\big(\Phi_m(p),\,m\big),\qquad
\widehat\Psi(c,m)=\big(\Psi_m(c),\,m\big),
\]
with the rank condition
\[
\operatorname{rank}\!\left[\,D\Phi_m(p)\ \ \ \partial_m\Phi_m(p)\,\right]=\dim C\quad\text{for all }(p,m).
\]
\label{def:rtas}
\end{definition}

\begin{proposition}[Consistency]\label{prop:rTASreduces}
If $M$ is a singleton or $\partial_m\Phi_m\equiv 0$, then $\Phi_m$ is independent of $m$ and
$\widehat\Phi(p,m)=(\Phi(p),m)$; hence rTAS reduces to TAS.
\end{proposition}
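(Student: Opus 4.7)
The statement is a definitional unpacking, so the plan is to split into the two hypotheses and verify that in each case the rTAS data becomes canonically equivalent to a TAS structure on $(P,C,I)$ together with a trivial $M$-factor. I would first note that both hypotheses force $\Phi_m$ to be a single map $\Phi:P\to C$ independent of $m$, and then check that $\widehat\Phi$ reduces to $\Phi\times\mathrm{id}_M$, so that all subsequent rTAS constructions (lifts, connections, metrics) factor through $\Phi$ in the $P$-direction and are trivial in the $M$-direction.

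For the first case, if $M=\{m_0\}$, then $\widehat P=P\times\{m_0\}\cong P$ and similarly for $\widehat C,\widehat I$; the family $\{\Phi_m\}$ has a single member $\Phi:=\Phi_{m_0}$, and under these canonical identifications $\widehat\Phi(p,m_0)=(\Phi(p),m_0)$ is literally $\Phi$. For the second case, assume $M$ is connected (or argue component-wise). Then $\partial_m\Phi_m\equiv 0$, together with smoothness of the family $(p,m)\mapsto\Phi_m(p)$, implies that for each fixed $p$ the curve $m\mapsto\Phi_m(p)$ has vanishing derivative, hence is constant on $M$. Defining $\Phi(p):=\Phi_{m_0}(p)$ for any fixed $m_0\in M$ gives a single smooth map, and by construction $\widehat\Phi(p,m)=(\Phi(p),m)$.

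Next I would verify that $\Phi$ inherits the submersion and constant-rank properties required by \cref{def:TAS}. Under the hypothesis $\partial_m\Phi_m\equiv 0$, the augmented Jacobian $[D\Phi_m(p)\ \ \partial_m\Phi_m(p)]$ reduces to $[D\Phi(p)\ \ 0]$, so the rank condition $\operatorname{rank}[D\Phi_m(p)\ \ \partial_m\Phi_m(p)]=\dim C$ from \cref{def:rtas} collapses to $\operatorname{rank}D\Phi(p)=\dim C$, i.e. $\Phi$ is a surjective submersion of constant rank. The same argument applied to $\Psi_m$ gives a single intentional projection $\Psi:C\to I$ with the required rank. Thus $(P,C,I,\Phi,\Psi)$ is a TAS in the sense of \cref{def:TAS}, and the rTAS reduces to it times an inert copy of $M$.

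The only subtlety (and the main place where care is needed) is the connectedness assumption in the second case: without it, $\partial_m\Phi_m\equiv 0$ only yields constancy on each connected component, so $\Phi_m$ could differ across components. I would either state connectedness of $M$ as an implicit hypothesis (consistent with the running assumption of nondegenerate Riemannian model manifolds) or weaken the conclusion to ``on each connected component of $M$.'' Apart from this remark, the proof is a direct unpacking of \cref{def:rtas} with no deeper geometric input required.
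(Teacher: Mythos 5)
Your proof is correct; the paper itself offers no proof of this proposition, treating it as an immediate consequence of \cref{def:rTAS}, and your argument is exactly the definitional unpacking that is intended. Your observation that $\partial_m\Phi_m\equiv 0$ only forces constancy of $m\mapsto\Phi_m(p)$ on connected components of $M$ is a legitimate refinement the paper glosses over, and stating connectedness as an implicit hypothesis (or arguing component-wise) is the right fix.
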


\subsection{Reflective metric lift and instantaneous optimality}
With the block metric $\widehat G=G\oplus\lambda H$, the reflective constraint
\[
D\Phi_m(p)\,\dot u + \partial_m\Phi_m(p)\,\dot m=\dot c
\]
forces the instantaneous trade between physical velocity $\dot u$ and model velocity $\dot m$.  The closed‑form solution in \cref{def:reflective_lift} allocates effort between these channels (\cref{fig:rtas1}\,b) and interpolates from flexible (small~$\lambda$) to frozen model (large~$\lambda$), see panels~(a,c–e) of \cref{fig:rtas1}.

Equip $\widehat P$ with the block metric $\widehat G:= G\oplus \lambda H$ with $\lambda>0$. Given $(p,m)\in\widehat P$ and a desired cognitive velocity $\dot c\in T_{\Phi_m(p)}C$, rTAS imposes the reflective constraint
\begin{equation}\label{eq:reflective_constraint}
D\Phi_m(p)\,\dot u\;+\;B_{(p,m)}\,\dot m \;=\; \dot c, \qquad B_{(p,m)}:=\partial_m\Phi_m(p).
\end{equation}

\begin{definition}[Reflective metric lift]\label{def:reflective_lift}
Let $A_{(p,m)}:=[\,D\Phi_m(p)\ \ B_{(p,m)}\,]$ and $\widehat G^{-1}=\operatorname{diag}(G^{-1},\,\lambda^{-1}H^{-1})$. The reflective metric lift is
\begin{equation}\label{eq:block_lift_formula}
\begin{bmatrix}\dot u^\star\\ \dot m^\star\end{bmatrix}
=
\widehat G^{-1}A_{(p,m)}^{\!\top}
\Big(A_{(p,m)}\,\widehat G^{-1}A_{(p,m)}^{\!\top}\Big)^{-1}\dot c,
\end{equation}
which uniquely minimises $\|\dot u\|_{G}^{2}+\lambda\|\dot m\|_{H}^{2}$ subject to \eqref{eq:reflective_constraint}.
\end{definition}

\begin{lemma}[Instantaneous optimality]\label{lem:instant_opt}
Under the rank assumption in \cref{def:rTAS}, the solution \eqref{eq:block_lift_formula} exists and is unique. Moreover, for any feasible $(\dot u,\dot m)$,
\[
\|\dot u^\star\|_{G}^{2}+\lambda\|\dot m^\star\|_{H}^{2}
\ \le\
\|\dot u\|_{G}^{2}+\lambda\|\dot m\|_{H}^{2},
\]
with equality iff $(\dot u,\dot m)=(\dot u^\star,\dot m^\star)$.
\end{lemma}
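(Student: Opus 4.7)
The plan is to view \eqref{eq:block_lift_formula} as the weighted least-norm solution of an underdetermined linear system on the product tangent space $T_pP\oplus T_mM$ equipped with the block inner product induced by $\widehat G=G\oplus\lambda H$, and then transcribe, almost verbatim, the reasoning used in \cref{prop:metriclift} and \cref{rem:pythagorean} into this enlarged ambient space. Concretely, I would first set $\xi:=(\dot u,\dot m)$ and $A:=A_{(p,m)}$, and rewrite the constraint \eqref{eq:reflective_constraint} as $A\xi=\dot c$, so that
\[
\|\dot u\|_G^{2}+\lambda\|\dot m\|_H^{2}=\|\xi\|_{\widehat G}^{2},
\]
and the lemma becomes the statement that $\xi^\star:=(\dot u^\star,\dot m^\star)$ given by \eqref{eq:block_lift_formula} is the unique $\widehat G$-minimum-norm element of the affine set $\{\xi:A\xi=\dot c\}$.

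For existence and uniqueness I would invoke the positive-definiteness lemma from the fibration section in its block form. The rank hypothesis in \cref{def:rTAS} is precisely that $A$ has full row rank $\dim C$. Since $G>0$, $H>0$ and $\lambda>0$, the block inverse $\widehat G^{-1}=\operatorname{diag}(G^{-1},\lambda^{-1}H^{-1})$ is symmetric positive-definite; the same quadratic-form argument $y^{\transpose}(A\,\widehat G^{-1}A^{\transpose})y=(A^{\transpose}y)^{\transpose}\widehat G^{-1}(A^{\transpose}y)>0$ for $y\neq 0$ (using surjectivity of $A$) then shows $A\,\widehat G^{-1}A^{\transpose}$ is invertible. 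This makes \eqref{eq:block_lift_formula} well-defined, and a direct check $A\xi^\star=A\,\widehat G^{-1}A^{\transpose}(A\,\widehat G^{-1}A^{\transpose})^{-1}\dot c=\dot c$ confirms feasibility. Uniqueness is established in the next step.

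For the inequality I would reuse the Pythagorean argument of \cref{rem:pythagorean} in the $\widehat G$-inner product. By construction, $\widehat G\xi^\star=A^{\transpose}(A\,\widehat G^{-1}A^{\transpose})^{-1}\dot c\in\operatorname{range}(A^{\transpose})$, so $\xi^\star$ is $\widehat G$-orthogonal to $\ker A$. Any other feasible $\xi$ can be written $\xi=\xi^\star+w$ with $w\in\ker A$, and orthogonality gives
\[
\|\xi\|_{\widehat G}^{2}=\|\xi^\star\|_{\widehat G}^{2}+\|w\|_{\widehat G}^{2}\;\ge\;\|\xi^\star\|_{\widehat G}^{2},
\]
with equality iff $w=0$, i.e.\ iff $(\dot u,\dot m)=(\dot u^\star,\dot m^\star)$; this also proves uniqueness of the minimiser.

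The only place that needs genuine attention beyond a straight lift of the earlier fibration argument is verifying that the rank hypothesis on $[\,D\Phi_m\ \ \partial_m\Phi_m\,]$ supplied by \cref{def:rTAS} is exactly the surjectivity condition on the composite block operator $A$, so that the positive-definiteness lemma applies as stated; once this identification is made, existence, uniqueness and optimality follow by the same orthogonal-projection calculus that justified $\mathcal L_{\text{metric}}$.
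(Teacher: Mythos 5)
Your proposal is correct and follows essentially the same route as the paper, which proves this via the block weighted-pseudoinverse and $\widehat G$-orthogonal (Pythagorean) decomposition in Lemma~\ref{lem:block_pythagoras} of the appendix. Your only variation is cosmetic: you read off $\widehat G\xi^\star\in\operatorname{range}(A^{\transpose})$ directly from the closed-form expression rather than deriving it via Lagrange multipliers, and you spell out the positive-definiteness of $A\,\widehat G^{-1}A^{\transpose}$ explicitly, both of which are consistent with the paper's argument.
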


\begin{remark}[Effort–learning trade]
The cross term $B_{(p,m)}$ allows part of $\dot c$ to be realised by model motion $\dot m$, explicitly trading physical effort for model change. If $B\equiv 0$, then $\dot m^\star=0$ and the TAS metric lift is recovered.
\end{remark}

\begin{figure}[htbp]
\centering
\includegraphics[width=\textwidth]{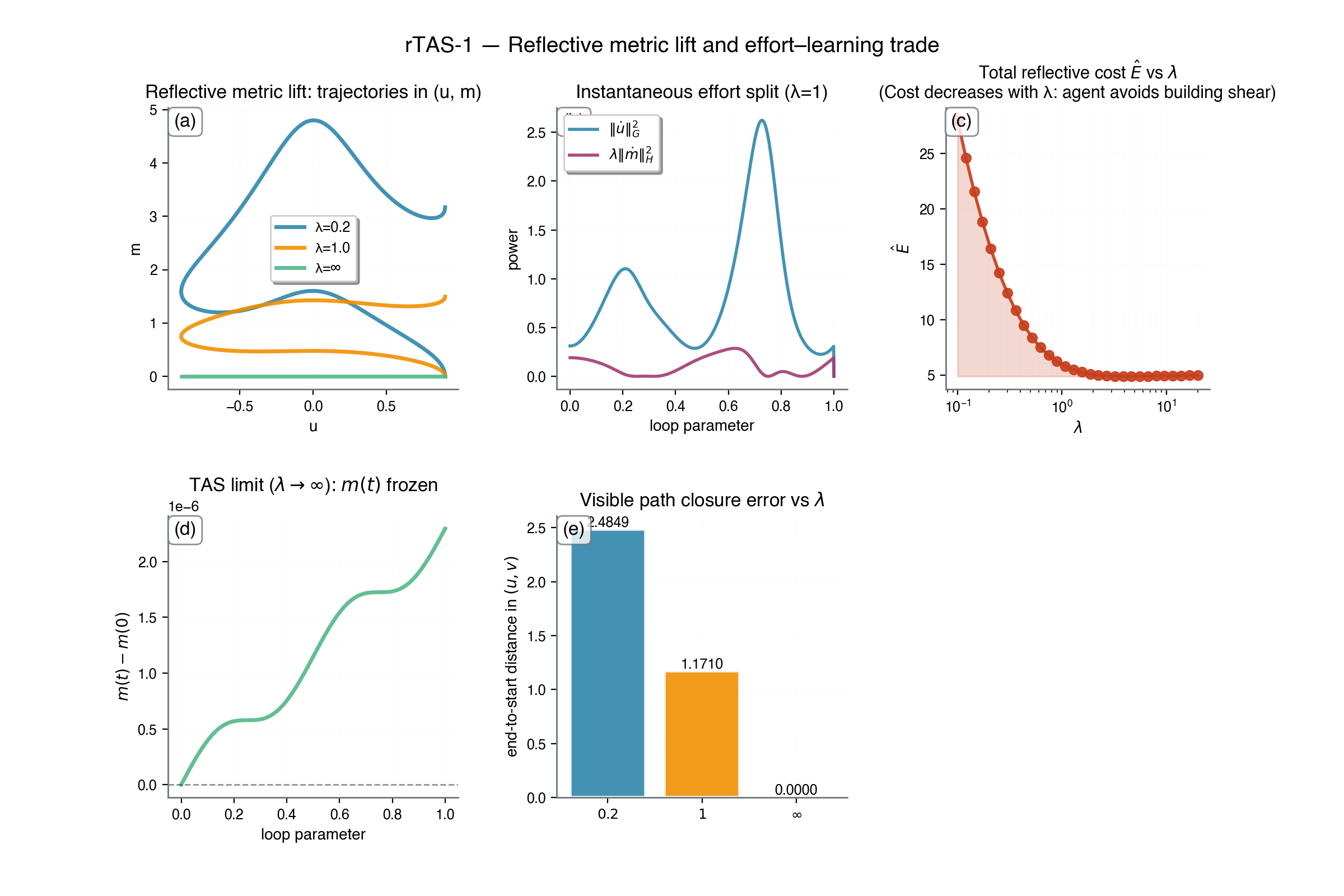}
\caption{\textbf{Reflective metric lift and effort–learning trade.}
Panels (a–e) illustrate the reflective lift in \cref{def:reflective_lift} (cost in \cref{def:rCost}).
\textbf{(a)} Trajectories in $(u,m)$ for $\lambda\!\in\!\{0.2,1,\infty\}$: as $\lambda$ grows the model motion is suppressed and the TAS limit is approached.
\textbf{(b)} Instantaneous effort split for $\lambda{=}1$: physical power $\|\dot u\|_G^2{+}\|\dot v\|_G^2$ (blue) and model power $\lambda\|\dot m\|_H^2$ (maroon) are allocated by the reflective lift.
\textbf{(c)} Total reflective cost $\hat E$ versus $\lambda$ (log abscissa): here $\hat E$ \emph{decreases} and saturates because penalising model motion suppresses the build‑up of shear, reducing activity in both channels for this task/channel. For a detailed discussion see Sec.( \ref{subsec:lambda-monotonicity}).
\textbf{(d)} TAS limit $\lambda\!\to\!\infty$: $m(t)$ is effectively frozen (numerical drift at $10^{-6}$ scale).
\textbf{(e)} Visible path closure error in $(u,v)$ versus $\lambda$: non‑closure at small~$\lambda$ disappears in the TAS limit.}

\label{fig:rtas1}
\end{figure}

\subsection{Reflective connections and block curvature}
Writing the reflective connection in block form turns \emph{re‑entry} into geometry: $F_{pm}$ and $F_{mp}$ quantify how model motion induces physical holonomy and vice versa, cf.\ \cref{fig:rtas2}\,(a–c,e).  The block diagram in panel~(d) summarises the four curvature channels.

Let $\omega_m$ be a connection one‑form for $\Phi$ depending smoothly on $m$, and let $\Gamma_{(p,m)}:T_mM\!\to$ fibre coordinates specify how model motion affects horizontality. A reflective horizontal lift at $(p,m)$ solves
\[
\min_{\dot u,\dot m}\ \|\dot u\|_{G}^{2}+\lambda\|\dot m\|_{H}^{2}
\quad \text{s.t.}\quad
\begin{cases}
D\Phi_m(p)\,\dot u + B_{(p,m)}\,\dot m = \dot c,\\
\omega_m(\dot u) + \Gamma_{(p,m)}\,\dot m = 0.
\end{cases}
\]
This yields a block connection on $\widehat\Phi$ with curvature
\[
\widehat F
= d\widehat\omega+\widehat\omega\wedge\widehat\omega
=
\begin{pmatrix}
F_{pp} & F_{pm}\\
F_{mp} & F_{mm}
\end{pmatrix}.
\]
\paragraph{Cross-curvatures.}
Writing the reflective connection one-form in block form
$\widehat\omega=\begin{psmallmatrix}\omega_{pp}&\omega_{pm}\\ \omega_{mp}&\omega_{mm}\end{psmallmatrix}$,
the curvature $\widehat F=d\widehat\omega+\widehat\omega\wedge\widehat\omega$ has components
$F_{pp},F_{pm},F_{mp},F_{mm}$. Here $F_{pp}$ is the usual physical curvature (geometric memory);
$F_{mm}$ is model-space curvature (meta-memory); $F_{pm}$ measures how model variations twist
the physical horizontality (model change \emph{induces} physical holonomy); $F_{mp}$ measures the
reciprocal effect (physical motion \emph{induces} meta-holonomy). The projective-channel example
makes $F_{pm}\neq 0$ explicit.

\begin{proposition}[Holonomy decomposition]\label{prop:block_hol}
Let $\gamma\subset C$ be a small closed loop and $\widehat\gamma$ its reflective horizontal lift. For Abelian fibres,
\[
\Delta u_{\mathrm{phys}}=\iint_{S_\gamma}\!F_{pp}+\iint_{S_\gamma}\!F_{pm},
\qquad
\Delta m_{\mathrm{model}}=\iint_{S_\gamma}\!F_{mp}+\iint_{S_\gamma}\!F_{mm}.
\]
Thus cross‑curvatures implement re‑entry: model changes can induce physical holonomy and vice versa.
\end{proposition}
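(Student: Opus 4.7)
The plan is to reduce the identity to a blockwise application of the standard Abelian curvature--holonomy argument used in Proposition~\ref{prop:smallloop}. Under assumption (A1) the reflective structure group is Abelian, so $\widehat\omega\wedge\widehat\omega\equiv 0$ and $\widehat F=d\widehat\omega$; hence holonomy around a small closed loop equals the ordinary surface integral of $\widehat F$ over a spanning surface, with no path ordering.

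First I would split the connection one-form $\widehat\omega$ according to the physical/model decomposition of the fibre Lie algebra, writing $\widehat\omega=\widehat\omega_{\mathrm{phys}}\oplus\widehat\omega_{\mathrm{model}}$, and then further decompose each half along the $dc$ and $dm$ directions of $\widehat C=C\times M$. Taking $d$ of this decomposition produces, block by block, the four components $F_{pp},F_{pm},F_{mp},F_{mm}$ appearing in the statement, with $\widehat F_{\mathrm{phys}}=F_{pp}+F_{pm}$ and $\widehat F_{\mathrm{model}}=F_{mp}+F_{mm}$ as two-forms on $\widehat C$.

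Next I would compute the fibre displacements along the reflective horizontal lift $\widehat\gamma$ by the usual argument: the horizontality condition $\widehat\omega(\dot{\widehat\gamma})=0$ expresses each fibre velocity as minus $\widehat\omega$ applied to the base velocity $(\dot c,\dot m)$, where $\dot m$ is itself slaved to $\dot c$ by the horizontality and projection constraints of \cref{def:reflective_lift}. Integrating around the lifted loop and applying Stokes to a spanning surface $\widehat S_\gamma\subset\widehat C$ with boundary the lifted base loop yields $\Delta u_{\mathrm{phys}}=\iint_{\widehat S_\gamma}\widehat F_{\mathrm{phys}}$ and $\Delta m_{\mathrm{model}}=\iint_{\widehat S_\gamma}\widehat F_{\mathrm{model}}$, which by the block split above already give two-term expressions.

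The main obstacle is justifying the reduction $\iint_{\widehat S_\gamma}\widehat F_{\mathrm{phys}}=\iint_{S_\gamma}F_{pp}+\iint_{S_\gamma}F_{pm}$ on a surface $S_\gamma\subset C$ rather than on $\widehat S_\gamma\subset\widehat C$. For a small loop of area $|A|$ in $C$ the induced $m$-motion along the horizontal lift is itself of order $|A|$, so $\widehat S_\gamma$ is a graph over $S_\gamma$ and the mixed two-forms $F_{pm},F_{mp}$ restrict to genuine two-forms on $S_\gamma$ via the slaved $\dot m$; the higher-order discrepancy is absorbed into the small-loop expansion exactly as in Proposition~\ref{prop:smallloop}. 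With this identification the stated block holonomy decomposition follows.
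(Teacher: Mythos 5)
Your proof is correct and takes essentially the same route the paper relies on: under (A1) the block curvature is $\widehat F=d\widehat\omega$, holonomy is an unordered surface integral, and the four blocks $F_{pp},F_{pm},F_{mp},F_{mm}$ split that integral into the stated physical and model contributions. The paper gives no standalone proof of this proposition, but the identical blockwise Abelian Stokes argument is invoked in the appendix proof of Theorem~\ref{thm:extended_tradeoff} (``each holonomy component is given by a surface integral of the corresponding curvature block (and cross-curvature) plus $o(\mathrm{Area})$ terms''), and your final paragraph --- treating $\dot m$ as slaved to $\dot c$ so that the mixed forms restrict to a surface $S_\gamma\subset C$ up to higher order in the loop area --- supplies precisely the detail the paper glosses over.
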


\begin{figure}[htbp]
\centering
\includegraphics[width=\textwidth]{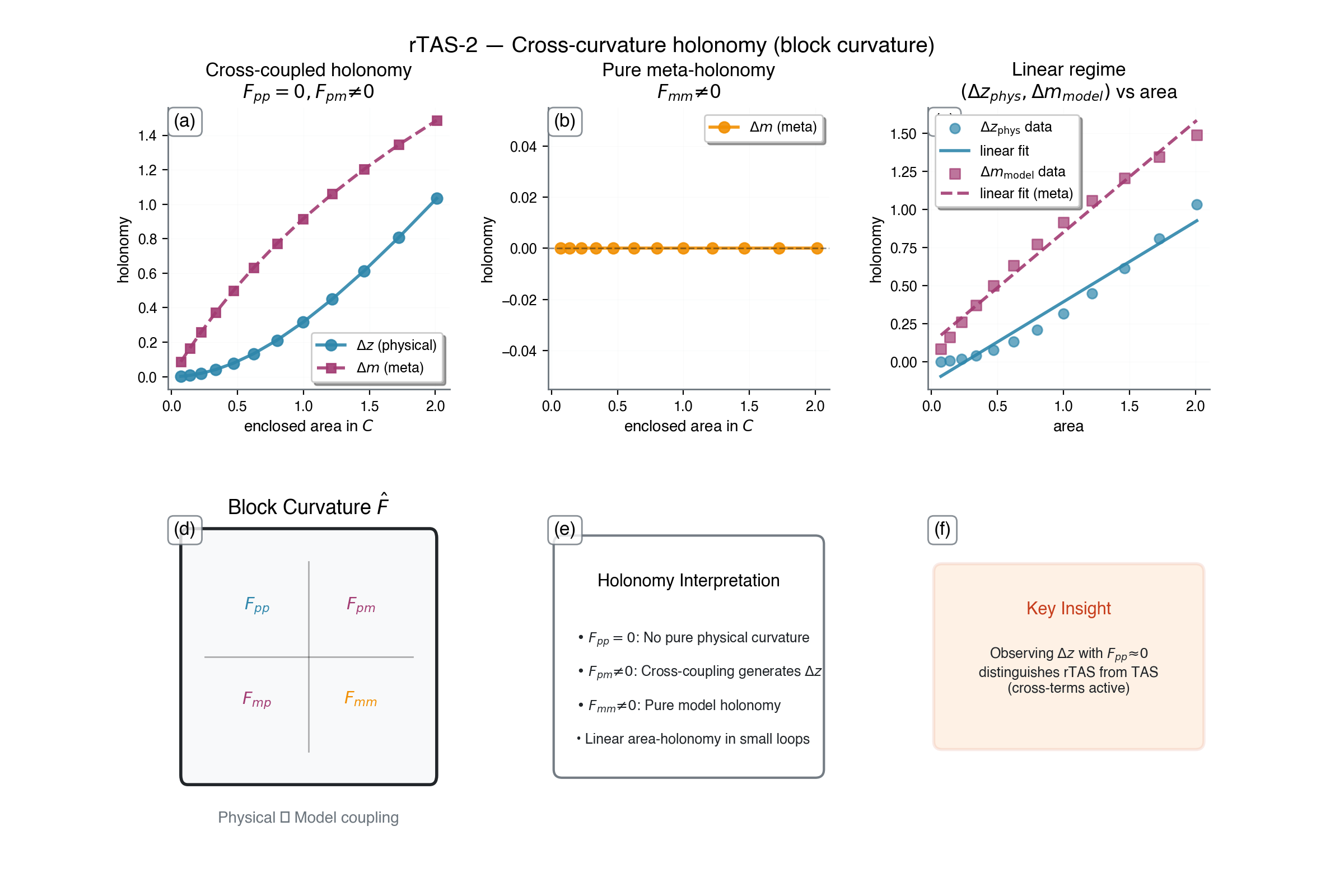}
\caption{\textbf{Cross‑curvature holonomy (block curvature).}
\textbf{(a)} Cross‑coupled holonomy with $F_{pp}=0$ and $F_{pm}\neq 0$: despite vanishing pure physical curvature, the cross‑term generates visible displacement $\Delta z$ (blue) alongside meta‑holonomy $\Delta m$ (magenta), both growing with enclosed area in~$C$ (linear for small loops).
\textbf{(b)} Pure meta‑holonomy with $F_{mm}\neq 0$: $\Delta m$ scales linearly with area while $\Delta z\simeq 0$.
\textbf{(c)} Linear regime: $(\Delta z_{\mathrm{phys}},\Delta m_{\mathrm{model}})$ versus area with least‑squares fits.
\textbf{(d)} Block curvature matrix $\widehat F=\begin{psmallmatrix}F_{pp}&F_{pm}\\ F_{mp}&F_{mm}\end{psmallmatrix}$.
\textbf{(e)} Interpretation of the four channels (pure physical, pure model, and the two cross‑curvatures implementing re‑entry).
\textbf{(f)} Diagnostic: observing $\Delta z$ with $F_{pp}\approx 0$ identifies reflective coupling (rTAS) rather than ordinary TAS.}

\label{fig:rtas2}
\end{figure}

\subsection{Energetics and the extended cost–memory law}
rTAS extends the TAS cost law by pricing \emph{both} physical holonomy and meta‑holonomy.  In small loops the excess cost is quadratic in the combined holonomy vector, as visualised empirically in \cref{fig:rtas3}\,(a).  Panels (b–c) illustrate how cost feedback in $I$ produces “lazy but effective’’ behaviour and how coverage can be achieved at lower reflective cost.

\begin{definition}[Reflective travel cost]\label{def:rCost}
For a lifted trajectory $\widehat\gamma(t)=(p(t),m(t))$ define
\[
\widehat{\mathcal E}[\widehat\gamma]\;=\;\int \Big(\|\dot u(t)\|_{G}^{2}+\lambda\|\dot m(t)\|_{H}^{2}\Big)\,dt.
\]
\end{definition}

\begin{theorem}[Extended cost–memory duality]\label{thm:extended_tradeoff}
Fix a closed cognitive loop $\gamma$. Among all admissible reflective lifts, the reflective metric lift minimises $\widehat{\mathcal E}$. Moreover, in the small‑loop regime there exists a positive‑definite quadratic form $Q$ such that
\[
\widehat{\mathcal E}[\widehat\gamma]-\widehat{\mathcal E}_{\min}[\gamma]
\ \ge\
\begin{bmatrix}\Delta u_{\mathrm{phys}}\\ \Delta m_{\mathrm{model}}\end{bmatrix}^{\!\top}
Q
\begin{bmatrix}\Delta u_{\mathrm{phys}}\\ \Delta m_{\mathrm{model}}\end{bmatrix}
\ +\ o\big(\operatorname{Area}(\gamma)^{2}\big).
\]
Hence excess energy scales quadratically in both physical holonomy and meta‑holonomy.
\end{theorem}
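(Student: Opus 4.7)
The plan is to reduce the theorem to a block‑metric version of the two ingredients already established: the Pythagorean decomposition with respect to the metric lift (Remark~\ref{rem:pythagorean}) and the Cauchy–Schwarz argument used in the proof of Proposition~\ref{prop:smallloop}, now applied on $\widehat P$ with the block metric $\widehat G=G\oplus\lambda H$ and the block curvature $\widehat F$ from Proposition~\ref{prop:block_hol}.

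For the minimisation claim, I would apply Lemma~\ref{lem:instant_opt} pointwise in $t$ to the admissible reflective velocity $(\dot u(t),\dot m(t))$ satisfying the reflective constraint \eqref{eq:reflective_constraint}, and integrate in time. Since the instantaneous integrand $\|\dot u\|_G^2+\lambda\|\dot m\|_H^2$ is minimised uniquely by $(\dot u^\star,\dot m^\star)$ from \eqref{eq:block_lift_formula}, monotonicity of the integral gives $\widehat{\mathcal E}[\widehat\gamma]\ge \widehat{\mathcal E}_{\min}[\gamma]$ with equality iff the feasible velocity coincides with the reflective metric lift almost everywhere.

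For the quadratic lower bound, I would first write any feasible $(\dot u,\dot m)=(\dot u^\star,\dot m^\star)+v_{\mathrm{vert}}$ with $v_{\mathrm{vert}}\in\ker A_{(p,m)}$ (the block vertical subspace), since both sides realise the same $\dot c$. By construction the reflective metric lift is $\widehat G$‑orthogonal to $\ker A$, so the block Pythagorean identity yields
\[
\widehat{\mathcal E}[\widehat\gamma]-\widehat{\mathcal E}_{\min}[\gamma]
=\int_0^T\!\|v_{\mathrm{vert}}(t)\|_{\widehat G}^{\,2}\,dt
=\int_0^T\!\Big(\|v^{\mathrm{phys}}_{\mathrm{vert}}\|_G^{2}+\lambda\|v^{\mathrm{model}}_{\mathrm{vert}}\|_H^{2}\Big)\,dt.
\]
Reparametrising the loop to $T=1$ and applying Cauchy–Schwarz componentwise,
$\int_0^1\!\|v_{\mathrm{vert}}\|_{\widehat G}^{2}\,dt\ \ge\ \|\!\int_0^1 v_{\mathrm{vert}}\,dt\|_{\widehat G}^{2}$,
and identifying the integrated vertical motion with the block Abelian holonomy of Proposition~\ref{prop:block_hol} gives $\int_0^1 v_{\mathrm{vert}}\,dt=(\Delta u_{\mathrm{phys}},\Delta m_{\mathrm{model}})+o(\mathrm{Area}(\gamma))$, the remainder being the usual non‑Abelian/curvilinear correction in the small‑loop regime. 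Setting $Q:=\mathrm{block\text{-}diag}(G,\lambda H)$ restricted to the block vertical subspace at the base point (positive‑definite by assumption on $G,H$ and $\lambda>0$) delivers the stated inequality up to $o(\mathrm{Area}(\gamma)^2)$.

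The main obstacle I expect is bookkeeping rather than a genuinely new estimate: identifying the integrated block vertical velocity with the holonomy pair $(\Delta u_{\mathrm{phys}},\Delta m_{\mathrm{model}})$ requires propagating the block Abelian formula of Proposition~\ref{prop:block_hol} along the horizontal lift and controlling the deviation of the base point along the small loop. In particular, the cross‑curvatures $F_{pm},F_{mp}$ couple the two kernels, so one must verify that the $\widehat G$‑orthogonal splitting is preserved to leading order in $\mathrm{Area}(\gamma)$; the higher‑order corrections then sit cleanly inside the $o(\mathrm{Area}(\gamma)^2)$ remainder, as in the single‑channel case of Proposition~\ref{prop:smallloop}.
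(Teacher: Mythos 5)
Your overall strategy is the same as the paper's: pointwise block Pythagorean decomposition with respect to $\widehat G=G\oplus\lambda H$ (the paper's Lemma~\ref{lem:block_pythagoras}), integration to get part (i), then Cauchy--Schwarz in $L^2([0,1])$ to bound the excess energy below by the squared $\widehat G$-norm of $\int_0^1 v_{\mathrm{vert}}\,dt$, followed by the Abelian area estimate. Part (i) and the energy identity are fine.

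There is, however, one concrete gap in part (ii): you assert that $\int_0^1 v_{\mathrm{vert}}\,dt=(\Delta u_{\mathrm{phys}},\Delta m_{\mathrm{model}})+o(\mathrm{Area}(\gamma))$. Since $v_{\mathrm{vert}}=(\dot u,\dot m)-(\dot u^\star,\dot m^\star)$, its time integral is the holonomy of $\widehat\gamma$ \emph{minus} the holonomy of the reflective metric lift, i.e.\ $(\Delta u_{\mathrm{phys}}-\Delta u^\star_{\mathrm{phys}},\,\Delta m_{\mathrm{model}}-\Delta m^\star_{\mathrm{model}})$. The baseline term is generically $O(\mathrm{Area}(\gamma))$ --- the reflective metric connection can itself have nonzero curvature, so its holonomy is a surface integral of that curvature and is of the same order as the quantity you are trying to isolate. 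It therefore cannot be absorbed into the $o(\mathrm{Area}(\gamma))$ remainder, and after squaring it contaminates the leading quadratic term, not just the $o(\mathrm{Area}^2)$ tail. The paper handles exactly this issue by proving the bound with $Q$ acting on the holonomy \emph{difference} vector (a relative version of the theorem) and then remarking that the absolute form you wrote holds only under the additional hypothesis that the reflective metric connection is flat to first order near the loop. To repair your argument you should either add that hypothesis explicitly or restate the conclusion in the relative form; the rest of your bookkeeping (the $\widehat G$-orthogonality of $\ker A$, the choice $Q=\widehat G$ at the basepoint, and the concern about cross-curvatures $F_{pm},F_{mp}$ entering only at higher order) is consistent with the paper's treatment.
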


\begin{corollary}[Useful limits]\label{cor:rLimits}
(a) $\lambda\!\to\!\infty$ freezes $m$ and recovers TAS. \;
(b) If $B\!=\!\Gamma\!=\!0$ then $F_{pm}\!=\!F_{mp}\!=\!F_{mm}\!=\!0$ and no meta‑holonomy arises. \;
(c) If $F_{pp}\!=\!0$ but $F_{mp}$ or $F_{mm}\!\neq\!0$, memory is stored purely in $M$.
\end{corollary}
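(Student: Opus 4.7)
The plan is to dispatch each of the three limits by plugging into the closed‑form reflective lift of \cref{def:reflective_lift} and the block‑curvature structure preceding \cref{prop:block_hol}. Once those formulas are in hand, the three assertions are largely mechanical.

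For part (a), I would pass to the limit $\lambda\to\infty$ in \eqref{eq:block_lift_formula}. Since $\widehat G^{-1}=\operatorname{diag}(G^{-1},\lambda^{-1}H^{-1})$, the Gram matrix expands as $A\widehat G^{-1}A^{\top}=D\Phi_m\,G^{-1}D\Phi_m^{\top}+\lambda^{-1}B\,H^{-1}B^{\top}$, which converges to the TAS Gram matrix $M_C=D\Phi_m\,G^{-1}D\Phi_m^{\top}$. The model component is $\dot m^{\star}=\lambda^{-1}H^{-1}B^{\top}(A\widehat G^{-1}A^{\top})^{-1}\dot c$, which tends to zero, while the physical component converges to $G^{-1}D\Phi_m^{\top}M_C^{-1}\dot c$, the TAS metric lift of \cref{prop:metriclift}. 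A complementary variational argument is that any feasible pair with $\dot m\not\equiv 0$ pays cost at least $\lambda\|\dot m\|_H^{2}$, which diverges with $\lambda$, so any minimising sequence must freeze $m$; by \cref{lem:instant_opt} the unique minimiser then coincides with the TAS metric lift.

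For part (b), I would read off the block connection one‑form $\widehat\omega$ from the definition of the reflective connection. When $B=\partial_m\Phi_m\equiv 0$ the family $\Phi_m$ is independent of $m$, so $\omega_m$ reduces to a single fixed TAS connection $\omega$ with no $dm$ component; combined with $\Gamma\equiv 0$, which kills the model‑coupled horizontality, $\widehat\omega$ becomes strictly block‑diagonal with $\omega_{pp}=\omega$ pulled back from $P$ and all remaining blocks zero. Substituting into $\widehat F=d\widehat\omega+\widehat\omega\wedge\widehat\omega$ and invoking the Abelian assumption (A1) to eliminate the wedge term, the off‑diagonal and $mm$ blocks vanish identically, so $F_{pm}=F_{mp}=F_{mm}=0$. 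Applying \cref{prop:block_hol} then yields $\Delta m_{\mathrm{model}}=0$, i.e.\ no meta‑holonomy.

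For part (c), I would apply \cref{prop:block_hol} directly. With $F_{pp}=0$ the pure physical contribution to $\Delta u_{\mathrm{phys}}$ vanishes; under the implicit reading $F_{pm}\equiv 0$ (otherwise the cross channel would also drive physical memory and the word \emph{purely} would be unjustified), the physical holonomy is zero, while $\Delta m_{\mathrm{model}}=\iint F_{mp}+\iint F_{mm}\ne 0$, so memory is localised entirely in $M$. The main obstacle I expect is in part (b): justifying that all three non‑$pp$ blocks of $\widehat F$ vanish requires tracking the mixed $dp\wedge dm$ pieces of $d\widehat\omega$ that could otherwise arise from $\omega_m$ depending on $m$. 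The Abelian hypothesis removes the wedge contribution, and $\partial_m\Phi_m\equiv 0$ makes the underlying fibration itself $m$‑independent, forcing $\partial_m\omega_m\equiv 0$; once that observation is in place the remaining blocks collapse by direct computation.
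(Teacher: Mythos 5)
The paper states this corollary without a proof of its own, so the right benchmark is whether your argument follows correctly from \eqref{eq:block_lift_formula}, the block connection construction, and \cref{prop:block_hol}. Parts (a) and (c) do. For (a), your expansion $A\widehat G^{-1}A^{\top}=D\Phi_m G^{-1}D\Phi_m^{\top}+\lambda^{-1}B H^{-1}B^{\top}$ with $\dot m^{\star}=\lambda^{-1}H^{-1}B^{\top}(A\widehat G^{-1}A^{\top})^{-1}\dot c=O(\lambda^{-1})$ and $\dot u^{\star}\to G^{-1}D\Phi_m^{\top}M_C^{-1}\dot c$ is exactly the computation the paper leans on; the appendix (\cref{prop:projective-upper-bound}) does the more careful version for a specific channel, adding a Gr\"onwall step to control $m(t)$ uniformly over a whole loop rather than only $\dot m^{\star}$ pointwise, which is the one refinement your limit argument glosses over. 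For (c) you correctly observe that ``purely'' requires $F_{pm}=0$ in addition to $F_{pp}=0$, since by \cref{prop:block_hol} the cross term alone would still drive $\Delta u_{\mathrm{phys}}$; flagging that implicit hypothesis is an honest and accurate reading.

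Part (b) contains a genuine gap. You assert that $\partial_m\Phi_m\equiv 0$ ``forces $\partial_m\omega_m\equiv 0$.'' It does not: the projection family and the connection family are independent pieces of data, and the paper's own connection channel is a direct counterexample --- there $\Phi(x,y,z)=(x,y)$ is independent of $m$, so $B\equiv 0$, yet $\omega_m=dz-\alpha(m)(y\,dx-x\,dy)$ depends on $m$, and the mixed term $\partial_m\omega_m$ contributing $dm\wedge(\cdot)$ pieces to $d\widehat\omega$ yields $F_{pm}\neq 0$. Hence $B=\Gamma=0$ alone does not make $\widehat\omega$ block-diagonal, and the literal claim $F_{pm}=0$ needs the additional hypothesis $\partial_m\omega_m\equiv 0$ (equivalently, one must read ``$\Gamma=0$'' as saying that the entire $dm$-component and $m$-dependence of the block one-form vanish). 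Note also that the softer conclusion ``no meta-holonomy arises'' has a cleaner route you did not take: when $B=\Gamma=0$, neither the projection constraint nor the horizontality constraint involves $\dot m$, so the minimiser of $\|\dot u\|_G^2+\lambda\|\dot m\|_H^2$ has $\dot m^{\star}\equiv 0$ and therefore $\Delta m=0$ identically, independent of any curvature computation. Separating that variational statement from the curvature-block statement would both repair the gap and make clear which half of (b) genuinely needs the extra hypothesis.
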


\begin{figure}[htbp]
\centering
\includegraphics[width=\textwidth]{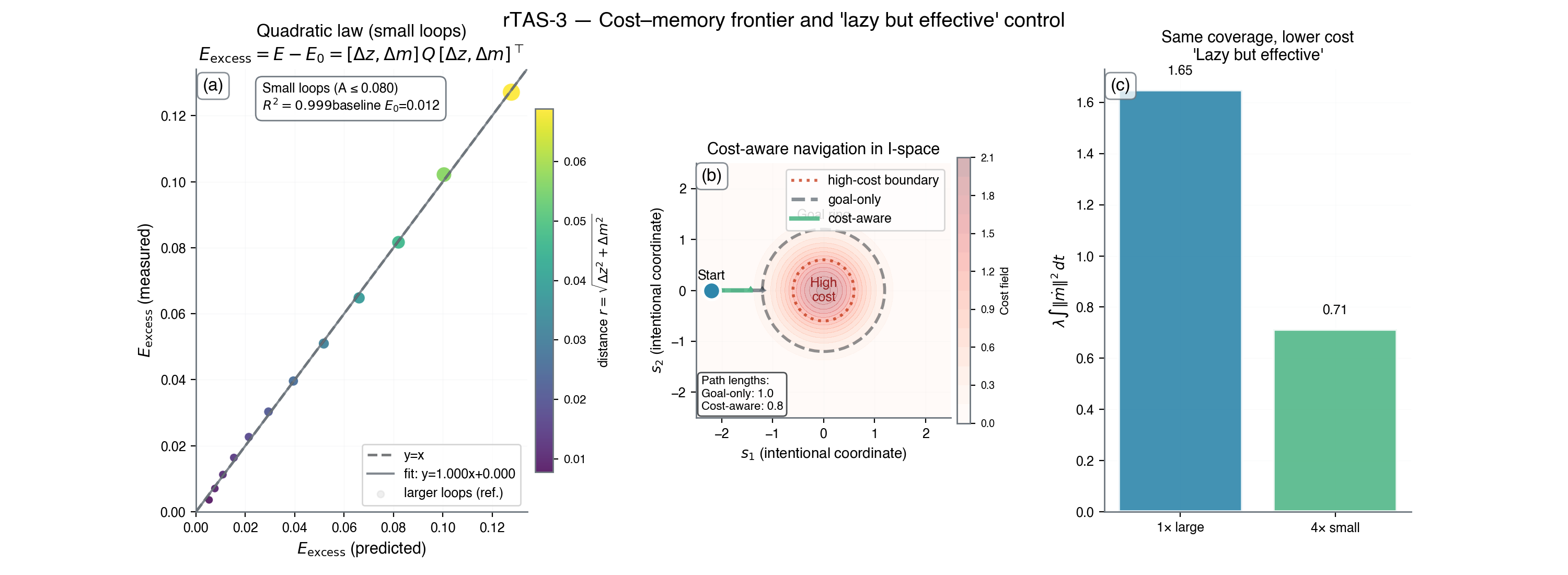}
\caption{\textbf{Cost–memory frontier and lazy but effective control.}
\textbf{(a)} Small–loop quadratic law using a zero–holonomy baseline. The excess reflective cost
\(E_{\text{excess}} := E - E_0\) (with \(E_0\) the rTAS energy at \(\Delta z=\Delta m=0\))
obeys \(E_{\text{excess}} \approx [\Delta z,\Delta m]\;Q\;[\Delta z,\Delta m]^{\top}\).
Scatter shows measured vs.\ predicted \(E_{\text{excess}}\) for small loops \((A \le 0.080)\);
colour encodes \(r=\sqrt{\Delta z^{2}+\Delta m^{2}}\). The fit is essentially on the identity
(\(R^{2}=0.999\), baseline \(E_{0}=0.012\)).
\textbf{(b)} Cost‑aware navigation in intentional space \(I\): adding a push‑back term
proportional to \(-\eta_{P}\nabla_{C}E_{\text{loc}}\) bends the trajectory around the high‑cost
region (shaded/contours), yielding a shorter path than goal‑only dynamics (legend).
\textbf{(c)} \emph{Lazy but effective.} Decomposing one large loop into four smaller loops achieves
the same spatial coverage with substantially lower model‑activity cost
\(\lambda\!\int\!\|\dot m\|^{2}\,dt\) (bars; \(1\times\) large \(\approx 1.65\) vs.\ \(4\times\) small \(\approx 0.71\)).}

\label{fig:rtas3}
\end{figure}

\subsection{Minimal worked examples}

\paragraph{Projective channel (reflective strip–sine).}
Let $P=\mathbb R^2_{(u,v)}$, $C=\mathbb R^2_{(c_1,c_2)}$, and internalize the shear magnitude by
\[
\Phi_m(u,v)=\big(u,\ v+m\sin u\big),
\quad
D\Phi_m=\begin{psmallmatrix}1&0\\ m\cos u&1\end{psmallmatrix},\ \
B=\partial_m\Phi_m=\begin{psmallmatrix}0\\ \sin u\end{psmallmatrix}.
\]
With $G=\mathbf I_2$, $H=1$, the reflective metric lift yields
\[
\dot u^\star=\dot c_1,\qquad
\begin{bmatrix}\dot v^\star\\ \dot m^\star\end{bmatrix}
=
\frac{1}{\lambda+\sin^2 u}
\begin{bmatrix}\lambda\\ \sin u\end{bmatrix}
\Big(\dot c_2 - m\cos u\,\dot c_1\Big),
\]
so closed loops in $C$ generate generically $\Delta m_{\mathrm{model}}\!\neq\!0$ while the visible path closes; see \cref{fig:rtasA} for the full behaviour across~$\lambda$ (panels a–e).

\paragraph{Connection channel (reflective helical or twisted fibrations).}
Let $\Phi(x,y,z)=(x,y)$ as before, but index the connection by $m$:
\[
\omega_m=dz-\alpha(m)\,(y\,dx-x\,dy)
\quad\text{or}\quad
\omega_m=dz-\big(\alpha(m)+\beta(m)\cos\theta\big)\,(x\,dy-y\,dx).
\]
Then $F_{pp}$ produces the familiar area‑law holonomy in $z$, while cross‑curvatures govern meta‑holonomy in $m$, tunable via $\lambda$; see \cref{fig:rtasB}\,(a–d).

\begin{figure}[htbp]
\centering
\includegraphics[width=\textwidth]{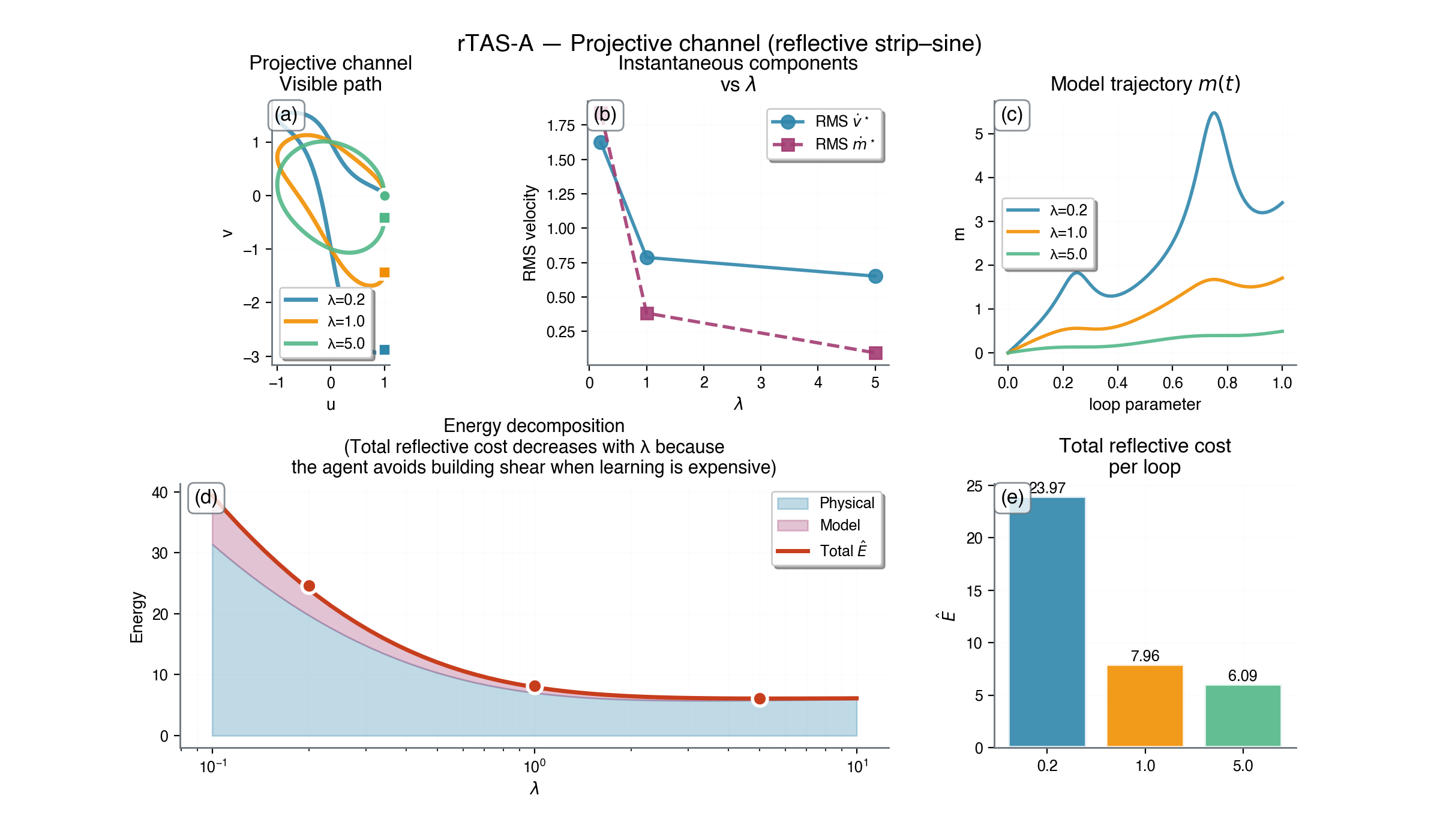}
\caption{\textbf{Projective channel (reflective strip–sine).}
\textbf{(a)} Visible paths $(u(t),v(t))$ for $\lambda\!\in\!\{0.2,1,5\}$ (start $\circ$, end $\square$); closure improves with~$\lambda$.
\textbf{(b)} Instantaneous components (RMS over the loop): both $\mathrm{RMS}(\dot v^\star)$ and $\mathrm{RMS}(\dot m^\star)$ decrease as $\lambda$ increases—suppressing $m$ shrinks the mixing term $w=\dot c_2-m\cos u\,\dot c_1$ and calms both channels.
\textbf{(c)} Model trajectories $m(t)$ shrink as $\lambda$ grows, tending to the TAS limit.
\textbf{(d)} Energy decomposition versus $\lambda$ (log abscissa): physical and model contributions both drop as the system avoids building shear when learning is expensive; the total $\hat E$ decreases and saturates.
\textbf{(e)} Per‑loop total reflective cost $\hat E$ for the three~$\lambda$’s.}

\label{fig:rtasA}
\end{figure}

\begin{figure}[htbp]
\centering
\includegraphics[width=\textwidth]{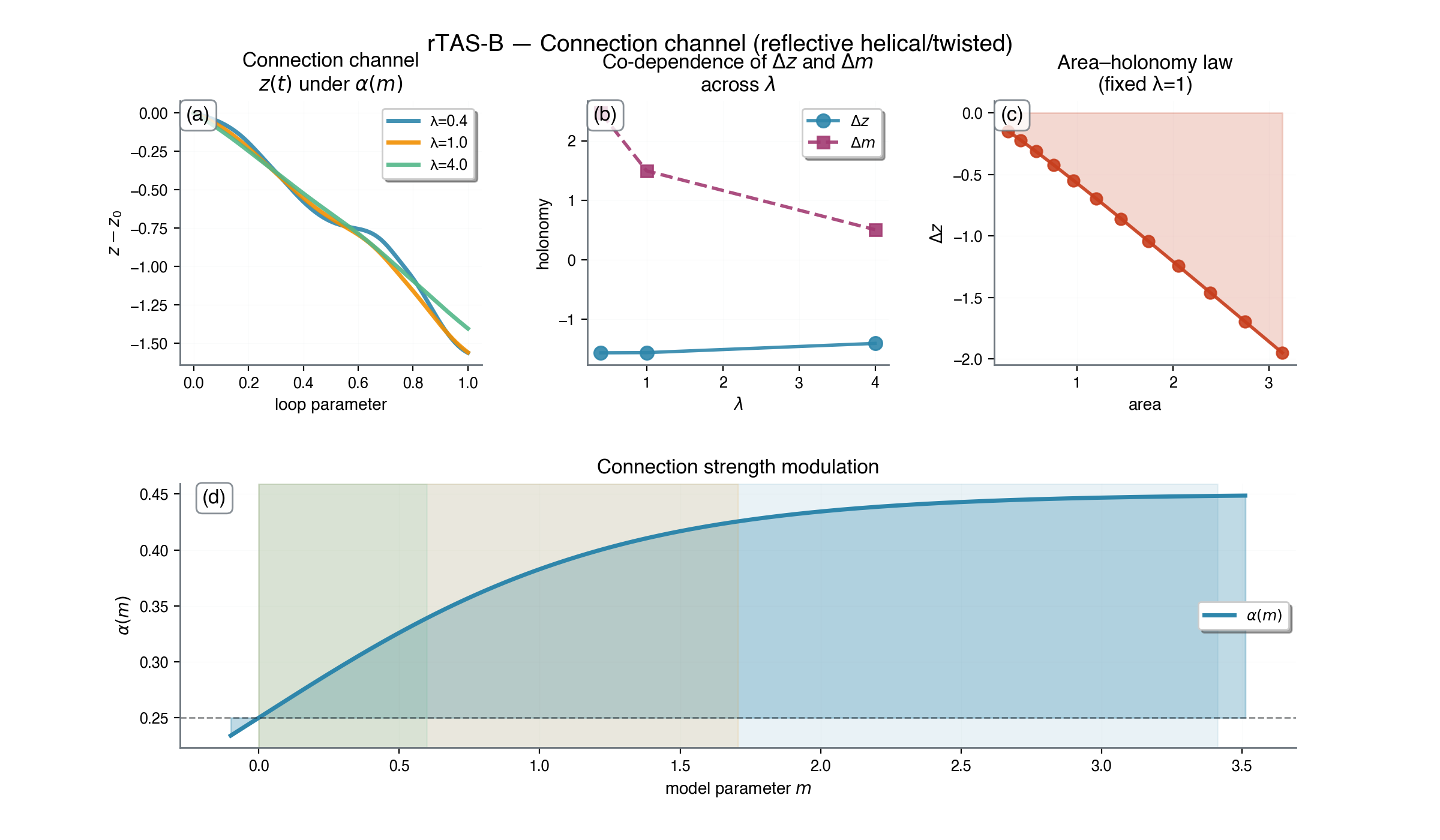}
\caption{\textbf{Connection channel (reflective helical/twisted).}
\textbf{(a)} Lifted $z(t)$ under a model‑modulated connection $\omega_m = dz-\alpha(m)(y\,dx-x\,dy)$ for several $\lambda$.
\textbf{(b)} Co‑dependence of $\Delta z$ and $\Delta m$ across $\lambda$: $\Delta m$ decreases as learning becomes expensive, while $\Delta z$ varies only weakly in this construction.
\textbf{(c)} Area–holonomy law at fixed $\lambda{=}1$: $\Delta z$ is (approximately) linear in the enclosed area (sign reflects loop orientation).
\textbf{(d)} Connection strength modulation $\alpha(m)$ controlling coupling; shaded $m$‑ranges indicate the values explored by the trajectories in (a–c).}

\label{fig:rtasB}
\end{figure}

\subsection{Coupled reflective agents (two‑agent illustrations)}

\paragraph{Set‑up and spaces.}
We consider two reflective agents $i\in\{1,2\}$ that share a perceptual channel but carry their own internal models.  For each agent
\[
P_i=\R^2_{(u_i^1,u_i^2)},\qquad C_i=\R^2_{(c_i^1,c_i^2)},\qquad M_i=\R,\qquad I_i\ \text{(implicit, goal field in $C_i$)},
\]
so that $\dim P_i=\dim C_i=2$, $\dim M_i=1$.\footnote{In these experiments $I_i$ is not modelled explicitly; goals are specified as vector fields $\dot c_i$ in $C_i$ (e.g.\ formation tracking or pursuit). This is equivalent to taking $\Psi_{m_i}=\mathrm{id}_{C_i}$ or to prescribing an intentional potential on $C_i$.}
The joint stacked state is
\[
\widehat x=\Big((u_1,m_1),(u_2,m_2)\Big)\in\widehat P_1\times\widehat P_2,\quad
\widehat P_i:=P_i\times M_i,
\]
with joint cognitive target $(\dot c_1,\dot c_2)\in T_{c_1}C_1\times T_{c_2}C_2$.
Each agent uses the same \emph{projective channel} 
\[
\Phi_{m_i}(u_i^1,u_i^2)=\big(u_i^1,\ u_i^2+m_i\sin u_i^1\big),
\quad
D\Phi_{m_i}=
\begin{bmatrix}
1&0\\ m_i\cos u_i^1&1
\end{bmatrix},\ 
B(u_i):=\partial_{m}\Phi_{m_i}=
\begin{bmatrix}
0\\ \sin u_i^1
\end{bmatrix},
\]
and block metric $\widehat G_i:=G\oplus\lambda_i H$ with $G=\mathbf I_2$, $H=1$.

\paragraph{Coupling channel.}
Agents are coupled \emph{through perception}: the other agent’s current model state $m_j$ ($j\neq i$) distorts agent $i$’s effective cognitive velocity by an additive field $g(u_i)$,
\begin{equation}
\dot c_{i,\mathrm{eff}}
\;=\;
\dot c_i\;-\;\kappa\,m_j\,g(u_i),
\qquad
g(u):=\begin{bmatrix}0\\ \cos u^1\end{bmatrix},
\label{eq:coupling}
\end{equation}
where $\kappa\!\ge\!0$ is the coupling strength.  Setting \(\kappa=0\) recovers independent agents; taking the perturbation only for $i=1$ models \emph{information asymmetry} (agent~1 observes/uses $m_2$, agent~2 does not), as in the pursuit–evasion example below.

\paragraph{Per‑agent reflective lift and total energy.}
Agent $i$ solves, at each instant, the reflective metric‑lift constrained least‑squares with target~$\dot c_{i,\mathrm{eff}}$:
\begin{equation}
\begin{bmatrix}\dot u_i^\star\\ \dot m_i^\star\end{bmatrix}
=\widehat G_i^{-1}A_i^{\!\top}\!\left(A_i \widehat G_i^{-1} A_i^{\!\top}\right)^{-1}\dot c_{i,\mathrm{eff}},
\qquad
A_i:=\big[D\Phi_{m_i}\ \ B(u_i)\big].
\label{eq:2agent-lift}
\end{equation}
The coupled system’s energy and (proxy) holonomy are accumulated additively:
\[
\widehat{\mathcal E}_{\mathrm{tot}}
=\sum_{i=1}^2\int\!\Big(\|\dot u_i^\star\|^2+\lambda_i(\dot m_i^\star)^2\Big)\,dt,
\qquad
\Delta\mathsf{Hol}_i
=\int\!F_{pp}(u_i)\,dt+\int\!F_{pm}(u_i)\,m_i\,dt,
\]
where in our projective example $F_{pp}=0$ and $F_{pm}\propto \cos u_i^1$ (a simple cross‑curvature proxy).  Equation~\eqref{eq:2agent-lift} makes the \emph{self‑referential} structure explicit: each agent’s internal model $m_i$ shapes its own perceptual lift via $B(u_i)$ while simultaneously shaping the other agent’s perceived task via~\eqref{eq:coupling}.

\begin{figure}[t]
  \centering
  \includegraphics[width=0.8 \textwidth]{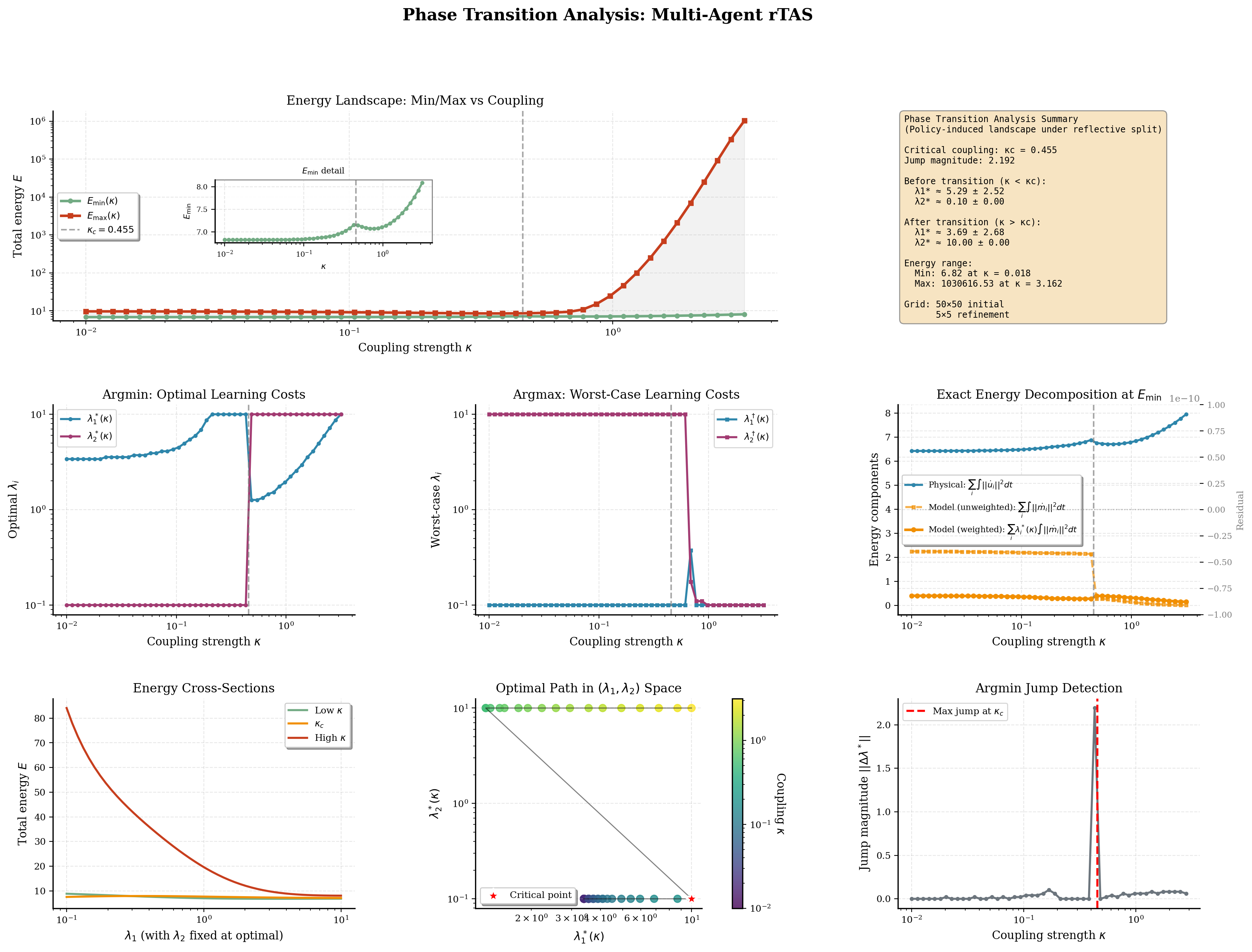}
  \caption{\textbf{Phase transition analysis for the policy–induced landscape (instantaneous reflective split).}
  \textbf{Top left:} $E_{\min}(\kappa)$ and $E_{\max}(\kappa)$ versus coupling (log $y$–axis); the dashed line marks the critical coupling $\kappa_c\!\approx\!0.455$. The inset shows the fine drift of $E_{\min}$ near $\kappa_c$.
  \textbf{Middle left:} Argmin trajectories $\lambda_i^{\*}(\kappa)$ (log–log): below $\kappa_c$ the optimum is asymmetric (leader–follower); above $\kappa_c$ both $\lambda_i^{\*}$ jump high (rigid–rigid).
  \textbf{Middle centre:} Argmax (worst–case) trajectories.
  \textbf{Top right:} \emph{Exact} decomposition at the minimiser: the physical term $\sum_i\!\int\!\|\dot u_i\|^2 dt$ varies smoothly; the unweighted model activity $\sum_i\!\int\!\|\dot m_i\|^2 dt$ peaks near $\kappa_c$; the \emph{weighted} contribution $\sum_i \lambda_i^{\*}(\kappa)\!\int\!\|\dot m_i\|^2 dt$ collapses post–transition. The thin line (right axis) shows the residual $E_{\min}-\big(E_{\mathrm{phys}}+\sum_i\lambda_i^{\*}E_{\mathrm{model},i}\big)\!\approx\!0$, confirming exact closure.
  \textbf{Bottom left:} Cross–sections $\lambda_1\mapsto \min_{\lambda_2}E$ at low $\kappa$, $\kappa_c$, and high $\kappa$ (with $\lambda_2$ fixed at its argmin), illustrating controller–specific non‑monotonicity.
  \textbf{Bottom centre:} Optimal path in $(\lambda_1,\lambda_2)$ coloured by $\kappa$; a star marks $\kappa_c$.
  \textbf{Bottom right:} Log–space argmin jump metric $\|\Delta\log\lambda^{\*}\|$; the peak identifies $\kappa_c$.
  All panels use fixed horizon $T$, identical initial conditions, and energies $E=\sum_i\int\|\dot u_i\|^2 dt+\lambda_i\int\|\dot m_i\|^2 dt$.}
  \label{fig:rtas_phase_transition}
\end{figure}

\paragraph{Simulation protocol.}
We integrate \eqref{eq:2agent-lift} with fixed time step.  Three classes of tasks specify $\dot c_i$: (i)~\emph{formation tracking}, where both agents follow a figure‑8 reference with lateral offset; (ii)~\emph{pursuit–evasion}, where the pursuer points toward the evader while the evader executes an escape field with a small oscillatory component; (iii)~\emph{diagnostics over $(\lambda_1,\lambda_2)$ and $\kappa$}, holding the target field simple (a unit circle) to scan for stability and resonances.  Metrics reported in the figures are total energy, formation error, model divergence $|m_1{-}m_2|$, stability (variance of $m_i$), and synchrony (corr$(m_1,m_2)$).

\paragraph{Policy-induced phase transition (summary).}
Before turning to the four experiments, we quantify a global optimisation effect of the coupling channel:
as $\kappa$ increases, the branchwise minima of the reflective cost cross, producing a first-order–like
switch between an adaptive low-$\lambda$ regime and a rigid high-$\lambda$ regime. The optimiser’s
$\lambda^{\ast}(\kappa)$ jumps at a critical $\kappa_c$, and RMS model activity collapses beyond $\kappa_c$.
See \Cref{fig:rtas_phase_transition} (panels a–d) for the energy landscape, the jump in $\lambda^\ast$,
model activity, and the resulting phase diagram.

\begin{figure}[h]
  \centering
  \includegraphics[width=0.8 \textwidth]{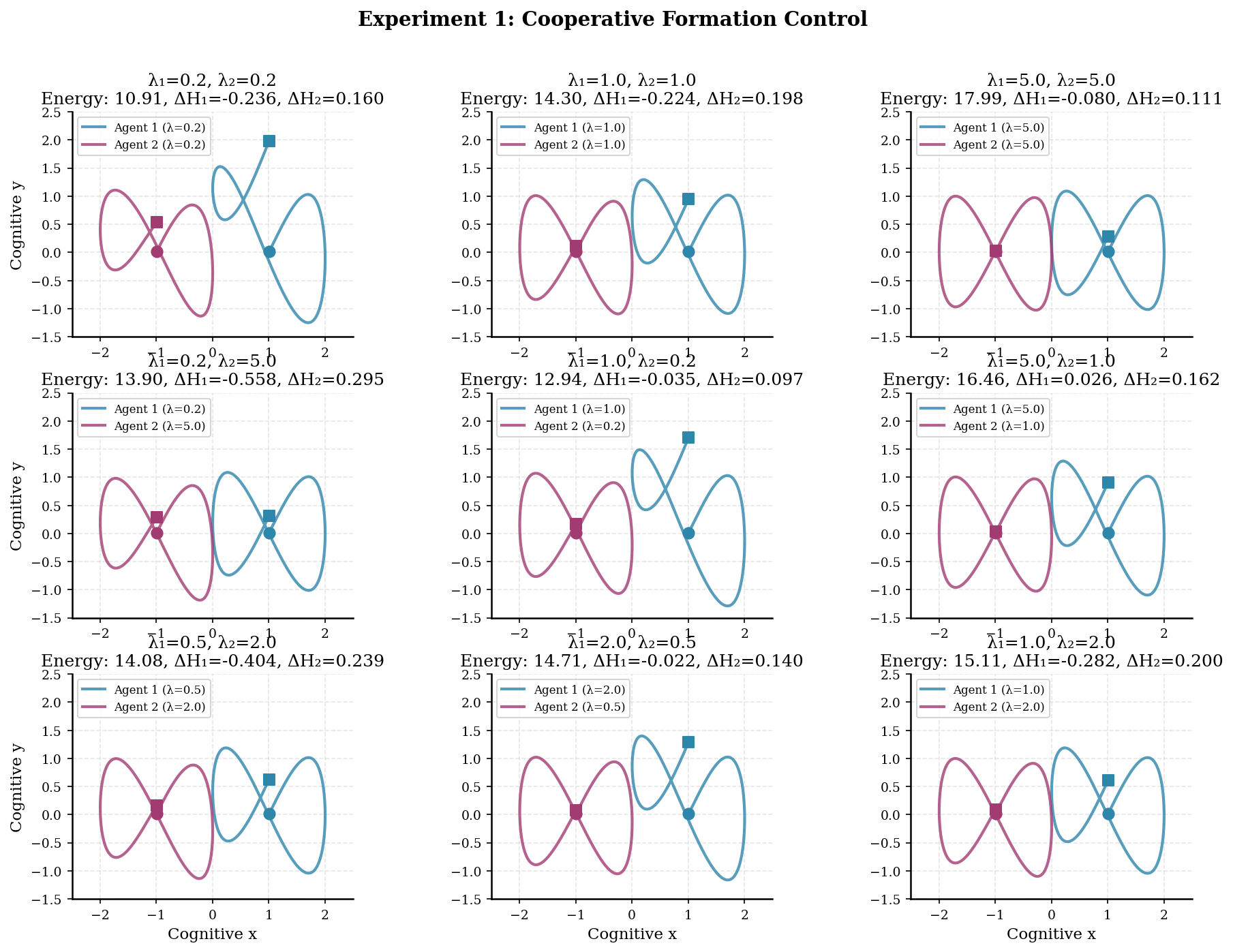}
\caption{\textbf{Experiment 1: Cooperative formation with co-adaptation.}
Two agents track a figure-8 in $C$ while maintaining a lateral offset.
Rows/columns sweep $(\lambda_1,\lambda_2)$; per-panel headers report total energy and a signed \emph{holonomy proxy} $\Delta H_i$, defined for the projective channel by
$\displaystyle \Delta H_i := \int_0^T F_{pm}\!\big(u_i(t)\big)\, m_i(t)\,dt$
with $F_{pp}=0$ and $F_{pm}\propto \cos u_i^{\,1}$.
\emph{Findings:} (i) Total energy generally increases with either $\lambda_i$, because model changes are penalised and effort shifts to the physical channel; (ii) When $(\lambda_1,\lambda_2)$ are mismatched, the low-$\lambda$ agent takes on more learning (larger $|m_i|$) and the high-$\lambda$ partner behaves as a rigid anchor (emergent leader follower split); (iii) Formation error remains small across conditions and is minimised when $\lambda_1\approx\lambda_2$ (symmetric responsibility for co-adaptation).
\emph{Note:} $\Delta H_i$ is a cross-curvature holonomy proxy and is \emph{not} the model change $\Delta m_i$. 
Agents have symmetric information coupling (both observe each other's model parameters).}

  \label{fig:ma_exp1}
\end{figure}

\paragraph{Cooperative formation (Fig.\,\ref{fig:ma_exp1}).}
With symmetric couplings $\kappa>0$ and identical targets, both agents solve a shared tracking task.  The coupling \eqref{eq:coupling} acts like a soft “perceptual spring”: each agent’s model $m_j$ alters the other’s effective $\dot c_i$, nudging the pair to negotiate a compromise between motion and learning.  Larger $\lambda$’s suppress $\dot m_i^\star$ and shift effort into $\dot u_i^\star$, raising total cost; mismatched $\lambda$’s yield asymmetric model excursions and a clear trade of learning load to the more adaptive partner.

\begin{figure}[h]
  \centering
  \includegraphics[width=\textwidth]{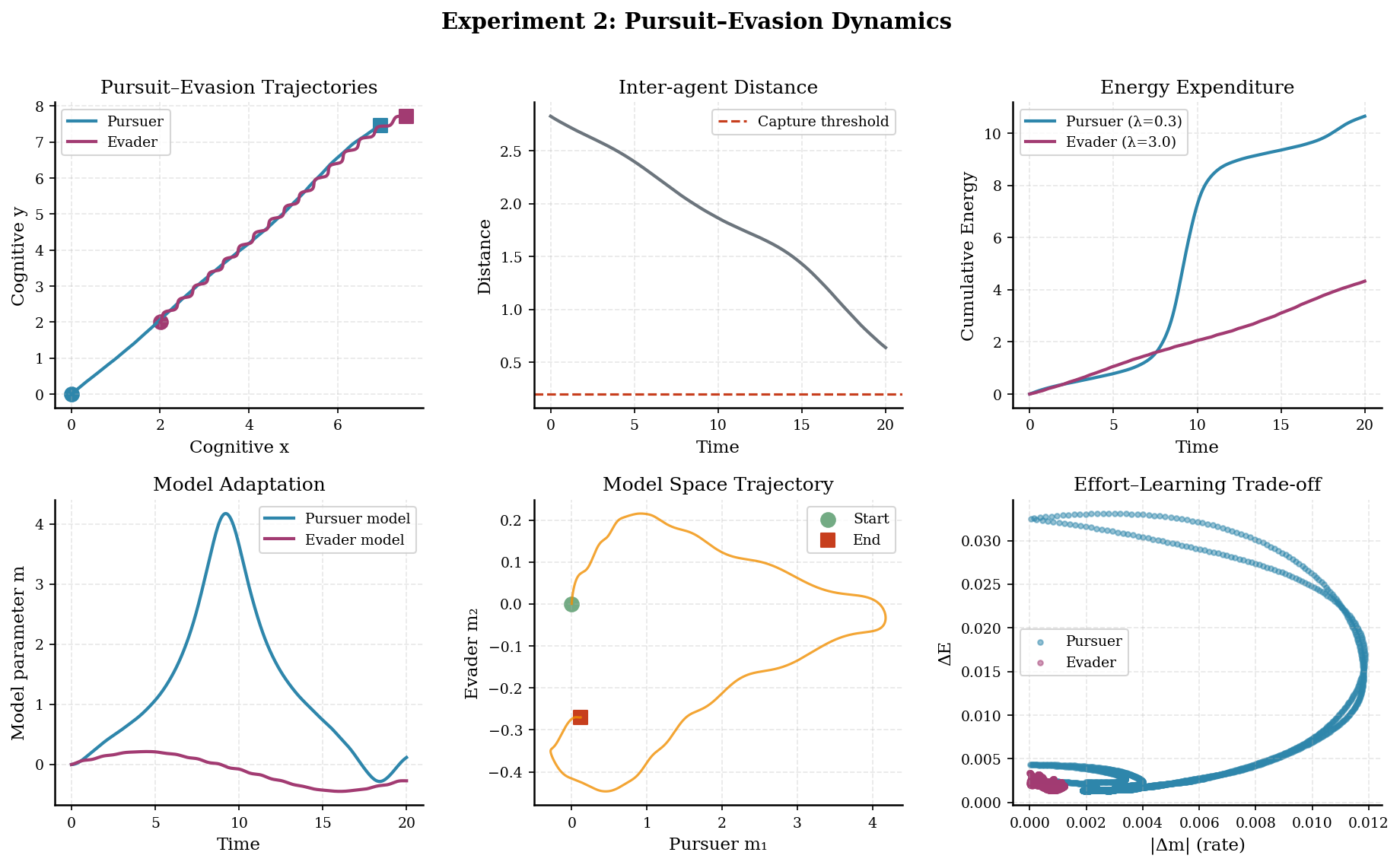}
  \caption{\textbf{Experiment 2 — Pursuit–evasion with information asymmetry.}
  The pursuer (low~$\lambda$) observes the evader’s model ($\kappa>0$ in~\eqref{eq:coupling}); the evader (high~$\lambda$) does not use the pursuer’s model (asymmetric coupling).  
  \emph{Findings:} the pursuer invests heavily in model updates (peaked $m_1$ and steep energy growth) to reshape its perceptual channel and reduce distance; the evader, being model‑rigid, exhibits small $m_2$ drift and lower energetic cost.  Distance decreases steadily but, under the chosen gains, does not cross the capture threshold—illustrating that reflective adaptation can improve interception geometry without guaranteeing capture when the opponent is sufficiently stiff.}
  \label{fig:ma_exp2}
\end{figure}

\paragraph{Pursuit–evasion with asymmetric information (Fig.\,\ref{fig:ma_exp2}).}
Setting \texttt{info\_asymmetry=True} implements a realistic asymmetry: the pursuer uses $m_2$ in~\eqref{eq:coupling} whereas the evader ignores $m_1$.  The pursuer’s low~$\lambda$ and access to the opponent’s model make it \emph{highly self‑referential}: it modifies its own $m_1$ to change what it perceives (via $B(u_1)$) and also to perturb what the evader would perceive if coupled.  The result is a large transient in $m_1$ (a “learning lunge”), increased energy, and a monotone reduction of inter‑agent distance.

\begin{figure}[h]
  \centering
  \includegraphics[width=\textwidth]{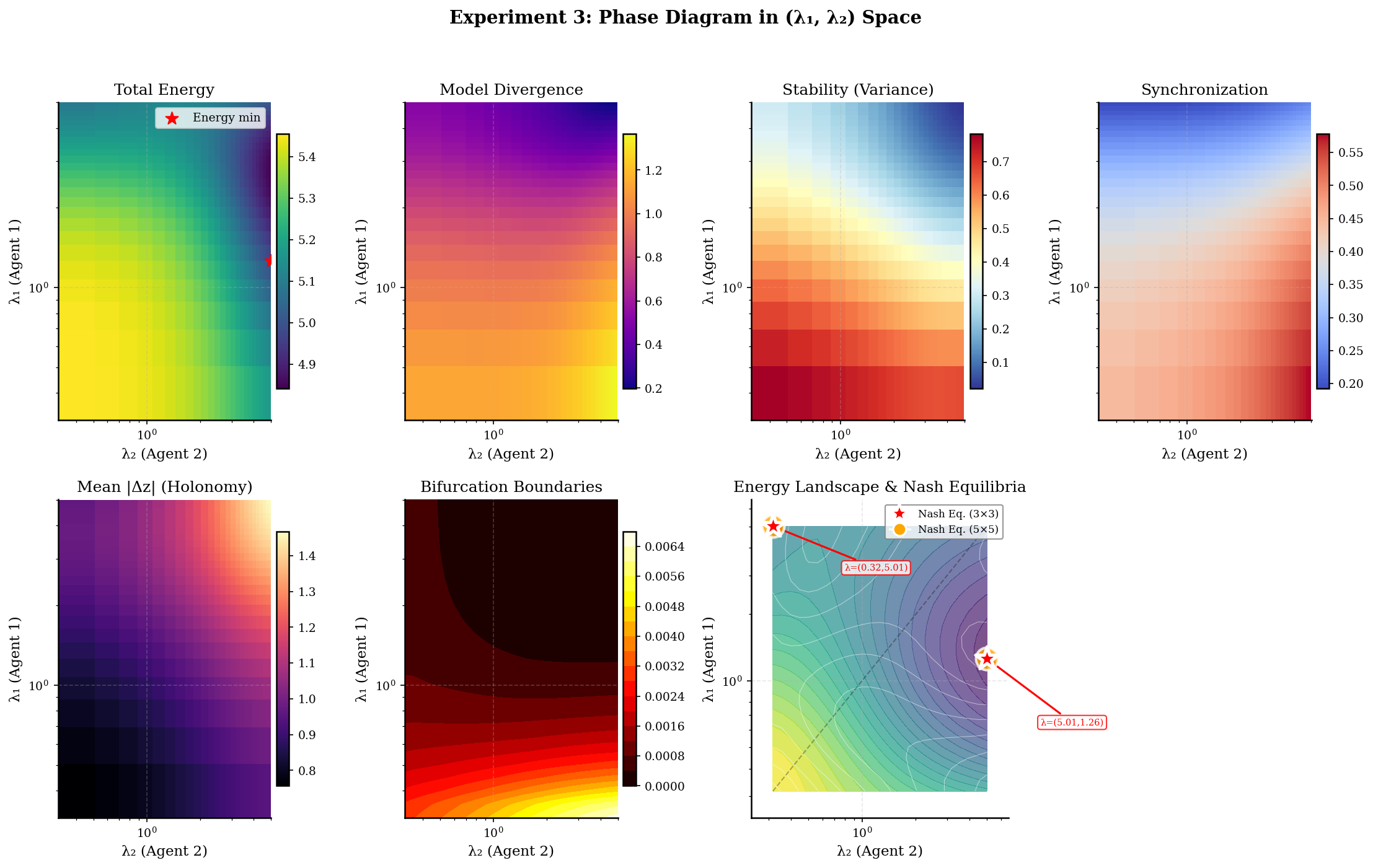}
\caption{\textbf{Experiment 3 — Phase diagram in $(\lambda_1,\lambda_2)$ space.}
Heat‑maps show (top row, left to right) total energy, average model divergence $\lvert m_1{-}m_2\rvert$, stability $\mathrm{Var}(m_i)$, and synchrony $\mathrm{corr}(m_1,m_2)$; (bottom row) mean $\lvert\Delta z\rvert$ (holonomy proxy), a finite‑difference bifurcation indicator, and (rightmost) the energy landscape with \emph{Nash‑like} points (red stars for $3{\times}3$ local searches; orange circles for $5{\times}5$ refinements) that serve as proxies for Nash equilibria (no best‑response maps are solved). 
\emph{Findings.} (i) For the \emph{policy used here}—the instantaneous reflective split of \eqref{eq:block_lift_formula}—the \emph{lowest sampled energy} occurs at an \emph{asymmetric} pair $(\lambda_1,\lambda_2)\!\approx\!(0.79,\,5.01)$, evidencing a leader–follower split. This plot therefore shows a \emph{policy‑induced} landscape rather than the globally re‑optimised value function; theoretically, the value function $\widehat{\mathcal E}^{\star}(\lambda_1,\lambda_2)$ is coordinatewise nondecreasing and concave (Proposition~\ref{prop:valfn-monotone}) and is thus minimised at small $(\lambda_1,\lambda_2)$. (ii) Strongly mismatched pairs maximise model divergence $\lvert m_1{-}m_2\rvert$, echoing the leader–follower split. (iii) Stability degrades as either agent becomes too flexible (small $\lambda_i$), while synchrony peaks near the diagonal $\lambda_1\!\approx\!\lambda_2$. (iv) Ridges in the bifurcation map delineate “safe’’ vs.\ “touchy’’ co‑adaptation regimes.}

  \label{fig:ma_exp3}
\end{figure}

\paragraph{Phase structure (Fig.\,\ref{fig:ma_exp3}).}
Scanning $(\lambda_1,\lambda_2)$ with fixed $\kappa$ reveals a simple design rule: \emph{match} the agents’ effort–learning exchange rates to prevent runaway responsibility transfer and to keep the pair in a high‑synchrony, low‑variance regime.  The local energy‑minimum map marks parameter pairs where neither agent could unilaterally lower total energy by changing \(\lambda_i\) (a coarse proxy for cooperative “Nash‑like” balance). This 2D scan complements the optimiser-switch behaviour summarised in \Cref{fig:rtas_phase_transition}.

\begin{figure}[h]
  \centering
  \includegraphics[width=\textwidth]{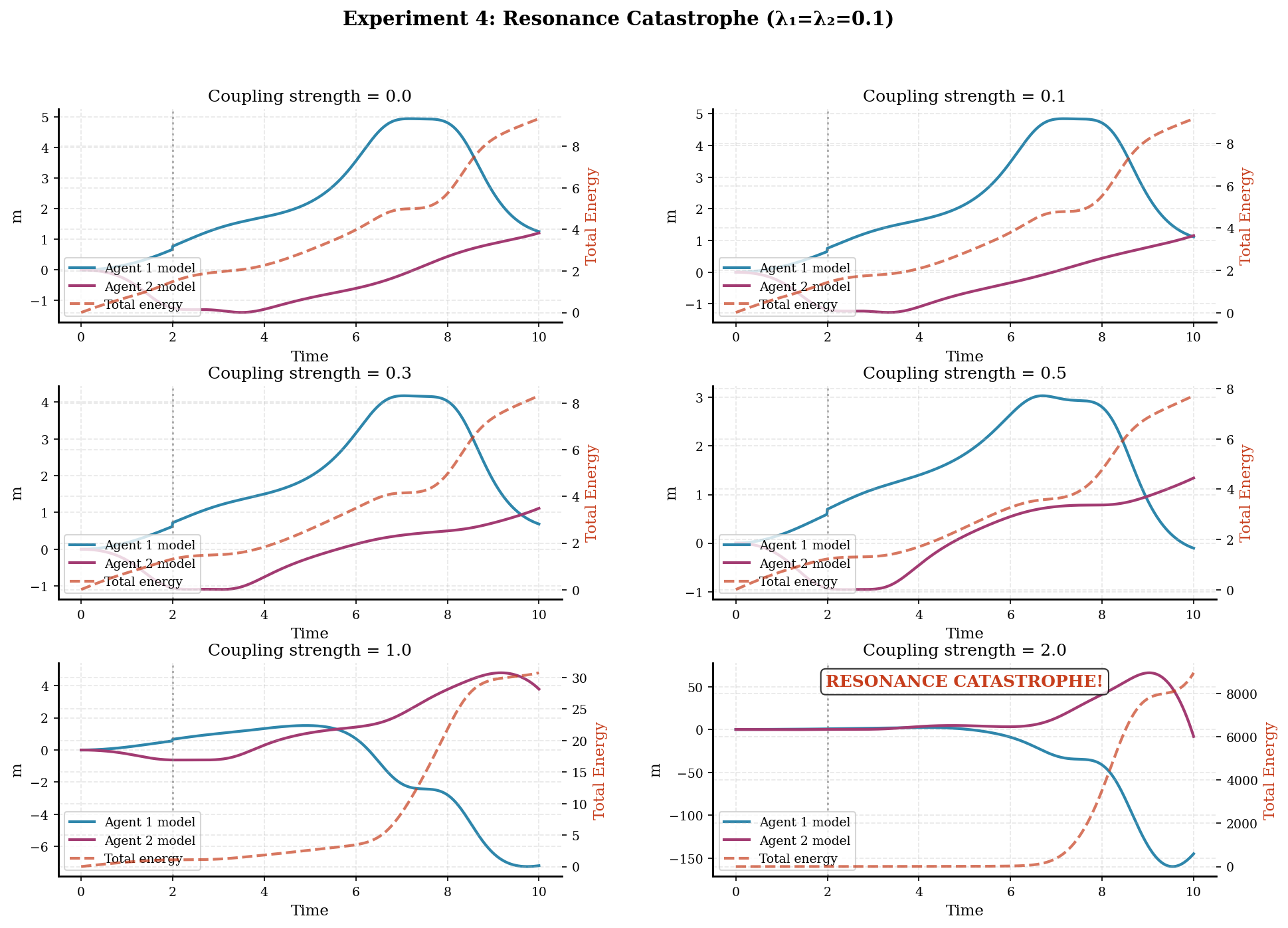}
  \caption{\textbf{Experiment 4 — Resonance catastrophe at low $\lambda$ and large $\kappa$.}
  With both agents highly adaptive ($\lambda_1{=}\lambda_2{=}0.1$) we sweep $\kappa$.  For small~$\kappa$, $m_i$ remain bounded; as $\kappa$ grows, cross‑feedback between \eqref{eq:coupling} and \eqref{eq:2agent-lift} creates a positive loop not damped by the $\lambda_i$, leading to explosive growth in $|m_i|$ and energy (lower‑right panel).
  \emph{Implication:} reflective coupling is powerful but must be regularised—either by larger $\lambda_i$, bounded $g(u)$, or saturation on $m_i$—to avoid self‑amplifying model dynamics.}
  \label{fig:ma_exp4}
\end{figure}

\paragraph{Take‑away.}
rTAS yields a compact, implementable recipe for multi‑agent, \emph{self‑referential} interaction: stack the per‑agent reflective lifts \eqref{eq:2agent-lift}; couple them through the effective cognitive targets \eqref{eq:coupling}; and price learning through $\lambda_i$.  The figures show three generic phenomena that persist across channels: (i)~\emph{effort–learning exchange} (who adapts pays the energy); (ii)~\emph{role emergence} under mismatched $\lambda_i$ (adaptive follower vs.\ rigid leader); and (iii)~\emph{resonant co‑adaptation} when high adaptability meets strong coupling.  These are exactly the multi‑agent counterparts of the single‑agent cost–memory law and curvature‑induced holonomy developed earlier. 

\newpage
\section{Discussion}

This work introduces Tangential Action Spaces (TAS) to formalize a foundational trade-off in agency: the energetic cost of memory. By representing agents as a hierarchy of manifolds ($P \to C \to I$), we have shown that the geometry of the perception-action map $\Phi$ dictates not only the energetic cost of an action but also whether that action leaves a path-dependent trace. Our central finding is a cost-memory duality: memory, in the form of holonomy, is never free. It requires deviating from the instantaneously energy-minimizing metric lift (\Cref{prop:metriclift}, \Cref{lem:optimality}), and this deviation incurs a quantifiable excess energetic cost (\Cref{thm:cost-memory}). For small loops, this excess cost scales quadratically with the magnitude of the stored memory (\Cref{prop:smallloop}).

\subsection{A Unified Geometric Framework for Cost, Memory, and Learning}

The primary contribution of this paper is a single, unified language for relating embodiment, cost, and cognition. The classification of systems into four archetypes—intrinsically conservative, conditionally conservative, geometrically nonconservative, and dynamically nonconservative—provides a principled taxonomy for understanding why some systems are history-dependent while others are not, as illustrated in our canonical examples (\Cref{fig:flat,fig:strip_sine,fig:helical,fig:twisted}).

Crucially, the framework extends beyond simple memory to encompass learning and self-modification through Reflective TAS (rTAS). By introducing a model manifold $M$, the parameter $\lambda$ in the block metric $\widehat{G} = G \oplus \lambda H$ prices model motion and allocates effort between physical action and model updates via the reflective metric lift (\Cref{def:reflective_lift}, \Cref{lem:instant_opt}). Cross-curvature terms ($F_{pm}, F_{mp}$) in the reflective connection provide a geometric mechanism for re-entry, where physical and model states reciprocally induce holonomy in one another (\Cref{fig:rtas2}). This makes the trade-off between physical effort and model change explicit and geometrically grounded (\Cref{fig:rtas1}, \Cref{fig:rtas3}).

\subsection{Dialogue with Existing Paradigms}

TAS provides a geometric foundation for several existing paradigms. It complements enactivist and dynamical systems accounts by identifying the specific mathematical source of history-dependence (curvature or prescribed dynamics). It refines efficiency-based models from optimal motor control by establishing the metric lift as the instantaneous energetic baseline, against which the cost of any memory-encoding deviation can be measured. Finally, it offers an implementation layer for predictive processing, suggesting that agents must balance the informational imperative to minimise surprise against the physical imperative to conserve energy.

\subsection{From Single Agents to Social Dynamics}

The multi-agent rTAS simulations demonstrate how local, energy-based rules can generate complex social phenomena (\Cref{fig:ma_exp1,fig:ma_exp2,fig:ma_exp3,fig:ma_exp4}). Under our chosen coupling architecture, we observe:
\begin{itemize}
    \item Emergent Role Specialization: Mismatched learning costs ($\lambda_1 \neq \lambda_2$) lead to spontaneous leader-follower dynamics (\Cref{fig:ma_exp1}, \Cref{fig:ma_exp3}).
    \item First-order like optimizer switch: Under the specified cost and coupling, the system can undergo a discontinuous change in its optimal strategy as coupling strength varies, switching between adaptive and rigid regimes (\Cref{fig:rtas_phase_transition}).
    \item Runaway Model Activity: For large coupling $\kappa$ and small learning costs $\lambda_i$, the system can exhibit resonant instabilities, leading to runaway model activity and energy expenditure (\Cref{fig:ma_exp4}).
\end{itemize}
These results illustrate that rTAS provides a bottom-up mechanism for grounding social dynamics in the individual energetic trade-offs of self-modifying agents.

\subsection{Testable Predictions and Design Principles}

The TAS framework suggests concrete, testable hypotheses for both biology and robotics.
\begin{itemize}
    \item For Biology: We predict that stereotyped motor tasks should correspond to policies approximating flat, energy-minimal metric lifts. In contrast, tasks requiring adaptation should engage pathways that produce holonomy at a measurable metabolic cost.
    \item For Robotics: The framework provides clear design principles. For high-efficiency tasks, robot morphologies and controllers should be co-designed to make the perception-action map $\Phi$ as "flat" as possible. Holonomy can be explicitly engineered for tasks requiring memory, with the cost-memory law (\Cref{prop:smallloop}) providing a budget for how much memory is sustainable.
\end{itemize}

\subsection{Operational Measurement}
The theory's predictions are empirically falsifiable. Given time-series data $(u(t), c(t))$, one can estimate $D\Phi$ by local linear regression. With a model for the physical metric $G$ (e.g., from actuator torque constants), one can form $M_C = D\Phi G^{-1} D\Phi^\top$ and compute the instantaneously optimal trajectory $\dot{u}^{\text{metric}} = G^{-1} D\Phi^\top M_C^{-1} \dot{c}$. The excess power is then $\|\dot{u}(t) - \dot{u}^{\text{metric}}(t)\|_G^2$. For closed cognitive loops, fibre holonomy is the physical closure gap $u(T) - u(0)$. One can then directly test the core predictions: whether $\|\Delta u_{\text{vert}}\|$ scales with enclosed area and whether the excess energy $\mathcal{E} - \mathcal{E}_{\text{metric}}$ scales with $\|\Delta u_{\text{vert}}\|^2$ for small loops.

\subsection{Limitations and Future Directions}

Our analysis rests on three standing assumptions: (A1) Abelian structure groups, (A2) time-invariant structures, and (A3) a quadratic effort functional. Relaxing these would be a key next step. Non-Abelian groups would require path-ordered integrals for holonomy, while time-varying bundles would introduce additional dynamic terms. Stochastic extensions are needed to model how noise competes with curvature-induced holonomy. The multi-agent results, in particular, open a new avenue for studying the co-evolution of embodiment and social strategy, where learning costs ($\lambda_i$) and coupling strengths ($\kappa$) become evolvable traits.

\section*{Acknowledgments}
The author thanks Michael Levin for valuable feedback on the biological implications of the framework; and Adam Goldstein for constructive comments on early drafts of this manuscript. Special thanks to Rena Seiler for discussions and inspiration on the overall idea of the TAS framework.
\section*{Data accessibility}
All data, code and scripts to reproduce the figures and analyses are available at
\href{https://github.com/marcelbtec/rTAS}{GitHub: rTAS} (tag \texttt{v1.0.0}).
\section*{Competing interests}
The author declares no competing interests.

\newtheorem*{proposition*}{Proposition}
\newtheorem*{theorem*}{Theorem}

\appendix
\section{Detailed Proofs}
 Throughout, we adopt the following \emph{conventions}: (i) fibres are assumed Abelian (Assumption~\ref{ass:standing}); (ii) small‑loop statements are local (inside a single trivialisation); (iii) when comparing energies of lifts that depend on time parametrisation, we fix the loop parametrisation to have total time \(T=1\) unless stated otherwise.

\subsection{Large \texorpdfstring{$\lambda$}{lambda} behaviour of the reflective cost}
\label{subsec:lambda-monotonicity}

We record two facts that clarify what is (and is not) implied by increasing the
trade–off parameter $\lambda$.

\begin{proposition}[Value function is nondecreasing and concave in $\lambda$]
\label{prop:valfn-monotone}
Fix a $C^1$ cognitive loop $\gamma:[0,T]\!\to\!C$ and the admissible set
\[
\mathcal F \;:=\;\Big\{(u,m)\in H^1([0,T];P\times M)\ :\
D\Phi_m(p)\,\dot u+\partial_m\Phi_m(p)\,\dot m=\dot c\ \text{a.e.}\Big\},
\]
which does not depend on $\lambda$.
For $\lambda>0$ define the optimal reflective energy
\[
\widehat{\mathcal E}^\star(\lambda)
:=\inf_{(u,m)\in\mathcal F}\int_0^T\!\Big(\|\dot u(t)\|_G^2+\lambda\,\|\dot m(t)\|_H^2\Big)\,dt.
\]
Then $\lambda\mapsto \widehat{\mathcal E}^\star(\lambda)$ is nondecreasing and concave.
\end{proposition}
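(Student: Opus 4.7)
The proof plan rests on a single structural observation: the admissible set $\mathcal F$ depends only on the loop $\gamma$ and the reflective constraint, not on $\lambda$, so $\widehat{\mathcal E}^\star$ is the pointwise infimum over a $\lambda$-independent family of functions that are \emph{affine} in $\lambda$. Both claimed properties are then immediate consequences of elementary facts about infima.

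More concretely, for each fixed $(u,m)\in\mathcal F$ define
\[
J(\lambda;u,m)\;:=\;\int_0^T\!\|\dot u\|_G^2\,dt\;+\;\lambda\int_0^T\!\|\dot m\|_H^2\,dt\;=\;a(u)\,+\,\lambda\,b(m),
\]
where $a(u)\ge 0$ and $b(m)\ge 0$ are independent of $\lambda$. First I would note that $\lambda\mapsto J(\lambda;u,m)$ is nondecreasing because $b(m)\ge 0$; hence for any $0<\lambda_1\le\lambda_2$ and every admissible pair, $J(\lambda_1;u,m)\le J(\lambda_2;u,m)$, and taking the infimum over $\mathcal F$ on both sides yields $\widehat{\mathcal E}^\star(\lambda_1)\le\widehat{\mathcal E}^\star(\lambda_2)$. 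Second, since each $\lambda\mapsto J(\lambda;u,m)$ is affine and hence concave, the pointwise infimum
\[
\widehat{\mathcal E}^\star(\lambda)\;=\;\inf_{(u,m)\in\mathcal F}J(\lambda;u,m)
\]
is concave: for $\theta\in[0,1]$ and $\lambda_1,\lambda_2>0$, pick any feasible $(u,m)$ and use $J(\theta\lambda_1+(1-\theta)\lambda_2;u,m)=\theta J(\lambda_1;u,m)+(1-\theta)J(\lambda_2;u,m)\ge \theta\,\widehat{\mathcal E}^\star(\lambda_1)+(1-\theta)\,\widehat{\mathcal E}^\star(\lambda_2)$, then take the infimum in $(u,m)$ on the left.

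I would close by remarking that no existence of a minimiser is needed, since both arguments use only inequalities for the values of $J$ at arbitrary feasible points. The only genuinely careful point (and the nearest thing to a difficulty) is the \emph{$\lambda$-independence of $\mathcal F$}: it is crucial that the reflective constraint $D\Phi_m(p)\dot u+\partial_m\Phi_m(p)\dot m=\dot c$ does not involve the metric weight $\lambda$. I would state this explicitly once, together with a one-line remark that concavity on $(0,\infty)$ yields automatic continuity and existence of one-sided derivatives, which justifies the interpretation of $\widehat{\mathcal E}^\star$ as a bona fide value function whose increments quantify the marginal cost of stiffening the model channel. No serious obstacle is anticipated; the argument is an envelope-type calculation.
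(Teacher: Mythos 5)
Your proof is correct and follows essentially the same route as the paper's: both write the cost of each feasible pair as an affine function $a+\lambda b$ with $b\ge 0$ over the $\lambda$-independent admissible set, obtain monotonicity from $b\ge 0$, and concavity from the fact that a pointwise infimum of affine functions is concave. The only addition beyond the paper is your closing remark on automatic continuity, which is a harmless bonus.
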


\begin{proof}
For each feasible $(u,m)\in\mathcal F$ define the affine map
\[
J_{(u,m)}(\lambda)
=\int_0^T\!\|\dot u\|_G^2\,dt \;+\; \lambda\!\int_0^T\!\|\dot m\|_H^2\,dt
=: a_{(u,m)} + b_{(u,m)}\,\lambda,
\quad b_{(u,m)}\ge 0.
\]
The value function is the pointwise infimum
$\widehat{\mathcal E}^\star(\lambda)=\inf_{(u,m)\in\mathcal F} J_{(u,m)}(\lambda)$.

\paragraph{Concavity.}
For any $\lambda,\mu>0$ and $t\in[0,1]$, linearity of $J_{(u,m)}$ gives
\[
J_{(u,m)}\!\big(t\lambda+(1-t)\mu\big)
=t\,J_{(u,m)}(\lambda)+(1-t)\,J_{(u,m)}(\mu).
\]
Taking the infimum over $(u,m)$ on both sides yields
\[
\widehat{\mathcal E}^\star\!\big(t\lambda+(1-t)\mu\big)
\;=\;\inf J_{(u,m)}\!\big(t\lambda+(1-t)\mu\big)
\;\ge\; t\,\widehat{\mathcal E}^\star(\lambda)+(1-t)\,\widehat{\mathcal E}^\star(\mu),
\]
hence $\widehat{\mathcal E}^\star$ is concave.

\paragraph{Monotonicity.}
If $\lambda_1\le\lambda_2$ then $J_{(u,m)}(\lambda_1)\le J_{(u,m)}(\lambda_2)$ for every
$(u,m)$ because $b_{(u,m)}\!\ge 0$. Taking infima preserves the inequality, so
$\widehat{\mathcal E}^\star(\lambda_1)\le \widehat{\mathcal E}^\star(\lambda_2)$.
\end{proof}
\begin{remark}
Concavity uses only that $\widehat{\mathcal E}^\star$ is the infimum of affine functions;
nonnegativity of the slopes $b_{(u,m)}$ is needed only for the monotonicity claim.
\end{remark}

\medskip
\noindent
Proposition~\ref{prop:valfn-monotone} states a general optimisation truth:
\emph{if one re-optimises globally as $\lambda$ changes, the optimal cost cannot
decrease.}  The decrease seen in \cref{fig:rtas1,fig:rtasA} therefore requires an
additional (benign) specialisation: those figures plot the cost of the
\emph{particular reflective split} produced pointwise by
\eqref{eq:block_lift_formula} along a fixed $\gamma$ (not the global-in-time
minimum over all feasible $(u,m)$).  For the projective channel this policy
admits a simple large–$\lambda$ bound that explains the observed
“decrease–then–saturate’’ trend.

\begin{proposition}[Projective channel: decreasing upper bound for large $\lambda$]
\label{prop:projective-upper-bound}
Consider the projective channel
$\Phi_m(u,v)=(u,\ v+m\sin u)$ with $G=\mathbf I_2$, $H=1$ and a fixed
$C^1$ loop $\gamma(t){=}(c_1(t),c_2(t))$.
Let $(u,v,m)$ be the trajectory obtained by the reflective split
\eqref{eq:block_lift_formula} (so $\dot u^\star=\dot c_1$ and
\(
\begin{bmatrix}\dot v^\star\\ \dot m^\star\end{bmatrix}
=\frac{1}{\lambda+\sin^2u}\begin{bmatrix}\lambda\\ \sin u\end{bmatrix}
(\dot c_2-m\cos u\,\dot c_1)
\)).
Assume $m(0)=0$ and set $E_{\mathrm{TAS}}:=\int_0^T(\dot c_1^2+\dot c_2^2)\,dt$.
Then there exists a constant $K\!=\!K(\|\dot c_1\|_{L^\infty},\|\dot c_2\|_{L^\infty},T)$
such that for all $\lambda\ge 1$
\begin{equation}
\widehat{\mathcal E}\big[(u,v,m)\big]
\;=\;\int_0^T\!\Big(\|\dot u^\star\|^2+\|\dot v^\star\|^2+\lambda\,|\dot m^\star|^2\Big)\,dt
\ \le\ E_{\mathrm{TAS}}\;+\;\frac{K}{\lambda}.
\label{eq:upper-bound}
\end{equation}
Consequently $\lambda\mapsto \widehat{\mathcal E}[(u,v,m)]$ is eventually
decreasing and $\widehat{\mathcal E}[(u,v,m)]\to E_{\mathrm{TAS}}$ as $\lambda\to\infty$.
\end{proposition}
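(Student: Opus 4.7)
The plan is to reduce the reflective cost to a transparent closed form, then to show that the internal model $m(t)$ stays of order $1/\lambda$, and finally to estimate the cost modulo $O(1/\lambda)$.

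\emph{Step 1 (closed form for the integrand).} Substitute the stated formulas for $\dot v^\star$ and $\dot m^\star$: a short algebraic calculation collapses the two vertical contributions,
\[
(\dot v^\star)^{2}+\lambda\,(\dot m^\star)^{2}
=\frac{\lambda^{2}+\lambda\sin^{2}u}{(\lambda+\sin^{2}u)^{2}}\,w^{2}
=\frac{\lambda\,w^{2}}{\lambda+\sin^{2}u},
\qquad w:=\dot c_{2}-m\cos u\,\dot c_{1}.
\]
Combined with $\dot u^\star=\dot c_{1}$, this gives the clean expression
$\widehat{\mathcal E}=\int_{0}^{T}\dot c_{1}^{2}\,dt+\int_{0}^{T}\frac{\lambda\,w^{2}}{\lambda+\sin^{2}u}\,dt$, whose second term I will compare with $\int_{0}^{T}\dot c_{2}^{2}\,dt$.

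\emph{Step 2 (a priori bound $|m(t)|\le C_{1}/\lambda$).} Since $|\sin u|\le 1$ and $\lambda+\sin^{2}u\ge\lambda$, the formula for $\dot m^\star$ gives $|\dot m^\star|\le |w|/\lambda$. Using $|w|\le \|\dot c_{2}\|_{L^\infty}+|m|\,\|\dot c_{1}\|_{L^\infty}$ together with $m(0)=0$, scalar Grönwall yields, for every $\lambda\ge 1$,
\[
|m(t)|\;\le\;\frac{T\,\|\dot c_{2}\|_{L^\infty}}{\lambda}\,\exp\!\Big(\frac{T\,\|\dot c_{1}\|_{L^\infty}}{\lambda}\Big)\;\le\;\frac{C_{1}}{\lambda},
\]
with $C_{1}=T\|\dot c_{2}\|_{L^\infty}\exp(T\|\dot c_{1}\|_{L^\infty})$ depending only on the data advertised in the statement.

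\emph{Step 3 (plug in and estimate).} Expand $w^{2}=\dot c_{2}^{2}+(w^{2}-\dot c_{2}^{2})$ and bound the remainder pointwise by $|w^{2}-\dot c_{2}^{2}|\le 2|\dot c_{2}|\,|m|\,\|\dot c_{1}\|_{L^\infty}+m^{2}\|\dot c_{1}\|_{L^\infty}^{2}\le C_{2}/\lambda$ via Step 2. Using the trivial majorant $\lambda/(\lambda+\sin^{2}u)\le 1$,
\[
\int_{0}^{T}\!\frac{\lambda\,w^{2}}{\lambda+\sin^{2}u}\,dt\;\le\;\int_{0}^{T}\!\dot c_{2}^{2}\,dt\;+\;\frac{C_{2}\,T}{\lambda}.
\]
Adding $\int_{0}^{T}\dot c_{1}^{2}\,dt$ to both sides gives the advertised bound $\widehat{\mathcal E}\le E_{\mathrm{TAS}}+K/\lambda$ with $K=C_{2}T$. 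The limit $\widehat{\mathcal E}\to E_{\mathrm{TAS}}$ then follows from $w\to\dot c_{2}$ and $\lambda/(\lambda+\sin^{2}u)\to 1$ by dominated convergence; for the "eventually decreasing" claim, note that $\lambda\mapsto\widehat{\mathcal E}(\lambda)$ is smooth by standard ODE regularity and the asymptotic expansion $\widehat{\mathcal E}(\lambda)=E_{\mathrm{TAS}}+K/\lambda+o(1/\lambda)$ forces its derivative to be negative for all sufficiently large $\lambda$.

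\emph{Main obstacle.} The delicate point is obtaining the rate $|m|=O(1/\lambda)$ rather than the naive $O(1/\sqrt\lambda)$ produced by the AM–GM bound $|\sin u|/(\lambda+\sin^{2}u)\le 1/(2\sqrt\lambda)$: one must instead use $|\sin u|\le 1$ in the numerator and $\lambda+\sin^{2}u\ge\lambda$ in the denominator, which is sharp precisely when $\lambda\ge 1$. Once this rate is established, Steps 1 and 3 are routine.
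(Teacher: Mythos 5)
Your proof of the displayed bound \eqref{eq:upper-bound} is correct and follows essentially the same route as the paper's: the same algebraic collapse $(\dot v^\star)^2+\lambda(\dot m^\star)^2=\lambda w^2/(\lambda+\sin^2 u)$, the same Gr\"onwall estimate $\|m\|_{L^\infty}=O(1/\lambda)$, and the same final comparison with $\int_0^T\dot c_2^2\,dt$. One caveat: your closing argument for the ``eventually decreasing'' clause is not rigorous --- the upper bound $E_{\mathrm{TAS}}+K/\lambda$ is not an asymptotic expansion, and even a genuine expansion $E_{\mathrm{TAS}}+K/\lambda+o(1/\lambda)$ would not control the sign of the derivative --- but the paper's own proof likewise stops at the inequality and does not establish that clause either.
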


\begin{proof}
Write $s(t):=\sin u(t)$ and
$w(t):=\dot c_2(t)-m(t)\cos u(t)\,\dot c_1(t)$.  From the explicit split,
$\dot m^\star=\frac{s}{\lambda+s^2}\,w$ and $\dot v^\star=\frac{\lambda}{\lambda+s^2}\,w$,
whence
\[
\|\dot v^\star\|^2+\lambda|\dot m^\star|^2
=\frac{\lambda}{\lambda+s^2}\,w^2
\ \le\ w^2.
\]
The $m$–dynamics is the linear ODE
$\dot m^\star+k_\lambda(t)\,\cos u(t)\,\dot c_1(t)\,m^\star=k_\lambda(t)\,\dot c_2(t)$
with $k_\lambda(t):=\frac{s(t)}{\lambda+s(t)^2}$.
Since $|k_\lambda|\le 1/\lambda$ and $\cos u\,\dot c_1$ is bounded, the
variation‑of‑constants formula and Grönwall give the uniform estimate
$\|m^\star\|_{L^\infty}\le C/\lambda$ for some
$C=C(\|\dot c_1\|_{L^\infty},\|\dot c_2\|_{L^\infty},T)$.
Hence $w=\dot c_2+O(\lambda^{-1})$ and
\[
\int_0^T\!\Big(\|\dot v^\star\|^2+\lambda|\dot m^\star|^2\Big)\,dt
\ \le\ \int_0^T\!\dot c_2^2\,dt\ +\ O(\lambda^{-1}).
\]
Adding $\int_0^T\dot c_1^2\,dt$ yields \eqref{eq:upper-bound}.
\end{proof}

\paragraph{Interpretation.}
Proposition~\ref{prop:projective-upper-bound} shows that, for the reflective
split used in the figures, penalising model motion suppresses $m$ as
$\|m\|_\infty=O(\lambda^{-1})$, which collapses the mixing term
$w=\dot c_2-m\cos u\,\dot c_1$.  The model‑activity cost scales as
$\lambda\!\int\!|\dot m^\star|^2\,dt=O(\lambda^{-1})$ and the physical part tends
to the TAS baseline $E_{\mathrm{TAS}}$.  Thus the measured reflective cost along
this policy decreases with~$\lambda$ and saturates, exactly as seen in
\cref{fig:rtas1}\,(c) and \cref{fig:rtasA}\,(d).  This behaviour is compatible
with Proposition~\ref{prop:valfn-monotone}: the globally optimised value
$\widehat{\mathcal E}^\star(\lambda)$ is nondecreasing in $\lambda$, while the
cost of the particular (instantaneous) reflective split enjoys the
large–$\lambda$ decreasing upper bound \eqref{eq:upper-bound}.


\subsection{Proof of Pythagorean Decomposition (Remark~\ref{rem:pythagorean})}

\begin{lemma}[Pythagorean decomposition (metric lift)]
\label{lem:pythagorean_full}
Let $\Phi: P \to C$ be a submersion and let $G$ be a Riemannian metric on $P$. For any $\Delta c \in T_{\Phi(p)}C$, let
\[
\Delta u^{\text{metric}} = G^{-1}D\Phi_p^\top \left(D\Phi_p G^{-1} D\Phi_p^\top\right)^{-1} \Delta c
\]
be the metric lift. Then for any $\Delta u \in T_pP$ satisfying $D\Phi_p(\Delta u) = \Delta c$:
\begin{enumerate}
    \item[(i)] There exists a unique $v \in \ker D\Phi_p$ such that $\Delta u = \Delta u^{\text{metric}} + v$.
    \item[(ii)] The decomposition is $G$-orthogonal: $\langle \Delta u^{\text{metric}}, v \rangle_G = 0$.
    \item[(iii)] The norms satisfy: $\|\Delta u\|_G^2 = \|\Delta u^{\text{metric}}\|_G^2 + \|v\|_G^2$.
\end{enumerate}
\end{lemma}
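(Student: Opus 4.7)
The plan is to verify the three claims in order; each reduces to a short linear-algebra computation, so the main task is just to be careful about where the $G$-metric is being applied.

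For (i), I would simply define $v := \Delta u - \Delta u^{\text{metric}}$. Applying $D\Phi_p$ and using that the metric lift already satisfies the projection constraint $D\Phi_p(\Delta u^{\text{metric}}) = \Delta c$ (which is immediate from the closed-form expression by multiplying out $D\Phi_p\,G^{-1}D\Phi_p^\top\,(D\Phi_p G^{-1}D\Phi_p^\top)^{-1}\Delta c = \Delta c$), I get $D\Phi_p(v)=0$. Uniqueness is obvious because $v$ is forced by the decomposition. A subtle point worth flagging is that well-definedness of the metric lift requires $D\Phi_p G^{-1}D\Phi_p^\top$ to be invertible; this is handled by the positive-definiteness lemma already in the paper (full row rank of $D\Phi_p$, which holds since $\Phi$ is a submersion).

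For (ii), the key observation is that $\Delta u^{\text{metric}}$ lies in the image of $G^{-1}D\Phi_p^\top$, i.e., $\Delta u^{\text{metric}} = G^{-1}D\Phi_p^\top \xi$ with $\xi := (D\Phi_p G^{-1}D\Phi_p^\top)^{-1}\Delta c$. Then
\[
\langle \Delta u^{\text{metric}}, v\rangle_G
= v^\top G\,\Delta u^{\text{metric}}
= v^\top G\,G^{-1}D\Phi_p^\top \xi
= (D\Phi_p v)^\top \xi
= 0
\]
since $v\in\ker D\Phi_p$ by part (i). Geometrically this is the statement that $\operatorname{image}(G^{-1}D\Phi_p^\top)$ is exactly the $G$-orthogonal complement of $\ker D\Phi_p$, which is essentially the defining property of the metric lift.

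For (iii), expand $\|\Delta u\|_G^2 = \langle \Delta u^{\text{metric}} + v,\Delta u^{\text{metric}} + v\rangle_G$ bilinearly; the cross term vanishes by (ii), leaving the Pythagorean identity.

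The proof has no real obstacle, since everything follows once one writes $\Delta u^{\text{metric}}$ in the form $G^{-1}D\Phi_p^\top \xi$. The only thing requiring a bit of care is keeping the direction of the matrix identifications consistent: the inner product $\langle\cdot,\cdot\rangle_G$ is $x^\top G y$ on coordinate vectors, while $G^{-1}$ appears in the metric lift precisely because raising indices with $G$ on the range $D\Phi_p^\top\xi$ cancels the metric in the inner product, which is exactly what produces the orthogonality.
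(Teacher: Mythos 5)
Your proposal is correct and follows essentially the same route as the paper's proof: define $v=\Delta u-\Delta u^{\text{metric}}$, use that $\Delta u^{\text{metric}}$ lies in the image of $G^{-1}D\Phi_p^\top$ so the $G$-inner product with $v\in\ker D\Phi_p$ collapses to $(D\Phi_p v)^\top\xi=0$, and expand bilinearly for the norm identity. Your added remarks on invertibility of the moment matrix and the geometric reading of the orthogonality are consistent with the paper's supporting lemmas.
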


\begin{proof}
Part (i) is immediate by setting $v:=\Delta u-\Delta u^{\text{metric}}$ and noting $D\Phi_p v=0$.
For (ii),
\[
\langle \Delta u^{\text{metric}}, v \rangle_G
= (G^{-1}D\Phi^\top M^{-1}\Delta c)^\top G\,v
= \Delta c^\top M^{-1} D\Phi\,v
= 0,\qquad M:=D\Phi G^{-1}D\Phi^\top,
\]
because $v\in\ker D\Phi$. The identity in (iii) follows from the polarization identity and (ii).
\end{proof}


\subsection{Small–loop law (Proposition~\ref{prop:smallloop}) — Corrected, relative version}

\begin{proposition*}[Small–loop law (relative to the metric lift), restated]
Let $\Phi:P\to C$ be a fibration, $G$ a Riemannian metric on $P$, and let $L_{\mathrm{met}}$ denote the metric lift. Let $\nabla$ be any Ehresmann connection with horizontal map $L_\nabla$. For a $C^1$ closed loop $\gamma:[0,1]\to C$ contained in a trivialising chart, define 
\[
v(t):=L_\nabla(\dot c(t))-L_{\mathrm{met}}(\dot c(t))\in\ker D\Phi_{p(t)}.
\]
Then
\begin{equation}
\label{eq:relative_energy_identity}
\mathcal E_\nabla[\gamma]-\mathcal E_{\mathrm{met}}[\gamma]
= \int_0^1 \|v(t)\|_G^2\,dt
\ \ge\ \Big\|\int_0^1 v(t)\,dt\Big\|_G^2
= \big\|\Delta u_{\nabla}-\Delta u_{\mathrm{met}}\big\|_G^2.
\end{equation}
If, in addition, fibres are Abelian and the loop is sufficiently small, there is a constant 
$K$ (depending on a neighborhood of $\gamma$) such that
\begin{equation}
\label{eq:relative_area_bound}
\big\|\Delta u_{\nabla}-\Delta u_{\mathrm{met}}\big\|_G \ \le\ 
\|F_\nabla - F_{\mathrm{met}}\|_{L^\infty}\,\mathrm{Area}(\gamma),
\end{equation}
where $F_\nabla$ and $F_{\mathrm{met}}$ are the curvature $2$‑forms of the two connections in the chosen trivialisation.
\end{proposition*}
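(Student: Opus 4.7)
The plan is to establish the equality portion of \eqref{eq:relative_energy_identity} by a pointwise Pythagorean identity, obtain the inequality via Cauchy--Schwarz on the time interval, and then derive \eqref{eq:relative_area_bound} by expressing the holonomy difference as a line integral of $\omega_\nabla-\omega_{\mathrm{met}}$ and invoking Stokes' theorem. First, since $D\Phi_{p(t)}(L_\nabla(\dot c(t)))=\dot c(t)=D\Phi_{p(t)}(L_{\mathrm{met}}(\dot c(t)))$, the difference $v(t)$ lies in $\ker D\Phi_{p(t)}$. By the metric‑lift characterisation (Proposition~\ref{prop:metriclift} and Remark~\ref{rem:pythagorean}), $L_{\mathrm{met}}(\dot c(t))$ is $G$‑orthogonal to every vertical vector at $p(t)$, so pointwise $\|L_\nabla(\dot c(t))\|_G^2=\|L_{\mathrm{met}}(\dot c(t))\|_G^2+\|v(t)\|_G^2$. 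Integration in $t$ over $[0,1]$ yields the equality in \eqref{eq:relative_energy_identity}; note that no smallness or Abelian hypothesis has been used yet.

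Next, to obtain the inequality in \eqref{eq:relative_energy_identity} and to identify $\int_0^1 v(t)\,dt$ with $\Delta u_\nabla-\Delta u_{\mathrm{met}}$, I would fix a local trivialisation $\Phi^{-1}(U)\cong U\times F$ covering $\gamma$, which exists because by hypothesis the loop is contained in a trivialising chart. The Abelian hypothesis (Assumption~\ref{ass:standing}(A1)) lets me identify all vertical spaces $\ker D\Phi_{p(t)}$ with a fixed Lie algebra $\mathfrak g$ and add vertical vectors coherently across $t$. Each horizontal lift then has fibre coordinate $f_\bullet(t)$ and, using the closure of $c$ and a common starting point, one obtains $\Delta u_\nabla-\Delta u_{\mathrm{met}} = f_\nabla(1)-f_{\mathrm{met}}(1) = \int_0^1 v(t)\,dt$. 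The inequality
\[
\Big\|\int_0^1 v(t)\,dt\Big\|_G^2\;\le\; \int_0^1\|v(t)\|_G^2\,dt
\]
is then a direct application of Cauchy--Schwarz in $L^2([0,1];\mathfrak g)$. This coherent integration of vertical vectors is the main technical obstacle: for non‑Abelian structure groups the horizontal lift is governed by a path‑ordered exponential and $\int v\,dt$ is no longer the holonomy difference, which is precisely why (A1) is indispensable here.

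Finally, for the area bound \eqref{eq:relative_area_bound}, I would express the holonomy of each connection in the trivialisation through its $\mathfrak g$‑valued connection form, so that $f_\bullet(1)-f_\bullet(0)=-\oint_\gamma\omega_\bullet$, and apply Stokes' theorem on a smooth filling surface $S_\gamma$ with $\partial S_\gamma=\gamma$, which exists because the loop is sufficiently small to be contractible within the chart:
\[
\Delta u_\nabla-\Delta u_{\mathrm{met}} \;=\; -\oint_\gamma(\omega_\nabla-\omega_{\mathrm{met}}) \;=\; -\int_{S_\gamma} d(\omega_\nabla-\omega_{\mathrm{met}}) \;=\; -\int_{S_\gamma}(F_\nabla-F_{\mathrm{met}}),
\]
using $F=d\omega$ in the Abelian case. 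Estimating the surface integral by $\|F_\nabla-F_{\mathrm{met}}\|_{L^\infty}$ times $\mathrm{Area}(\gamma)$, with the sup‑norm taken over a uniform neighbourhood of $\gamma$, yields \eqref{eq:relative_area_bound}.
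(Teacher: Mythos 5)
Your proposal is correct and follows essentially the same route as the paper's proof: a pointwise Pythagorean identity from the metric-lift orthogonality (Remark~\ref{rem:pythagorean} / Lemma~\ref{lem:pythagorean_full}), Cauchy--Schwarz in $L^2([0,1])$ for the inequality, and Stokes' theorem applied to $\omega_\nabla-\omega_{\mathrm{met}}$ for the area bound. Your added care in explaining why the Abelian hypothesis is needed to identify $\int_0^1 v(t)\,dt$ with the holonomy difference is a welcome refinement of a step the paper passes over quickly, but it is not a different argument.
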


\begin{proof}
By Lemma~\ref{lem:pythagorean_full} applied pointwise with $\Delta c=\dot c(t)$ and $\Delta u=L_\nabla(\dot c(t))$,
\[
\|L_\nabla(\dot c)\|_G^2=\|L_{\mathrm{met}}(\dot c)\|_G^2+\|v\|_G^2,
\]
which integrates to the identity in \eqref{eq:relative_energy_identity}. The inequality in \eqref{eq:relative_energy_identity} is Cauchy–Schwarz in $L^2([0,1];V)$:
\(
\int_0^1\|v\|_G^2\,dt\ge \big\|\int_0^1 v\,dt\big\|_G^2.
\)
For \eqref{eq:relative_area_bound}, assume Abelian fibres and work in a local trivialisation with connection one‑forms $\omega_\nabla,\omega_{\mathrm{met}}$. Let $\Delta\omega:=\omega_\nabla-\omega_{\mathrm{met}}$ and $\Delta F:=d\Delta\omega=F_\nabla-F_{\mathrm{met}}$. Stokes' theorem gives
\[
\Delta u_{\nabla}-\Delta u_{\mathrm{met}}
=\oint_\gamma\Delta\omega
=\iint_{S_\gamma}\Delta F,
\]
whence the stated bound with the $L^\infty$ norm of $\Delta F$ over a neighborhood containing $S_\gamma$.
\end{proof}

\begin{remark}[On absolute holonomy and baselines]
If the metric connection itself has nonzero curvature, then \(\Delta u_{\mathrm{met}}\) can be \(O(\mathrm{Area})\), i.e.\ geometric holonomy may occur \emph{at baseline cost}. The proposition therefore quantifies the excess energy and holonomy \emph{relative} to the metric lift. If, in a given model, the metric connection is flat near the loop, then the same formulas control the absolute holonomy.
\end{remark}


\subsection{Block Pythagorean decomposition for rTAS}

\begin{lemma}[Block weighted pseudoinverse and orthogonality]
\label{lem:block_pythagoras}
Let $A\in\mathbb R^{n\times (p+q)}$ have full row rank, let $W=\operatorname{diag}(G,\lambda H)$ be a positive‑definite block weight with $G\in\mathbb R^{p\times p}$, $H\in\mathbb R^{q\times q}$, $\lambda>0$, and let $b\in\mathbb R^n$. The $W$‑least‑norm solution to $A y=b$ is
\[
y^\star=W^{-1}A^\top\big(AW^{-1}A^\top\big)^{-1}b.
\]
For any feasible $y$ with $Ay=b$ we have the orthogonal decomposition
\[
y=y^\star+z,\qquad z\in\ker A,\qquad \langle y^\star,z\rangle_W=0,
\]
and the $W$‑Pythagorean identity
\[
\|y\|_W^2=\|y^\star\|_W^2+\|z\|_W^2.
\]
\end{lemma}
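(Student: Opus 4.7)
The plan is to recognise that Lemma~\ref{lem:block_pythagoras} is precisely the weighted-least-norm identity underlying Proposition~\ref{prop:metriclift} and Lemma~\ref{lem:pythagorean_full}, now applied with the block weight $W=\operatorname{diag}(G,\lambda H)$. The block structure enters only through positive definiteness of $W$; the algebra is dimension-free, so I would proceed exactly as in the non-block case and never exploit the diagonal form beyond noting that $W^{-1}=\operatorname{diag}(G^{-1},\lambda^{-1}H^{-1})$ is symmetric positive definite.

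The first step is to certify that $M:=AW^{-1}A^\top$ is invertible. Since $W^{-1}$ is symmetric positive definite and $A$ has full row rank, $A^\top y\ne 0$ for every $y\in\mathbb R^n\setminus\{0\}$, whence
\[
y^\top M y \;=\; (A^\top y)^\top W^{-1}(A^\top y)\;>\;0,
\]
so $M$ is symmetric positive definite — the block analogue of the Positive Definiteness lemma stated after Proposition~\ref{prop:metriclift}. Feasibility of $y^\star$ is then immediate: $A y^\star = A W^{-1}A^\top M^{-1} b = M M^{-1}b = b$. For any other feasible $y$, set $z:=y-y^\star$; then $Az = b-b = 0$, so $z\in\ker A$, and the decomposition $y=y^\star+z$ is manifestly unique.

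The core computation is $W$-orthogonality, which I would establish by a single line:
\[
\langle y^\star,z\rangle_W \;=\; (y^\star)^\top W z \;=\; b^\top M^{-1} A W^{-1} W z \;=\; b^\top M^{-1} A z \;=\; 0,
\]
using symmetry of $M^{-1}$ and $Az=0$. Polarising $\|y\|_W^2=\|y^\star+z\|_W^2$ and cancelling the cross term yields the Pythagorean identity $\|y\|_W^2=\|y^\star\|_W^2+\|z\|_W^2$; the least-norm property is then the immediate corollary $\|y\|_W^2\ge\|y^\star\|_W^2$, with equality iff $z=0$. I do not anticipate a genuine obstacle here: the only conceptual subtlety is that the block weight can distort geometric intuition about orthogonality, but since every inner product is taken in $W$ the argument is formally identical to the scalar-weight case already handled in Lemma~\ref{lem:pythagorean_full}.
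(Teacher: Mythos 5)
Your proposal is correct and follows essentially the same route as the paper: the key step in both is that $Wy^\star$ lies in the range of $A^\top$, hence is $W$-orthogonal to $\ker A$, with the Pythagorean identity and least-norm property following immediately. The only cosmetic difference is that you verify the closed-form $y^\star$ directly (checking invertibility of $AW^{-1}A^\top$ and feasibility) and deduce optimality from the norm identity, whereas the paper derives $y^\star$ via Lagrange multipliers first; the underlying computation is identical.
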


\begin{proof}
Lagrange multipliers yield the stated $y^\star$. Then $W y^\star=A^\top\mu$ for some $\mu$, hence
\(
\langle y^\star,z\rangle_W=(A^\top\mu)^\top z=\mu^\top(Az)=0
\)
for $z\in\ker A$. The norm identity follows.
\end{proof}

Applied pointwise to $A=[\,D\Phi_m\ \ \partial_m\Phi_m\,]$ and $W=\widehat G=\operatorname{diag}(G,\lambda H)$, Lemma~\ref{lem:block_pythagoras} gives the instantaneous orthogonal split used below.


\subsection{Proof of Theorem~\ref{thm:extended_tradeoff} (Extended cost–memory duality)}

\begin{theorem*}[Extended cost–memory duality, restated (local, small‑loop)]
Consider a reflective TAS with block metric $\widehat{G}=G\oplus\lambda H$ and reflective constraint
$D\Phi_m(p)\,\dot u+\partial_m\Phi_m(p)\,\dot m=\dot c$. For a closed $C^1$ loop $\gamma:[0,1]\to C$ contained in a trivialisation, let $\widehat\gamma(t)=(u(t),m(t))$ be any reflective lift and let $(\dot u^\star,\dot m^\star)$ denote the instantaneous reflective metric lift \eqref{eq:block_lift_formula}. Then:
\begin{enumerate}[label=(\roman*)]
    \item The reflective metric lift minimises $\widehat{\mathcal E}[\widehat\gamma]=\int_0^1(\|\dot u\|_G^2+\lambda\|\dot m\|_H^2)\,dt$ among all admissible reflective lifts.
    \item (Small‑loop quadratic law, relative version) Let $\Delta u_{\text{phys}}$ and $\Delta m_{\text{model}}$ be the total holonomies of $\widehat\gamma$, and $(\Delta u^\star_{\text{phys}},\Delta m^\star_{\text{model}})$ those of the reflective metric lift. There exists a positive‑definite quadratic form $Q$ (depending smoothly on the basepoint and $\widehat G$) such that
    \[
    \widehat{\mathcal{E}}[\widehat{\gamma}] - \widehat{\mathcal{E}}_{\min}[\gamma] \ \geq\ 
    \begin{bmatrix} \Delta u_{\text{phys}} - \Delta u^\star_{\text{phys}} \\[2pt] \Delta m_{\text{model}} - \Delta m^\star_{\text{model}} \end{bmatrix}^{\!\top}
    Q\,
    \begin{bmatrix} \Delta u_{\text{phys}} - \Delta u^\star_{\text{phys}} \\[2pt] \Delta m_{\text{model}} - \Delta m^\star_{\text{model}} \end{bmatrix}
    \ +\ o\!\big(\mathrm{Area}(\gamma)^2\big).
    \]
    In particular, the excess reflective energy is quadratic in the combined (physical, model) holonomy, up to higher‑order terms in loop area.
\end{enumerate}
\end{theorem*}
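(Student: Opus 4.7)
The plan is to derive part (i) from the block Pythagorean identity (Lemma~\ref{lem:block_pythagoras}) applied pointwise in time, and then bootstrap this into part (ii) via Jensen's inequality in $L^2$, which converts the integrated residual energy into a lower bound on the squared closure gap. For part (i), fix $\gamma$ and any admissible reflective lift $\widehat\gamma(t)=(u(t),m(t))$. At each $t$ the constraint $A_{(p,m)}(\dot u,\dot m)^{\top}=\dot c$ places $(\dot u,\dot m)$ in an affine subspace, and Lemma~\ref{lem:block_pythagoras} with block weight $\widehat G=G\oplus\lambda H$ yields the pointwise orthogonal decomposition $(\dot u,\dot m)=(\dot u^\star,\dot m^\star)+z$ with $z\in\ker A_{(p,m)}$ and
\[
\|\dot u\|_G^2+\lambda\|\dot m\|_H^2=\|\dot u^\star\|_G^2+\lambda\|\dot m^\star\|_H^2+\|z\|_{\widehat G}^2.
\]
Integrating over $[0,1]$ gives $\widehat{\mathcal E}[\widehat\gamma]\ge\widehat{\mathcal E}_{\min}[\gamma]$ with equality iff $z\equiv 0$, which establishes (i).

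For part (ii), set $z(t):=(\dot u(t)-\dot u^\star(t),\dot m(t)-\dot m^\star(t))\in\ker A_{(p(t),m(t))}$, so that from (i)
\[
\widehat{\mathcal E}[\widehat\gamma]-\widehat{\mathcal E}_{\min}[\gamma]=\int_0^1\|z(t)\|_{\widehat G(t)}^2\,dt.
\]
I would then work in a local trivialisation containing $\gamma$, identify all tangent spaces with a fixed ambient vector space, and set $\widehat G_0:=\widehat G(p_0,m_0)$. By smoothness $\widehat G(t)=\widehat G_0+O(\mathrm{diam}(\gamma))$, so $\|z(t)\|_{\widehat G(t)}^2\ge\|z(t)\|_{\widehat G_0}^2(1-O(\mathrm{diam}))$. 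Jensen's inequality in $L^2([0,1])$ for the fixed inner product $\widehat G_0$ then gives
\[
\int_0^1\|z(t)\|_{\widehat G_0}^2\,dt\ \ge\ \Big\|\int_0^1 z(t)\,dt\Big\|_{\widehat G_0}^2.
\]
Since both lifts share the initial point, the integrated residual equals the block closure gap $(\Delta u_{\text{phys}}-\Delta u^\star_{\text{phys}},\,\Delta m_{\text{model}}-\Delta m^\star_{\text{model}})$ after the trivialisation. Choosing $Q:=\widehat G_0$ restricted to the block fibre direction (positive‑definite by the rank hypothesis in Definition~\ref{def:rTAS}) yields the stated quadratic form.

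The main obstacle is the remainder estimate. The Abelian Stokes argument of Proposition~\ref{prop:smallloop}, applied in block form on the extended bundle $\widehat\Phi$, shows that the block closure gap is $O(\mathrm{Area}(\gamma))$ in leading order. The multiplicative $O(\mathrm{diam}(\gamma))$ error from freezing the metric then contributes $O(\mathrm{diam})\cdot O(\mathrm{Area}^2)=o(\mathrm{Area}^2)$ (using $\mathrm{diam}=O(\sqrt{\mathrm{Area}})$ for small loops), which can be absorbed into the remainder. The delicate point is verifying that the instantaneous reflective split $(\dot u^\star,\dot m^\star)$ of \eqref{eq:block_lift_formula} truly arises as parallel transport for some block Ehresmann connection on $\widehat\Phi$, so that the starred trajectory is itself a horizontal lift and the Stokes step of Proposition~\ref{prop:smallloop} applies directly to its closure gap. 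Under the Abelian standing assumption (A1) this is standard, but checking the smooth selection of the horizontal distribution induced by $(p,m)\mapsto A_{(p,m)}$ across the trivialisation is where essentially all the technical bookkeeping of the proof is concentrated.
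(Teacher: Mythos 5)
Your proposal is correct and follows essentially the same route as the paper's own proof: part (i) via the pointwise block Pythagorean identity of Lemma~\ref{lem:block_pythagoras} integrated over the loop, and part (ii) via the $L^2$ Cauchy--Schwarz/Jensen step $\int_0^1\|z\|^2\,dt\ge\|\int_0^1 z\,dt\|^2$ applied to the kernel-valued residual, with $Q$ taken to be the basepoint block metric and the metric-freezing and Abelian Stokes errors absorbed into the $o(\mathrm{Area}^2)$ remainder. The extra care you take in flagging the $\mathrm{diam}=O(\sqrt{\mathrm{Area}})$ bookkeeping and the existence of the induced block connection is consistent with, and slightly more explicit than, what the paper writes.
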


\begin{proof}
(i) Optimality follows from Lemma~\ref{lem:block_pythagoras} applied pointwise and integrating: for any admissible $(\dot u,\dot m)$,
\[
\|\dot u\|_G^2+\lambda\|\dot m\|_H^2
=\|\dot u^\star\|_G^2+\lambda\|\dot m^\star\|_H^2+\|\dot u-\dot u^\star\|_G^2+\lambda\|\dot m-\dot m^\star\|_H^2,
\]
with equality iff $(\dot u,\dot m)=(\dot u^\star,\dot m^\star)$ a.e.

(ii) Set $v_u:=\dot u-\dot u^\star$ and $v_m:=\dot m-\dot m^\star$. By Lemma~\ref{lem:block_pythagoras}, $[v_u;v_m]\in\ker A$ pointwise, hence
\[
\widehat{\mathcal E}[\widehat\gamma]-\widehat{\mathcal E}_{\min}[\gamma]
=\int_0^1\!\big(\|v_u\|_G^2+\lambda\|v_m\|_H^2\big)\,dt
\ \ge\ 
\Big\|\int_0^1 v_u\,dt\Big\|_G^2+\lambda\Big\|\int_0^1 v_m\,dt\Big\|_H^2,
\]
by Cauchy–Schwarz in $L^2([0,1])$ and $T=1$. The right‑hand side is the squared $\widehat G$–norm of the \emph{holonomy difference} vector
\[
\begin{bmatrix}\Delta u_{\text{phys}}-\Delta u^\star_{\text{phys}}\\ \Delta m_{\text{model}}-\Delta m^\star_{\text{model}}\end{bmatrix}.
\]
For small loops in a trivialisation (Abelian case), each holonomy component is given by a surface integral of the corresponding curvature block (and cross‑curvature) plus $o(\mathrm{Area})$ terms; thus the holonomy difference is $O(\mathrm{Area})$, and the quadratic form
\(
Q=\widehat G/T + O(\mathrm{Area})
\)
(transported to the basepoint) is positive definite for sufficiently small loops. This yields the claim with the stated $o(\mathrm{Area}^2)$ remainder.
\end{proof}

\begin{remark}[Absolute versus relative form]
If the reflective metric connection yields zero holonomy to first order (e.g.\ it is flat in the chosen channel near the loop), then the right‑hand side simplifies to a quadratic lower bound in the \emph{absolute} holonomy vector $\big(\Delta u_{\text{phys}},\Delta m_{\text{model}}\big)$.
\end{remark}


\subsection{Supporting Lemmas}

\begin{lemma}[Invertibility of the moment matrix]
\label{lem:moment_invertible}
Under the conditions of Proposition~\ref{prop:metriclift}, the matrix $M = D\Phi_p G^{-1} D\Phi_p^\top$ is symmetric positive definite and hence invertible.
\end{lemma}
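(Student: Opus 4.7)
The plan is to reduce the statement to two elementary facts about the constituent matrices: that $G^{-1}$ is symmetric positive definite (because $G$ is a Riemannian metric) and that $D\Phi_p$ has full row rank $n=\dim C$ (because $\Phi$ is a submersion of constant rank, which is the standing hypothesis of Proposition~\ref{prop:metriclift}). Once these two inputs are in hand, the conclusion follows by a short congruence argument.

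First I would check symmetry by direct transposition: $M^\top = (D\Phi_p G^{-1}D\Phi_p^\top)^\top = D\Phi_p (G^{-1})^\top D\Phi_p^\top = D\Phi_p G^{-1}D\Phi_p^\top = M$, using that $G$, and hence $G^{-1}$, is symmetric. Next, for any $y\in\mathbb R^n$ I would compute the quadratic form $y^\top M y = (D\Phi_p^\top y)^\top G^{-1}(D\Phi_p^\top y)$ and note that this is non‑negative because $G^{-1}$ is positive definite. To upgrade non‑negativity to strict positivity, I would argue that if $y\neq 0$ then $D\Phi_p^\top y\neq 0$: the rank hypothesis says $D\Phi_p$ is surjective, equivalently $D\Phi_p^\top$ is injective, so the kernel of $D\Phi_p^\top$ is trivial. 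Combining these two observations gives $y^\top M y>0$ for all $y\neq 0$, so $M$ is symmetric positive definite and in particular invertible.

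This is essentially the congruence principle that $B^\top A B$ is positive definite whenever $A$ is positive definite and $B$ has trivial kernel, applied with $A=G^{-1}$ and $B=D\Phi_p^\top$; the proof is identical in content to the positive‑definiteness lemma stated immediately after Proposition~\ref{prop:metriclift}. I do not anticipate any real obstacle here — the only point that requires care is to cite the correct hypothesis (full row rank of $D\Phi_p$ at the chosen point, supplied by the constant‑rank submersion assumption of Definition~\ref{def:TAS}) rather than an incidental dimensional count, since the whole argument breaks down without injectivity of $D\Phi_p^\top$.
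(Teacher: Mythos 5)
Your argument is correct and matches the paper's proof essentially verbatim: symmetry by transposition of the symmetric $G^{-1}$, and positive definiteness via $y^\top M y = \|D\Phi_p^\top y\|_{G^{-1}}^2 > 0$ for $y\neq 0$, using full row rank of $D\Phi_p$ to get injectivity of $D\Phi_p^\top$. Nothing further is needed.
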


\begin{proof}
Symmetry is clear. For positive definiteness, let $y \in \mathbb{R}^n\setminus\{0\}$. Then
\[
y^\top M y = y^\top D\Phi_p G^{-1} D\Phi_p^\top y = \|D\Phi_p^\top y\|_{G^{-1}}^2>0,
\]
because $D\Phi_p$ has full row rank (hence $D\Phi_p^\top y\neq 0$) and $G^{-1}>0$.
\end{proof}

\begin{lemma}[Continuity of (Abelian) holonomy]
\label{lem:holonomy_continuous}
For a smooth family of loops $\gamma_s$ varying continuously in the $C^1$ topology, the (Abelian) holonomy $\Delta u_{\text{vert}}(s)=\iint_{S_{\gamma_s}}F$ varies continuously in $s$.
\end{lemma}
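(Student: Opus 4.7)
The plan is to reduce the surface integral to a line integral via Stokes' theorem, so that continuity follows from uniform convergence of a smooth integrand under $C^1$ variation of the parametrisation. All statements are local, consistent with the conventions of Appendix~A.

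First, I would fix a trivialising neighbourhood containing $\gamma_s$ for $s$ near a base value $s_0$; by the $C^1$ hypothesis, this can be arranged uniformly on a small interval of parameters. In this chart, write the Abelian connection one‑form $\omega$ with $F=d\omega$. By Stokes' theorem,
\[
\Delta u_{\mathrm{vert}}(s)\ =\ \iint_{S_{\gamma_s}}\!F\ =\ \oint_{\gamma_s}\omega,
\]
which has the virtue of depending only on $\gamma_s$ itself, not on a chosen spanning surface (this sidesteps the question of selecting $S_{\gamma_s}$ continuously in $s$).

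Next, I would parametrise $\gamma_s:[0,1]\!\to\!C$ so that the $C^1$‑continuity hypothesis means $s\mapsto \gamma_s$ is continuous into $C^1([0,1];C)$, i.e.\ both $\gamma_s(t)$ and $\dot\gamma_s(t)$ converge uniformly in $t$ as $s\to s_0$. Writing the line integral as
\[
\oint_{\gamma_s}\omega\ =\ \int_0^1\! \omega_{\gamma_s(t)}\!\bigl(\dot\gamma_s(t)\bigr)\,dt,
\]
the integrand is the composition of the smooth map $(p,v)\mapsto\omega_p(v)$ with the pair $(\gamma_s(t),\dot\gamma_s(t))$, which (on the compact trivialising set) converges uniformly in $t$ as $s\to s_0$. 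Uniform convergence of a continuous integrand on $[0,1]$ yields continuity of the integral in $s$, which is the claim.

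The main obstacle is really bookkeeping: ensuring the trivialisation is valid uniformly in $s$ near $s_0$ (so $\omega$ is a single globally defined one‑form on the region swept by the loops) and that the parametrisation chosen is smooth and varies $C^1$ in $s$. Both are guaranteed by compactness of $[0,1]$ and the smooth family hypothesis. For a genuinely global statement (loops not contained in one chart) one would instead patch via a partition of unity, but this is outside the local regime in which the Abelian formula $\iint_{S_\gamma}F$ is used in the main text.
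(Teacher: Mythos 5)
Your proof is correct, but it takes a different route from the paper's. The paper integrates the curvature two\-form $F$ directly over the spanning surfaces $S_{\gamma_s}$ and invokes dominated convergence; your proof first applies Stokes' theorem to rewrite the holonomy as the line integral $\oint_{\gamma_s}\omega$ and then uses uniform convergence of $t\mapsto\omega_{\gamma_s(t)}(\dot\gamma_s(t))$ under the $C^1$ hypothesis. Your reduction has a genuine advantage: it makes the quantity manifestly depend only on the loop itself, so you never need to choose the surfaces $S_{\gamma_s}$ in a way that varies continuously with $s$ — a point the paper's one\-line dominated\-convergence argument quietly assumes. It also explains why the $C^1$ (rather than merely $C^0$) topology is the natural hypothesis, since $\dot\gamma_s$ enters the integrand explicitly. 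The paper's version is shorter and works directly with the object as stated ($\iint_{S_{\gamma_s}}F$), but your argument is the more careful of the two in the local Abelian setting both proofs restrict to.
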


\begin{proof}
In a trivialisation, holonomy is the surface integral of the smooth curvature $2$‑form $F$ over $S_{\gamma_s}$. Continuous dependence on $s$ follows from dominated convergence.
\end{proof}

\bibliographystyle{unsrtnat}
\bibliography{tas_references}  

\end{document}